\def\@settitle{\begin{center}%
		\baselineskip14\p@\relax
		\normalfont\LARGE\scshape\bfseries
		\@title
	\end{center}%
}
\def\@setauthors{%
  \begingroup
  \def\thanks{\protect\thanks@warning}%
  \trivlist
  \centering\footnotesize \@topsep30\p@\relax
  \advance\@topsep by -\baselineskip
  \item\relax
  \author@andify\authors
  \def\\{\protect\linebreak}%
  \authors%
  \ifx\@empty\contribs
  \else
    ,\penalty-3 \space \@setcontribs
    \@closetoccontribs
  \fi
  \endtrivlist
  \endgroup
}
\def\subsection{\@startsection{subsection}{2}%
	\z@{.5\linespacing\@plus.7\linespacing}{.5\linespacing}%
	{\normalfont\large\bfseries}}
\def\subsubsection{\@startsection{subsubsection}{3}%
	\z@{.5\linespacing\@plus.7\linespacing}{.5\linespacing}%
	{\normalfont\itshape}}
\definecolor{darkblue}{rgb}{0.0, 0.0, 0.45}
\date{\today}
\newtheorem{Thm}{Theorem}[section]
\newtheorem{Prop}[Thm]{Proposition}
\newtheorem{Lem}[Thm]{Lemma}
\newtheorem{As}[Thm]{Assumption}
\newtheorem{Rem}[Thm]{Remark}
\newcommand{\BDriT}{\tilde{\boldsymbol{r}}_{i,T}}
\newcommand{\BDfiT}{\boldsymbol{f}_{I_i,T-1}}
\newcommand{\BDViT}{\boldsymbol{\mathcal{V}}_{i,T-1}}
\newcommand{\BDWiT}{\boldsymbol{\mathcal{\varpi}}_{i,T-1}}
\begin{document}

\title[]{Distributed Multiple Fault Detection and Estimation in DC Microgrids with Unknown Power Loads}
\author{Jingwei Dong, Mahdieh S. Sadabadi, Per Mattsson,  Andr\'{e} Teixeira}
\thanks{The work was supported by the Swedish Research Council under the grant 2021-06316, the Swedish Foundation for Strategic
Research, and the Knut and Alice Wallenberg Foundation.}
\thanks{Jingwei Dong (jingwei.dong@it.uu.se), Per Mattsson, and Andr\'{e} Teixeira are with the Division of Systems and Control, Uppsala University, Sweden. 
Mahdieh S. Sadabadi is with the Department of Electrical and Electronic Engineering, The University of Manchester, Manchester, U.K. }
\maketitle

\begin{abstract}
    This paper proposes a distributed diagnosis scheme to detect and estimate actuator and power line faults in DC microgrids (e.g., electric-vehicle charging microgrids) subject to unknown power loads and stochastic noise.
    To address actuator faults, we develop an optimization-based filter design approach within the differential-algebraic equation (DAE) framework, which achieves fault estimation, decoupling from power line faults, and robustness against noise.
    In contrast, the estimation of power line faults poses greater challenges due to the inherent coupling between fault currents and unknown power loads, especially under insufficient system excitation, where their effects become difficult to distinguish from measurements. 
    To the best of our knowledge, this is the first study to address this critical yet underexplored issue. 
    Our solution introduces a novel differentiate-before-estimate strategy. 
    A set of diagnosis rules based on the temporal characteristics (i.e., duration of threshold violation) of a constructed residual is developed to distinguish step load changes from line faults.
    Once a power line fault is detected, a regularized least-squares (LS) method is activated to estimate the fault currents, for which we further derive an upper bound on the estimation error.
    Finally, comprehensive simulations validate the effectiveness of the proposed scheme in terms of estimation accuracy and robustness against disturbances and noise under different fault scenarios.
\end{abstract}

\section{Introduction}
Microgrids have emerged as pivotal components of modernized power systems, which play a crucial role in enhancing grid resilience and promoting the integration of renewable energy sources and some controllable loads such as electric vehicle chargers~\cite{mohamad2019investigation}. 
Among them, DC microgrids have gained increasing attention in recent years due to their distinct advantages over AC microgrids, such as higher efficiency, reduced power losses, and greater power transfer capacity~\cite{salehi2019poverty}.
Despite these benefits, the widespread adoption of DC microgrids remains challenging, particularly due to the lack of mature protection schemes to address potential faults in the systems.
The low impedance of DC microgrids makes them vulnerable to faults, which can trigger sudden surges in fault currents and cause severe damage to critical devices. 
If not detected and mitigated promptly, these faults can lead to irreversible damage to the entire system.
As a result, the development of effective fault diagnosis schemes has become a crucial research priority to ensure the reliability and resilience of DC microgrids.

\subsection{Literature Review}
The existing results on fault diagnosis of DC microgrids typically focus on fault detection and isolation.
The most straightforward way is to set detection thresholds based on characteristics of measured currents and voltages, as presented in~\cite{emhemed2016validation}.
While thresholding-based methods are easy to implement, their diagnosis performance is highly sensitive to the selection of thresholds. 
By exploring additional time-domain information from measured signals, methods based on differential currents~\cite{dhar2017fault,meghwani2016non} and traveling wave analysis~\cite{saleh2017ultra} can achieve more reliable diagnosis results compared to simple \mbox{thresholding-based} approaches. 
For instance, the authors in~\cite{meghwani2016non} analyzed the first and second derivatives of fault currents to detect DC cable ground faults in the presence of load and operating mode changes.
However, such derivative-based approaches are sensitive to noise.
In addition to time-domain information, \mbox{frequency-domain} methods, such as fast Fourier transform~\cite{syafi2018real} and wavelet transform~\cite{yao2013characteristic}, are also effective for fault analysis in microgrid systems.
Yet, their reliance on case-specific fault frequency characteristics limits their generalization capability.
In recent years, the availability of large amounts of historical data has also facilitated the adoption of learning-based approaches in DC microgrid fault diagnosis. 
A key challenge, however, is the limited amount of data in fault scenarios. 
We refer interested readers to~\cite{pan2022learning} and~\cite{wang2023data} for further insights.

Different from signal analysis-based and learning-based approaches, model-based fault diagnosis methods utilize more accurate and comprehensive mathematical models of DC microgrids. 
This allows for improved robustness against disturbances and lower reliance on data.
The basic idea of \mbox{model-based} fault diagnosis methods is to construct residual generators by leveraging model information. 
When fed measured signals, outputs of the residual generators (called residuals) can indicate the occurrence of faults. 
Common choices for model-based residual generators include various types of observers, such as Luenberger observers, unknown input observers, and sliding mode observers~\cite{gao2015survey}. 
In addition, residual generation in the framework of \mbox{differential-algebraic} equations (DAEs)~\cite{nyberg2006residual} has also gained attention recently.

Let us first briefly review some~\mbox{observer-based} fault diagnosis methods developed for DC microgrids. 
In~\cite{yao2021unknown}, an unknown input observer was employed to detect arc faults on power lines. 
To enhance robustness against disturbances while maintaining fault sensitivity, the authors in~\cite{wang2020model} utilized the mixed $\mathcal{H}_{\_}/\mathcal{H}_{\infty}$ index in the design of an optimal fault detection observer for DC microgrids.
Fault isolation can be achieved by building a bank of observers, each of which is designed to be sensitive to specific faults.
For instance, the authors in~\cite{wang2022lpv} constructed fault matrices (or signatures) for different types of faults in DC microgrids, and then designed a bank of unknown input observers based on corresponding sensitivity matrices to distinguish fault types.
In~\cite{vafamand2021fusing}, a bank of unscented Kalman filters was developed to estimate system states under faults, whose outputs were then integrated to locate faulty sensors in DC microgrids.
Regarding DAE-based approaches, they offer several advantages over conventional observer-based methods (e.g., Kalman filters, sliding mode observers, and adaptive observers), including: (i) characterizing all possible residual generators for systems described by DAEs, (ii) deriving residual generators of the lowest possible order~\cite{nyberg2006residual}, (iii) facilitating the incorporation of multiple design objectives and constraints, with the corresponding parameters determined within a well-defined optimization framework~\cite{esfahani2015tractable}.
These benefits have also led to the development of anomaly detection methods for microgrid systems within the DAE framework, such as attack detection in~\cite{pan2021dynamic} and ground fault detection in~\cite{dong2024real}.

Note that the aforementioned methods primarily focus on fault detection and isolation, while research on fault estimation for DC microgrids remains limited. 
Fault estimation involves determining the shape and size of faults, which has more stringent requirements than fault detection and isolation. 
The complexity is further exacerbated in DC microgrids with unknown power loads due to the introduction of nonlinearity~\cite{asadi2020fault}.
To address this issue, the authors in~\cite{asadi2020fault} employed a sliding mode observer to estimate actuator and sensor faults in DC microgrids. 
More recently, an adaptive fault estimation method was developed in~\cite{wan2023decentralized} to estimate the system states and faults simultaneously. 
However, both~\cite{asadi2020fault} and~\cite{wan2023decentralized} rely on power line or load current measurements, which require additional sensors that increase system complexity and maintenance costs.
The authors in~\cite{cecilia2021detection} addressed this limitation by proposing a method based on an augmented state observer and an open-loop line current estimator, but the method is confined to sensor attacks.

Based on the above analysis, the problem of multiple fault detection and estimation that simultaneously considers unmeasurable line currents and unknown power loads in DC microgrids remains insufficiently explored. The gap is reflected in two main aspects. 
First, most existing signal analysis-based~\cite{emhemed2016validation,dhar2017fault,meghwani2016non,saleh2017ultra} and model-based methods~\cite{asadi2020fault,wan2023decentralized} are developed under the assumption that power line currents at one or multiple ends are measurable, which requires additional sensing and communication infrastructure and thus increases deployment cost. 
When considering unmeasurable line currents, power line faults are no longer directly observable but must instead be inferred indirectly from the system dynamics.
Second, research on unknown power loads in DC microgrids has mainly focused on system stability analysis and control design, such as those in~\cite{kwasinski2010dynamic,sadabadi2019scalable}. 
In the context of fault detection and estimation, most existing studies simplify the problem by considering only impedance-type loads or assuming measurable load currents, which substantially reduces the difficulty of analysis.     
We further observe that the coupled effects of fault-induced line currents and unknown power loads pose great challenges in distinguishing between them. 
This difficulty is exacerbated under insufficient system excitation (e.g., in the case of incipient faults), where the slow variation of signals leads to an ill-posed estimation problem.  
To the best of our knowledge, this is the first study to address this challenge in the context of DC microgrids.

\subsection{Main Contributions}
This paper proposes a residual-based multiple fault detection and estimation framework for DC microgrids, which takes into account the effects of unknown power loads and stochastic noise.
We employ the DAE framework for residual generation, given its aforementioned advantages.
Furthermore, considering the scalability issue inherent in centralized diagnosis methods, the proposed diagnosis framework is implemented in a distributed architecture.
Specifically, an independent diagnosis component is designed for each distributed generation (DG) unit within the considered DC microgrid, significantly improving its reliability.
The contributions of the paper are summarized as follows:
\begin{itemize}
    \item \textbf{Multiple fault estimation framework.} 
    We develop a multiple fault estimation framework for DC microgrids, which accounts for both actuator and power line faults in the presence of unknown power loads and stochastic noise.  
    This framework is built on fault estimation filters developed within the DAE framework, where the filter design is formulated as a tractable optimization problem~(Proposition~\ref{prop: fa est}).  

    \item \textbf{Differentiation strategy for power line faults and step power load changes.} 
    Building on the practical assumption of piecewise-constant power loads, we propose a novel strategy to distinguish between power line faults and step load changes by exploiting the temporal characteristic differences of their induced residuals, specifically, the duration of threshold violation (Proposition~\ref{prop: diff_P_fl}). This differentiation process is further formalized into a set of diagnosis rules (Equation~\eqref{eq: disgnosis_rules}), facilitating the subsequent faulty line current estimation.
   
    \item \textbf{Faulty line current estimation with an explicit error bound.} 
    To address the ill-posed issue in faulty line current estimation, we formulate a regularized least-squares (LS) optimization method. 
    An analytical solution is further derived to enable real-time estimation (Proposition~\ref{prop: analytical sol}).
    Furthermore, we establish a computable bound on the expectation of the estimation error, explicitly characterizing its dependence on load and fault variations (Theorem~\ref{Thm}).
\end{itemize}

The structure of the remaining parts of the paper is organized as follows. Section~\ref{sec: 2} introduces the dynamics of DC microgrids and the problem statement. 
Section~\ref{sec: 3} presents the design of the actuator fault estimator, while Section~\ref{sec: 4} develops the faulty line current estimation method. The proposed methods are validated through simulations in Section~\ref{sec: 5}. Finally, Section~\ref{sec: 6} concludes the paper. For readability, some technical proofs are relegated to Appendix.

\textbf{Notation.} The sets~$ {\Bbb N}$,~${\Bbb R}~( {\Bbb R}_+)$, and~${\Bbb R}^n$ denote \mbox{non-negative} integers, (positive) reals, and the space of~$n$ dimensional \mbox{real-valued} vectors, respectively.
The space of~$n \times n$ dimensional symmetric matrices is denoted by~$\mathbb{S}^n$.
The identity matrix of size $n$ is denoted by ${\bf I}_n$.
For a random variable~$\chi$, the probability law and the expectation are denoted by~$\textbf{Pr}[\chi]$, and~$\textbf{E}[\chi]$, respectively.
For a vector~$x = [x_1,\dots,x_{n}]^{\top} \in {\Bbb R}^n$, the $2$-norm of~$x$ is~$\|x\|_2 = \sqrt{\sum^n_{i=1} x^2_i}$. 
For a matrix~$A \in {\Bbb R}^{m \times n}$, its transpose and pseudo-inverse are denoted by $A^{\top}$ and $A^{\dag}$, respectively. 
The $2$-norm of $A$ is denoted by~$\|A\|_2$, which represents the largest singular value of $A$, i.e.~$\bar{\sigma}(A)$. 
If~$A$ is a square matrix, its smallest and largest eigenvalues are denoted by~$\underline{\lambda}_A$ and $\bar{\lambda}_A$, respectively.
We use $Q \succ 0 (Q \prec 0)$ to denote a positive (negative) definite matrix~$Q$.
For a discrete-time signal $r(k)$, the stacked data vector of the length $T$ is denoted by~$\boldsymbol{r}_T(k) = [r^{\top}(k-T+1) ~\dots ~r^{\top}(k)]^{\top}$.

\section{Model Description and Problem Statement} \label{sec: 2}
We recall the dynamic model of a Kron-reduced DC microgrid composed of $n$ power-electronics-interfaced DG units, which are connected by $m$ power lines. The index sets of DG units and lines are denoted as $\mathbb{N}_{G}$ and $\mathbb{N}_{L}$, respectively. 
Fig.~\ref{fig: DCMG} illustrates the architecture of a DC microgrid with $3$ DG units and $2$ power lines. 
In the subsequent parts of this section, we will elaborate on the dynamics of each individual component within the microgrid.

\subsection{DG Dynamics}
Each DG unit includes a DC input voltage source $V_{dc,i}$, a DC-DC buck converter, and a resistive-inductive-capacitive filter with the parameters ($R_{t,i}, L_{t,i}, C_{t,i}$).
This configuration is illustrated in the dashed box of Fig.~\ref{fig: DCMG}, which depicts the circuit diagram of DG unit $1$ connected to DG unit $2$ via power line $1$. 
Using the circuit theory, for each $i \in \mathbb{N}_G$, the dynamics of DG unit $i$ are presented by \cite{nahata2020passivity}: 
\begin{equation}
	\left\{
    \begin{array}{l}
		C_{t,i}\dot{V_i}(t)=I_{t,i}(t)-P_{i}(t)/V_i(t)-\sum_{k=1}^m \mathbb{B}_{ik} I_k(t) ,\\
		L_{t,i}\dot{I}_{t,i}(t)=-V_i(t)-R_{t,i} I_{t,i}(t)+u_i(t)+f_{a,i}(t),	
	\end{array}
    \right.	
	\label{DG}
\end{equation}
where $V_i(t)\in {\Bbb R}$ and $I_{t,i}(t)\in {\Bbb R}$ are the voltage at the Point of Common Coupling (PCC)~$i$ and the filter current, respectively.
The unknown power demand is denoted by $P_{i}(t)\in {\Bbb R}_+$. 
The control signal~$u_i(t) = \varphi_i(t) V_{dc,i}$, where $\varphi_i(t)$ is the duty cycle of the buck converter. 
The current of power line~$k$ is represented by $I_k(t)\in {\Bbb R}$ where $k \in \mathbb{N}_{L}$. 
The microgrid topology is encoded in the incidence matrix element $\mathbb{B}_{ik}$, which indicates the connection between power line~$k$ and DG unit $i$. 
In particular, $\mathbb{B}_{ik} = 1$ if DG unit $i$ is the positive end of~line $k$, $\mathbb{B}_{ik} = -1$ if DG unit $i$ is the negative end of~line $k$, and $\mathbb{B}_{ik} = 0$ otherwise.
Therefore, the term $\sum_{k=1}^m \mathbb{B}_{ik} I_k(t)$ is the total current injected into DG unit~$i$ from its neighbors and represents the physical couplings of DG unit~$i$ with neighboring DG units.
The additive actuator fault in DG unit $i$ is denoted by $f_{a,i}(t) \in {\Bbb R}$.

\begin{figure}[t]
    \centering
    \includegraphics[width=0.5\linewidth]{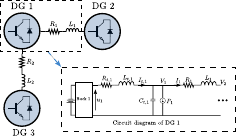}
    \caption{\small Structure of a DC microgrid.}
    \label{fig: DCMG}
\end{figure}

\subsection{Power Line Dynamics}
The DG units are coupled to their neighbors through resistive-inductive power lines. For $k \in \mathbb{N}_L$, the dynamics of the line current~${I}_{k}(t)$ are governed by
\begin{align}\label{line}			
	 \dot{I}_{k}(t)=&-\frac{R_{k}}{L_k} {I}_{k}(t) +\frac{1}{L_k}\sum_{j=1}^n \mathbb{B}_{jk}V_j(t) + f_{L,k}(t),    	
\end{align}
where $R_k$ and $L_{k}$ denote the line resistance and inductance, respectively.  
The fault on line $k$ is denoted by $f_{L,k}(t) \in {\Bbb R}$.

\begin{As}[Unmeasurable line currents and uncertain parameters]
    The line currents are unmeasurable, and the parameters~$R_k$ and~$L_{k}$ can be identified with uncertainty in the \mbox{fault-free} condition.
\end{As}

According to the superposition principle, the line current~$I_k(t)$ consists of a fault-free component $I_{k,h}(t)$, a fault component $I_{k,f}(t)$, and a noise term~$\epsilon_k$ induced by uncertainty, i.e.,~$I_k(t) = I_{k,h}(t) + I_{k,f}(t) + \epsilon_k(t)$.

\begin{Rem}[Applicability beyond additive faults]
    Although the power line faults considered in~\eqref{line} are additive, other types of faults, such as \mbox{short-circuit} faults that change the dynamics of~$I_{k}$, can be represented within this framework as well. 
    A pole-to-ground \mbox{short-circuit} fault will be demonstrated in Section~\ref{sec: 6} to illustrate this applicability.
\end{Rem}

\subsection{Stabilizing Voltage Controller}
For the purpose of voltage control, each DG unit is equipped with a local voltage controller (VC), which is given by:
\begin{equation}\label{eq: control}
    \begin{split}
    \left\{
    \begin{array}{l}
        u_i(t) = k_{1,i}V_i(t)+k_{2,i} I_{t,i}(t)+k_{3,i} v_i(t),\\
        \dot{v}_i(t) =V_i^*-V_i(t),
    \end{array} \right.
    \end{split}
\end{equation}
where $V_i^*$ is a reference voltage provided by a higher-level controller, $K_i =[k_{1,i} \ k_{2,i} \ k_{3,i}]$ is the voltage control gain vector.
Design methods for~$K_i$ can be found in~\cite{sadabadi2019scalable,tucci2017line}.

\subsection{Closed-loop Dynamics}
With~\eqref{DG}-\eqref{eq: control}, the closed-loop dynamics of DG~$i$ for~$i \in \mathbb{N}_{G}$ can be written as: 
\begin{equation}\label{eq: Faulty DG}
\left\{
    \begin{array}{l}
    \dot{x}_i(t)=A_{i} x_i(t)+B_{i} V^*_i+ D_i d_i(t) + E_i f_{a,i}(t) + \delta_i(t),\\
    y_i(t)=C_i x_i(t) + \zeta_{i}(t),
	\end{array}
\right.
\end{equation}
where $x_i=[V_i \ I_{t,i} \ v_i]^T$, $\delta_i$, $y_i$, and $\zeta_i$ denote the state, process noise, output, and measurement noise, respectively. 
The signal~$d_i$ consists of the power load~$P_{i}$ and currents injected from all power lines connected to DG unit $i$, i.e.,
\begin{align*}
    d_i(t)&= P_{i}(t)/ V_i(t)+\sum_{k=1}^m \mathbb{B}_{ik} I_{k}(t) \\
          &= P_{i}(t)/ V_i(t)+ \sum_{k=1}^m(\mathbb{B}_{ik} I_{k,h}(t)+\epsilon_k(t)) + f_{I,{i}}(t),
\end{align*}
where $f_{I,{i}}(t) = \sum_{k=1}^m \mathbb{B}_{ik} I_{k,f}(t)$ denotes the aggregated faulty line current affecting DG unit $i$.
The system matrices in~\eqref{eq: Faulty DG} are as follows:
\begin{equation*}
\begin{split}\label{ss}
    &A_{i}=\left[\begin{array}{ccc}
    0 &\frac{1}{C_{t,i}} & 0\\ 
    \frac{k_{1,i}-1}{L_{t,i}} & \frac{k_{2,i}-R_{t,i}}{L_{t,i}} & \frac{k_{3,i}}{L_{t,i}}\\
    -1 &0&0\end{array}\right], 
    ~B_{i}=\left[\begin{array}{c} 
    0\\ 0\\1\end{array}\right], 
    ~D_{i}=\left[\begin{array}{c} 
    - \frac{1}{C_{t,i}} \\ 0 \\0\end{array}\right], 
    ~E_i = \begin{bmatrix}
        0 \\ \frac{1}{L_{t,i}} \\ 0
    \end{bmatrix}, 
    ~C_i={\bf I}_3.
\end{split}
\end{equation*}
Since the voltage~$V_i$ and the filter current~$I_{t,i}$ are measurable~\cite{tucci2016decentralized}, and~$v_i$ is the controller variable, we have full state measurement.
Let us further introduce the following assumption on noise.
\begin{As}[Stochastic noise]\label{as: uncertainty_noise}
    The noise terms~$\delta_i$, $\epsilon_k$, and~$\zeta_i$ for $i \in \mathbb{N}_G$ and $k \in \mathbb{N}_L$ are mutually uncorrelated zero-mean stochastic processes of unknown distributions. 
    Their covariance matrices denoted by~$\Sigma_{\delta_i}$, $\Sigma_{\epsilon_k}$ and~$\Sigma_{\zeta_i}$ are known. 
\end{As}

The covariance matrices~$\Sigma_{\delta_i}$, $\Sigma_{\epsilon_k}$ and~$\Sigma_{\zeta_i}$ can be computed from available information about the local models, sensors, and possibly on historical data through Monte Carlo-based approaches~\cite{boem2018plug}, and a discussion of their computational approach can be found in~\cite{robert1999monte}.
In addition, we account for unmodeled system uncertainties within the stochastic process noise, as commonly adopted in literature~\cite[Section 3.4]{ding2008model}.

\subsection{Problem Statement.} 
This work aims to develop a distributed method to detect and estimate both the actuator fault~$f_{a,i}$ and the aggregated faulty line current~$f_{I,i}$ in DG unit $i$, while accounting for the unknown power load~$P_{i}$ and the noise terms~$\delta_i$,~$\epsilon_k$, and~$\zeta_i$. 
Estimating $f_{a,i}$ is relatively straightforward, as it is the only unknown signal in the dynamics of the measurable current $I_{t,i}$ in~\eqref{DG} except for noise.
However, it is challenging to even detect $f_{I,i}$ because the line current~$I_k$ is unavailable and~$f_{I,i}$ is coupled with~$P_{i}$ in~$d_i$, as described in~\eqref{eq: Faulty DG}. 
From an algebraic perspective, $f_{I,i}$ and $P_i$ are fundamentally indistinguishable under the following two scenarios:

\begin{enumerate}
    \item \textbf{Arbitrarily varying~$P_{i}$ and $f_{I,i}$.} If~$P_{i}(t)$ and $f_{I,i}(t)$ are unconstrained, different combinations of~$P_{i}(t)$ and $f_{I,i}(t)$ can yield identical effects. 
    For instance, for two different pairs $(P_{i}(t),f_{I,i}(t))$ and $(P^{\prime}_{i}(t),f^{\prime}_{I,i}(t))$, it holds that 
    \begin{align*}
        \frac{P_{i}(t)}{V_i(t)} + f_{I,i}(t) = \frac{P^{\prime}_{i}(t)}{V_i(t)} + f^{\prime}_{I,i}(t),
    \end{align*}
    if $P^{\prime}_{i}(t) =(P_{i}(t)/V_i(t)+f_{I,i}(t)-f^{\prime}_{I,i}(t))V_i(t)$. 
    \item \textbf{Constant~$V_i$.} If $V_i(t)$ is constant, separation of $f_{I,i}$ and~$P_{i}$ is theoretically impossible because $V_i$ provides no variation that can be exploited to distinguish them from the aggregated signal~$P_{i}/V_i + f_{I,i}$.
    This is close to the well-known persistence of excitation phenomena for LTI systems, as discussed in~\cite[Theorem 3.5]{van2022multiple}.
    It is also reflected in the rank condition following~\eqref{eq: parity-space relation with fault} in Subsection~\ref{subsec: line current est}.
\end{enumerate}

Beyond the above algebraic conditions, the indistinguishability between $P_i$ and $f_{I,i}$ also has a physical origin in DC microgrid dynamics. 
As shown in Fig.~\ref{fig: DCMG}, both load variations and line faults act as current-type disturbances injected at the same point, and therefore share identical spatial signatures. 
Moreover, the voltage regulation loop suppresses voltage variations, resulting in insufficient excitation in the measured signals and further weakening the separability of $P_i$ and $f_{I,i}$.
Fortunately, in practical DC microgrids, $V_i$ does vary due to faults and load changes, and power loads do not evolve arbitrarily but typically follow specific patterns.  
Motivated by this observation, we impose the following assumption on~$P_{i}$ throughout the paper.

\begin{As}[Piece-wise constant power load]\label{as: PWC d}
    The power load $P_{i}$ is a piece-wise constant signal that allows multiple step changes over time, with each value persisting for a sufficiently large duration.
\end{As}

Note that assuming constant or piecewise-constant power loads is a standard modeling practice in both control design and fault diagnosis for DC microgrids (e.g.,~\cite{HZW17,sadabadi2019scalable}). This assumption is also practically motivated, as many real-world loads, such as data centers, electric-vehicle charging stations, motor drives, and inverter-based AC loads~\cite{kwasinski2010dynamic,mohamad2019investigation}, can be reasonably characterized as instantaneous constant power loads.
We emphasize that no assumption is imposed on the fault signals.

\begin{figure}[t]
    \centering
    \includegraphics[width=0.5\linewidth]{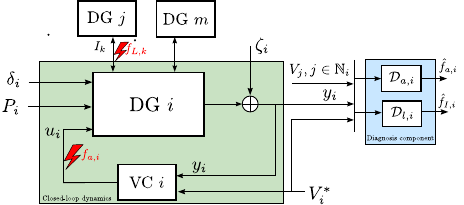}
    \caption{\small Structure of DG $i$ with the diagnosis component: block~$\mathcal{D}_{a,i}$ for the actuator fault~$f_{a,i}$ and block~$\mathcal{D}_{l,i}$ for the aggregate faulty line current~$f_{I,i}$, where $\mathbb{N}_i$ denote the set of neighboring units of DG $i$.}
    \label{fig: Sys}
\end{figure}

We are now in the position to formally present the problems studied in this paper.
To address actuator and power line faults in DC microgrids, we design a diagnosis component for each DG unit, as illustrated in Fig.~\ref{fig: Sys}. 
The diagnosis component consists of two blocks:~$\mathcal{D}_{a,i}$ used to estimate the actuator fault~$f_{a,i}$ and $\mathcal{D}_{l,i}$ used to estimate the aggregate faulty line current~$f_{I,i}$. 
The inputs to the diagnosis component are known signals, including~$V^*_i,~y_i$, and~$V_j$ received from neighboring DG units.
The outputs are the estimates of $f_{a,i}$ and $f_{I,i}$, denoted by~$\hat{f}_{a,i}$ and $\hat{f}_{I,i}$, respectively. 

Our objective is to design $\mathcal{D}_{a,i}$ and $\mathcal{D}_{l,i}$ to achieve \mbox{real-time} estimation of~$f_{a,i}$ and~$f_{I,i}$ in DG unit~$i$. 
Specifically, the design requirements for $\mathcal{D}_{a,i}$ are as follows:
\begin{enumerate}
    \item The effects of influencing factors (e.g., $f_{I,i}$ and~$P_{i}$) on the estimated value~$\hat{f}_{a,i}$ should be decoupled;
    \item The estimation result should be robust to noise signals~$\delta_i$, $\epsilon_k$, and~$\zeta_i$;
    \item Given the stochastic nature of noise, the steady-state estimation error of~$f_{a,i}$ should converge to zero in expectation, i.e., $\lim\limits_{t \rightarrow \infty} | \textbf{E}[\hat{f}_{a,i}(t)-f_{a,i}(t)] | \rightarrow 0$.
\end{enumerate}
The design requirements for $\mathcal{D}_{l,i}$ include:
\begin{enumerate}
    \item The effect of $f_{a,i}$ is decoupled from the estimated value~$\hat{f}_{I,i}$;
    \item The estimation result should be robust to noise signals~$\delta_i$, $\epsilon_k$, and~$\zeta_i$;
    \item Given the possibly ill-posed issue due to the load term~$P_i/V_i$, rather than enforcing convergence of the estimation error to zero, we instead require the estimation error for $f_{I,i}$ to be bounded in expectation, as follows:
    \begin{align*}
    \begin{split}
        \left| \textbf{E}\left[ 
        \hat{f}_{I,i}(t)-\bar{f}_{I,i}(t)
        \right] \right|  
        \leq \mathcal{C}\left(y_i,V^*_i,V_j,P_{i},f_{I,i}\right),
        \end{split}
    \end{align*}
    where $\bar{f}_{I,i}(t)$ represents the average value of $f_{I,i}$ over a period, $\mathcal{C}\left( \cdot \right)$ is a time-varying bound dependent on the system and power line dynamics~\eqref{line}-\eqref{eq: Faulty DG}, input signals to~$\mathcal{D}_{l,i}$, and properties of unknown signals $P_{i}$ and $f_{I,i}$.   
\end{enumerate}

\section{Estimation of Actuator Faults}\label{sec: 3}
In this section, we provide the design method for $\mathcal{D}_{a,i}$, where a fault estimation filter is developed as the core solution.
To facilitate the filter design, the \mbox{state-space} model~\eqref{eq: Faulty DG} is reformulated into the DAE form:
\begin{align}\label{eq: DAE fa}
    H_i(p) X_i +  \mathcal{B}_i Y_i + \mathcal{E}_i f_{a,i} + \omega_i = 0,
\end{align}
where $p$ is the derivative operator, i.e., $\dot{x}(t) = p x(t)$, the augmented variables~$X_i = [x_i^{\top} ~d_i]^{\top}$,~$\omega_i = [\delta^{\top}_i ~\zeta^{\top}_i]^{\top}$, and $Y_i = [y_i^{\top} ~V^*_i]^{\top}$. 
The polynomial matrix $H_i(p)$ is defined as
\begin{align*}
    H_i(p) = pH_{i,1} + H_{i,0} = 
            \begin{bmatrix} -p{\bf I}_3+A_i &D_i \\ 
                             C_i     &{\bf 0}_{3 \times 1}
            \end{bmatrix},
\end{align*}
where 
\begin{align*}
     H_{i,1} = \begin{bmatrix} 
                -{\bf I}_3     &{\bf 0}_{3 \times 1} \\ 
                {\bf 0}_{3 \times 3} &{\bf 0}_{3 \times 1}
             \end{bmatrix}, \quad 
    H_{i,0} = \begin{bmatrix} 
                A_i &D_i \\ 
                C_i &{\bf 0}_{3 \times 1}
            \end{bmatrix}.
\end{align*}
The matrices $\mathcal{B}_i$ and $\mathcal{E}_i$ are given by 
\begin{align*}
    \mathcal{B}_i = \begin{bmatrix}
        {\bf 0}_{3 \times 3}    &B_i\\
        -{\bf I}_3              &{\bf 0}_{3 \times 1}
    \end{bmatrix}, \quad
    \mathcal{E}_i = \begin{bmatrix}
        E_i\\
        {\bf 0}_{3 \times 1}
    \end{bmatrix}.
\end{align*}

Then, we consider the estimation filter for actuator faults in the following form:
\begin{align}\label{eq: filter}
    \hat{f}_{a,i} = -\frac{N_i(p)\mathcal{B}_i}{a(p)} Y_i,
\end{align}
where the polynomial row vector~$N_i(p) = \sum^{d_{N}}_{j=0} p^j N_{i,j}$, $N_{i,j} \in {\Bbb R}^{1 \times 6}$ and $d_{N}$ is the degree of~$N_i(p)$. 
The denominator~$a(p)$ is a polynomial defined as
$a(p) = p^{d_a} + \sum^{d_a-1}_{j=0} p^j a_j$,
where the coefficient $a_j \in {\Bbb R}$ and~$d_a$ denotes the degree of~$a(p)$. 
For simplicity of filter design, $a(p)$ is given with all roots lying in the left-half plane and is identical for all~$\mathcal{D}_{a,i}$. 
Its degree is set as~$d_a = d_{N}+1$ to ensure that the filter is strictly proper.

By multiplying the left-hand side of~\eqref{eq: DAE fa} by~$N_i(p)/a(p)$, $\hat{f}_{a,i}$ can be expressed as
\begin{align}\label{eq: residual}
\begin{split}
    \hat{f}_{a,i} 
    = \frac{N_i(p)H_i(p)}{a(p)}X_i + \frac{N_i(p)\mathcal{E}_i}{a(p)}f_{a,i} + \frac{N_i(p)}{a(p)}\omega_i.
\end{split}
\end{align}
Note that~\eqref{eq: filter} can be used to generate $\hat{f}_{a,i}$, as all elements are measurable or known, while~\eqref{eq: residual} explicitly characterizes the mapping relations from $X_i$, $f_{a,i}$, and $\omega_i$ to~$\hat{f}_{a,i}$, thus providing the foundation for designing the filter parameters.

Recall the first design requirement for $\mathcal{D}_{a,i}$, which stipulates that $P_{i}$ and $f_{I,i}$ have no influence on~$\hat{f}_{a,i}$.  
To achieve this, we introduce the following condition
\begin{subequations}
\begin{align}
    N_i(p)H_i(p) = 0, \label{eq: decouple con}
\end{align}
which ensures that the signal~$X_i$ implicitly containing $P_{i}$ and~$f_{I,i}$ is completely decoupled from $\hat{f}_{a,i}$. 
To address the second design requirement regarding robustness against stochastic noise, we employ the~$\mathcal{H}_2$ norm approach.
For a linear system driven by white noise with zero mean, the~$\mathcal{H}_2$ norm of its transfer function represents the asymptotic variance of the output~\cite{scherer2002multiobjective}.
Therefore, the second design requirement can be realized by constraining the~$\mathcal{H}_2$ norm of $N_i(p)/a(p)$ as follows
\begin{align}
    \left\Vert \frac{N_i(p)}{a(p)} \right\Vert^2_{\mathcal{H}_2} &\leq \gamma_i, \label{eq: noise con} 
\end{align} 
where~$\gamma_i \in {\Bbb R}_+$ is an upper bound. 
To guarantee convergence of the fault estimate, i.e., $\hat{f}_{a,i} \rightarrow f_{a,i}$ in the steady state, we impose a unity steady state gain constraint on the transfer function~$N_i(p)\mathcal{E}_i/a(p)$, which is
\begin{align} 
    \left. \frac{N_i(p)\mathcal{E}_i}{a(p)} \right\vert_{p=0} = 1. \label{eq: estimation con} 
\end{align}
\end{subequations}

The design requirements for $\mathcal{D}_{a,i}$ have been translated into constraints~\eqref{eq: decouple con}-\eqref{eq: estimation con} on the mapping relations.
In the following proposition, we further formulate a tractable optimization problem for solving the parameters of~$N_i(p)$ based on~\eqref{eq: decouple con}-\eqref{eq: estimation con}. 
Before proceeding, the observable canonical form of $N_i(p)/a(p)$, denoted by~$\{A_r,B_{r,i},C_r\}$, is provided to facilitate computation of the~$\mathcal{H}_2$ norm. Here,~$A_r$ is given by the coefficients of~$a(p)$, which we keep the same for all~$i$, while~$B_{r,i}$ is different for different $i$.
Specifically, the matrices are given by  
\begin{align}\label{eq: SS of filter}
\begin{split}
    &A_r = \begin{bmatrix}
        0 &\dots &0 &-a_0\\
        1 &\dots &0 &-a_1\\
        \vdots &\ddots &\vdots &\vdots\\
        0 &\dots &1 &-a_{d_{N}}
    \end{bmatrix},
    ~B_{r,i}=\begin{bmatrix}
        N_{i,0} \\ N_{i,1} \\ \vdots \\ N_{i,d_{N}}
    \end{bmatrix}, 
    ~C_r = \begin{bmatrix}
        0 &\dots &0 &1
    \end{bmatrix}.
\end{split}
\end{align}
The design approach of the actuator fault estimation filter~\eqref{eq: filter} is then presented in the following proposition.

\begin{Prop}[Actuator fault estimation]\label{prop: fa est}
    Consider the closed-loop dynamics of DG unit $i$ (for $i \in \mathbb{N}_G$) in~\eqref{eq: Faulty DG} subject to the actuator fault $f_{a,i}$.
    The design conditions~\eqref{eq: decouple con}-\eqref{eq: estimation con} for the estimation filter structured in~\eqref{eq: filter} can be equivalently formulated into the following linear programming problem:
\begin{subequations}\label{eq: opt fa}
\begin{align}
        \min ~&\gamma_i \notag \\
        \textup{s.t.} ~&N_{i,j} \in {\Bbb R}^{1 \times 6}, j\in \{0,1,\dots,d_{N}\}, Q_i \in {\Bbb S}^{d_N+1},  
        ~\gamma_i \in {\Bbb R}_+,  \notag \\
        &\begin{bmatrix}
        N_{i,0} &N_{i,1} &\dots &N_{i,d_{N}}
    \end{bmatrix}\bar{H_i}=0, \label{eq: opt fa1}\\
        &N_{i,0}\mathcal{E}_i=a(0),\label{eq: opt fa2}\\
        &\begin{bmatrix}
            A_r Q_i+Q_i A_r^{\top} &B_{r,i} \\
            B^{\top}_{r,i}  &-{\bf I}_6
        \end{bmatrix} \prec 0, ~\begin{bmatrix}
            \gamma_i &C_r Q_i \\ Q_i C_r^{\top} &Q_i 
        \end{bmatrix} \succ 0, \label{eq: opt fa3}
\end{align}
\end{subequations}
where $\bar{H_i}$ is given by
\begin{align*}
    \bar{H_i} = \begin{bmatrix}
        H_{i,0} &H_{i,1} &{\bf 0} &\dots &{\bf 0}\\
        {\bf 0} &H_{i,0} &H_{i,1} &\dots &{\bf 0}\\
        \vdots &\ddots  &\ddots &\ddots &\vdots\\
        {\bf 0} &\dots &{\bf 0} &H_{i,0} &H_{i,1}
    \end{bmatrix}.
\end{align*} 
\end{Prop}

\begin{proof}
    The proof is similar to that of Theorem 3.1 from~\cite{dong2023multimode}, and only a brief sketch is provided. 
    First, based on the polynomial multiplication rule, condition~\eqref{eq: opt fa1} enforces the coefficients of the product $N_i(p)H_i(p)$ to be zero, which guarantees that condition~\eqref{eq: decouple con} is satisfied. 
    Second, condition~\eqref{eq: opt fa2} is obtained by setting the operator $p=0$ in condition~\eqref{eq: estimation con}, yielding the steady-state gain of the corresponding transfer function. 
    Finally, the linear matrix inequalities in~\eqref{eq: opt fa3} follow from classical results on the $\mathcal{H}_2$ norm, which ensures the satisfaction of~\eqref{eq: noise con}. This completes the proof.
\end{proof}

It is worth noting that there always exist feasible filters satisfying~\eqref{eq: decouple con}-\eqref{eq: estimation con}.
First, conditions \eqref{eq: decouple con} and \eqref{eq: estimation con} can be satisfied simultaneously because, from the structures of~$D_i$ and $E_i$, the fault direction does not lie in the disturbance subspace, thereby guaranteeing disturbance decoupling and fault sensitivity~\cite[Theorem 6.2]{ding2008model}.
Second, condition~\eqref{eq: noise con} requires the corresponding transfer function to be stable and strictly proper so that its $\mathcal{H}_2$ norm is finite, which can be ensured by an appropriate choice of $a(p)$.
In addition, the feasibility of problem (10) depends on the rank conditions of $\bar{H}_i$ and $\mathcal{E}_i$.
Specifically,~\eqref{eq: opt fa1} admits nontrivial solutions only if~$\bar{H_i}$ is not of full row rank, which can be ensured by selecting $d_N$ such that $6 (d_N+1) > \text{rank}(\bar{H}_i)$. 
Meanwhile, since all roots of $a(p)$ are in the left-half plane, $a(0) \neq 0$ is guaranteed.
Consequently, condition \eqref{eq: opt fa2} requires that $\mathcal{E}_i$ lies outside the column range space of $\bar{H}_i$, i.e., $\text{rank}([\bar{H_i} ~\mathcal{E}_i]) > \text{rank}(\bar{H_i})$. 
Otherwise, any solution satisfying $N_{i,0}\bar{H}_i = 0$ leads to $N_{i,0}\mathcal{E}_i=0$ as well, and~\eqref{eq: opt fa2} can not be satisfied.

\begin{Rem}[Differences with previous work]
    In our previous work~\cite{dong2024real}, disturbances were non-decouplable, and the filter design was formulated as a quadratic programming problem with a fixed denominator. 
    In contrast, this work (i) exploits structural disturbance decouplability via~\eqref{eq: opt fa1}, (ii) incorporates stochastic noise via the $\mathcal{H}_2$-norm method, and (iii) allows joint numerator--denominator design since the parameters of $a(p)$ are embedded in~$A_r$, leading to a bi-linear formulation when the denominator is also treated as a design variable. 
\end{Rem}

\section{Estimation of aggregate faulty line current}\label{sec: 4}
In this section, we present the design method for the diagnosis block $\mathcal{D}_{l,i}$, which estimates the aggregate faulty line current~$f_{I,i}$ induced by faults on power lines connected to DG unit~$i$ under the piecewise-constant power load assumption.
As shown in Fig.~\ref{fig: PF est}, $\mathcal{D}_{l,i}$ consists of four sub-blocks:
\begin{enumerate}
    \item Pre-filter $1$ that decouples the actuator fault $f_{a,i}$ and generates output~$r_{i} \in {\Bbb R}$.
    \item A set of line current estimators to estimate the fault-free part of the total line currents $\sum_{k=1}^m \mathbb{B}_{ik} I_k$ injected into DG unit~$i$, and produces~$\sum_{k=1}^m \mathbb{B}_{ik} \hat{I}_{k,h}$.
    \item Pre-filter~$2$ that produces processed output $\hat{r}_{I,i} \in {\Bbb R}$.
    \item A faulty current estimator that generates the estimated value~$\hat{f}_{I,i}$ using the difference between $r_{i}$ and $\hat{r}_{I,i}$.
\end{enumerate}

In what follows, we provide a detailed elaboration on the functions and design methods for these sub-blocks.

\begin{figure}[t]
    \centering
    \includegraphics[width=0.6\linewidth]{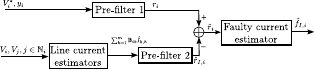}
    \caption{\small Structure of the diagnosis block $\mathcal{D}_{l,i}$.}
    \label{fig: PF est}
\end{figure}

\subsection{Design of Pre-filter $1$}
The design of Pre-filter $1$ is similar to that of the actuator fault estimation filter~\eqref{eq: filter}. 
We first reformulate the \mbox{closed-loop} dynamics~\eqref{eq: Faulty DG} into the DAE form:
\begin{align}\label{eq: DAE fl}
    \mathcal{H}_i(p) \mathcal{X}_i +  \mathcal{B}_i Y_i + \mathcal{G}_i d_i + \omega_i = 0,
\end{align}
where~$\mathcal{X}_i = [x_i^{\top} ~f_{a,i}]^{\top}$.
The polynomial matrix $\mathcal{H}_i(p)$ is obtained by replacing $D_i$ with $E_i$ in $H_i(p)$ from previous DAE~\eqref{eq: DAE fa}. 
The matrix~$\mathcal{G}_i$ is defined as $\mathcal{G}_i = [D^{\top}_i ~{\bf 0}_{3 \times 1}^{\top}]^{\top}$.
The rest terms remain consistent with~\eqref{eq: DAE fa}. 
The structure of \mbox{Pre-filter}~$1$ mirrors~\eqref{eq: filter} and is given by
\begin{align}\label{eq: filter fl}
    r_{i} = -\frac{\mathcal{N}_i(p)\mathcal{B}_i}{a(p)} Y_i,
\end{align}
where $\mathcal{N}_i(p) = \sum^{d_{\mathcal{N}}}_{j=0} p^j \mathcal{N}_{i,j}$ with design parameters~$\mathcal{N}_{i,j} \in {\Bbb R}^{1 \times 6}$ and degree~$d_{\mathcal{N}}$. 
The denominator~$a(p)$ is the same as that in~\eqref{eq: filter}. 
From~\eqref{eq: DAE fl} and~\eqref{eq: filter fl}, $r_{i}$ also equals 
\begin{align*}
    r_{i} = \frac{\mathcal{N}_i(p)\mathcal{H}_i(p)}{a(p)}\mathcal{X}_i + \frac{\mathcal{N}_i(p)\mathcal{G}_i}{a(p)}d_i +\frac{\mathcal{N}_i(p)}{a(p)} \omega_i.
\end{align*}

Recall the first two design requirements for $\mathcal{D}_{l,i}$. 
To decouple~$f_{a,i}$ and suppress the effects of~$\omega_i$, similar conditions as~\eqref{eq: decouple con}-\eqref{eq: estimation con} are employed here to design $\mathcal{N}_i(p)$, which are
\begin{align*}
     &\mathcal{N}_i(p)\mathcal{H}_i(p) = 0,
    ~\left. \frac{\mathcal{N}_i(p)\mathcal{G}_i}{a(p)} \right\vert_{p=0} = 1, ~\left\Vert \frac{\mathcal{N}_i(p)}{a(p)} \right\Vert^2_{\mathcal{H}_2} \leq \tilde{\gamma}_i, 
\end{align*}
where~$\tilde{\gamma}_i \in \mathbb{R}_+$ is an upper bound.
Subsequently, the coefficients of $\mathcal{N}_i(p)$ can be solved using the method in Proposition~\ref{prop: fa est}. 
Since $\mathcal{X}_i$ has been decoupled by the derived Pre-filter $1$, $r_{i}$ becomes
\begin{align}\label{eq: ri}
    r_{i} =  \frac{\mathcal{N}_i(p)\mathcal{G}_i}{a(p)}\left(\sum_{k=1}^m \mathbb{B}_{ik}I_{k} +  \frac{P_{i}}{V_i}\right) +\frac{\mathcal{N}_i(p)}{a(p)} \omega_i,
\end{align}
with $d_i(t)= P_{i}(t)/ V_i(t)+\sum_{k=1}^m \mathbb{B}_{ik} I_{k}(t)$ defined after~\eqref{eq: Faulty DG}.

\subsection{Design of Line Current Estimators}
Based on the power line dynamics given in~\eqref{line}, an \mbox{open-loop} estimator is employed here to estimate the fault-free component of~$I_k$ for each $k \in \mathbb{N}_L$, as follows: 
\begin{equation}\label{eq: Ik_est}	
	 \dot{\hat{I}}_{k,h}(t)=-\frac{R_{k}}{L_k} \hat{I}_{k,h}(t)+ \frac{1}{L_k}\sum_{j=1}^n \mathbb{B}_{jk}V_j(t).   
\end{equation}
In the absence of faults and noise, the dynamics of the estimation error $\tilde{I}_{k} = I_{k} - \hat{I}_{k,h}$ can be described by: $\dot{\tilde{I}}_{k}(t)=-R_k / L_k \tilde{I}_{k}(t)$.
Since $\tilde{I}_{k}$ converges to zero asymptotically, we can replace $I_{k,h}$ with $\hat{I}_{k,h}$ in the subsequent analysis.
When taking faults and noise into account, $\tilde{I}_{k}$ becomes:
\begin{align}\label{eq: Ik}
    \tilde{I}_k = I_k - \hat{I}_{k,h} \approx I_k - I_{k,h}=I_{k,f}+ \epsilon_k.
\end{align}
Given the asymptotic convergence speed, we suppose $\tilde{I}_k = I_{k,f} + \epsilon_k$ for the remainder of the paper.

\subsection{Design of Pre-filter $2$}
Pre-filter $2$ is derived from~\eqref{eq: ri} and is given by:
\begin{align}\label{eq: prefilter 2}
    \hat{r}_{I,i} = \frac{\mathcal{N}_i(p)\mathcal{G}_i}{a(p)} \sum_{k=1}^m \mathbb{B}_{ik} \hat{I}_{k,h},
\end{align}
where the input is an estimate of the aggregate \mbox{fault-free} line currents, the output~$\hat{r}_{I,i}$ represents the estimated contribution of $\sum_{k=1}^m \mathbb{B}_{ik} I_{k,h}$ to~$r_i$.
According to~\eqref{eq: ri},~\eqref{eq: Ik}, and~\eqref{eq: prefilter 2}, the residual~$\tilde{r}_i$ in Fig.~\ref{fig: PF est} encapsulating the information of $f_{I,i}$ and $P_{i}$ is then generated by subtracting~$\hat{r}_{I,i}$ from~$r_i$, yielding 
\begin{align}\label{eq: til_ri}
\begin{split}
    \tilde{r}_{i} &= r_{i} - \frac{\mathcal{N}_i(p)\mathcal{G}_i}{a(p)}\sum_{k=1}^m \mathbb{B}_{ik}\hat{I}_{k,h} \\
    &= \frac{\mathcal{N}_i(p)\mathcal{G}_i}{a(p)}\left(f_{I,i}  +   \frac{P_{i}}{V_i}\right) +\frac{\mathcal{N}_i(p) [\mathcal{G}_i ~{\bf I}_6]}{a(p)}
    \begin{bmatrix}
        \sum_{k=1}^m \mathbb{B}_{ik} \epsilon_k \\ \omega_i
    \end{bmatrix},
\end{split}
\end{align}
where the aggregate faulty line current $f_{I,i} = \sum_{k=1}^m \mathbb{B}_{ik}I_{k,f}$ as previously defined.
To facilitate the subsequent estimation process, we further transform the second line of~\eqref{eq: til_ri} into its corresponding observable canonical state-space form:
\begin{align}\label{eq: ss PF est}
     &\dot{x}_{\tilde{r}_i}(t)  =  A_r x_{\tilde{r}_i}(t) + B_{\mathcal{G},i}\left(f_{I,i}(t)  +\frac{P_{i}(t)}{V_i(t)} \right) + B_{\varpi,i}\varpi_i(t), \notag \\
     &\tilde{r}_{i}(t) = C_r x_{\tilde{r}_i}(t),
\end{align}
where $x_{\tilde{r}_i}(t) \in {\Bbb R}^{d_\mathcal{N}+1}$ and 
$\varpi_i = \begin{bmatrix}
    \sum_{k=1}^m \mathbb{B}_{ik} \epsilon_k &\omega_i^{\top}
\end{bmatrix}^{\top}$. 
Matrices $A_r$ and $C_r$ are specified in~\eqref{eq: SS of filter}, the input matrices~$B_{\mathcal{G},i}$ and $B_{\varpi,i}$ are constructed as: 
\begin{align*}
    &B_{\mathcal{G},i} = 
    \begin{bmatrix}
        \mathcal{N}^{\top}_{i,0} &\mathcal{N}^{\top}_{i,1} &\dots &\mathcal{N}^{\top}_{i,d_{\mathcal{N}}}
    \end{bmatrix}^{\top} \mathcal{G}_i, \\
    &B_{\varpi,i} =
    \begin{bmatrix}
        \mathcal{N}^{\top}_{i,0} &\mathcal{N}^{\top}_{i,1} &\dots &\mathcal{N}^{\top}_{i,d_{\mathcal{N}}}
    \end{bmatrix}^{\top} \begin{bmatrix}
        \mathcal{G}_i &{\bf I}_6
    \end{bmatrix}.
\end{align*}

\subsection{Design of Faulty Line Current Estimator}\label{subsec: line current est}
The three sub-blocks designed above serve as preparatory steps for estimating~$f_{I,i}$.
As illustrated in Fig.~\ref{fig: PF est}, the derived~$\tilde{r}_i$ is processed by the estimator in the final sub-block to compute~$\hat{f}_{I,i}$.
In this final stage, a discrete-time estimation approach is developed, and thus the state-space model~\eqref{eq: ss PF est} is discretized with a sampling period~$t_s$.
For consistency and simplicity, the system matrix notations of the \mbox{discrete-time} model remain unchanged. 
To address the potential ill-posedness issue in estimating $f_{I,i}$ mentioned in the problem statement session, the proposed approach proceeds through two sequential phases: 
\begin{enumerate}
    \item \textbf{Detection and differentiation phase.}  
    Before the occurrence of any power line faults, i.e., $f_{I,i}=0$, a residual~$\tilde{\Upsilon}_{i,T}$ derived from the estimate of $P_{i}$ is introduced to detect both step load changes and power line faults. 
    The discrimination between the two events is further achieved by leveraging their distinct transient behaviors in~$\tilde{\Upsilon}_{i,T}$, in particular, the duration of threshold violation.
    \item \textbf{Estimation phase.} 
    Once a power line fault is detected, a dedicated regularized LS estimator is activated to estimate~$f_{I,i}$. 
\end{enumerate}
The methodologies employed in each phase are elaborated in the remaining parts of this subsection.

\subsubsection{Detection and differentiation of power line faults and step load changes}
The proposed detection and differentiation approach relies on power load estimation before faults occur on the power lines connected to DG unit~$i$.
To this end, suppose that~$f_{I,i} = 0$ and~$P_{i}$ is constant initially based on Assumption~\ref{as: PWC d}. 
To obtain power load estimation, we construct the following parity-space relation over a sliding window $T$ based on~\eqref{eq: ss PF est}: 
\begin{align}\label{eq: PS1}
    \BDriT(k) = &\mathcal{O}_{i,T} x_{\tilde{r}_i}(k-T+1) + \mathcal{Z}_{i_1,T} \BDViT(k-1) P_{i} + \mathcal{Z}_{i_2,T} \BDWiT(k-1) , 
\end{align}
where~$\BDriT(k)$, $\BDViT(k-1)$, and $\BDWiT(k-1)$ are the stacked data vectors of $\tilde{r}_{i}$, $1/V_i$, and $\varpi_i$, respectively.
Toeplitz matrices $\mathcal{Z}_{i_1,T}$, $\mathcal{Z}_{i_2,T}$, and the $T$-order observability matrix $\mathcal{O}_{i,T}$ are defined as:
\begin{align*}
    &\mathcal{Z}_{i_1,T} = \begin{bmatrix}
        0 &0 &\dots &0\\
        C_rB_{\mathcal{G},i} &0 &\dots &0\\
        C_rA_rB_{\mathcal{G},i} &C_rB_{\mathcal{G},i} &\dots &0\\
        \vdots &\ddots &\ddots &\vdots \\
        C_rA_r^{T-2}B_{\mathcal{G},i} &\dots &\dots &C_rB_{\mathcal{G},i} 
    \end{bmatrix}, 
    ~\mathcal{Z}_{i_2,T} = \begin{bmatrix}
        0 &0 &\dots &0\\
        C_rB_{\varpi,i} &0 &\dots &0\\
        C_rA_rB_{\varpi,i} &C_rB_{\varpi,i} &\dots &0\\
        \vdots &\ddots &\ddots &\vdots \\
        C_rA_r^{T-2}B_{\varpi,i} &\dots &\dots &C_rB_{\varpi,i},
    \end{bmatrix}, \\
    &\mathcal{O}_{i,T}=\begin{bmatrix}
        C^{\top}_r & (C_rA_r)^{\top} &\dots &(C_rA_r^{T-1})^{\top}
    \end{bmatrix}^{\top}.
\end{align*}

To eliminate the effect of unknown state~$x_{\tilde{r}_i}(k-T+1)$ on the residual, we further introduce the orthogonal projection of $\mathcal{O}_{i,T}$, i.e., 
$ \mathcal{O}^{\bot}_{i,T} = I-\mathcal{O}_{i,T}\mathcal{O}^{\dag}_{i,T}$.
Multiplying both sides of~\eqref{eq: PS1} from the left by $\mathcal{O}^{\bot}_{i,T}$ leads to
\begin{align}\label{eq: parity-space relation}
\Upsilon_{i,T}(k) = \Psi_{i,T}(k-1) P_{i} + \Omega_{i,T}(k-1),
\end{align}
where $\Upsilon_{i,T}(k) = \mathcal{O}^{\bot}_{i,T} \BDriT(k) \in {\Bbb R}^{n_{\Upsilon}}$,
\begin{align*}
    &\Psi_{i,T}(k-1) = \mathcal{O}^{\bot}_{i,T}\mathcal{Z}_{i_1,T} \BDViT(k-1),\\ &\Omega_{i,T}(k-1) = \mathcal{O}^{\bot}_{i,T}\mathcal{Z}_{i_2,T} \BDWiT(k-1).
\end{align*}
For notational simplicity, the time index is omitted hereafter, e.g.,~$\Upsilon_{i,T}(k) \rightarrow \Upsilon_{i,T}$.

Given~\eqref{eq: parity-space relation}, the weighted LS method is employed here to estimate~$P_{i}$. 
The estimated value~$\hat{P}_{i}$ is given by 
\begin{align}\label{eq: Pli_est}
    \hat{P}_{i} = \arg\min_{P_{i}} \left\|\Upsilon_{i,T}- \Psi_{i,T}P_{i}\right\|^2_{\Sigma^{-1}_{\Omega_{i,T}}}  
    = \Phi_{i,T} \Upsilon_{i,T},
\end{align}
where $\Phi_{i,T} = (\Psi^{\top}_{i,T} \Sigma^{-1}_{\Omega_{i,T}} \Psi_{i,T} )^{-1} \Psi^{\top}_{i,T} \Sigma^{-1}_{\Omega_{i,T}}$ 
and $\Sigma_{\Omega_{i,T}} = \mathcal{O}^{\bot}_{i,T}\mathcal{Z}_{i_2,T}\Sigma_{\BDWiT}(\mathcal{O}^{\bot}_{i,T}\mathcal{Z}_{i_2,T})^{\top} \succ 0$ represents the covariance matrix of~$\Omega_{i,T}$.
To detect load changes and power line faults, the following residual is introduced:
\begin{align}\label{eq: r_tilde}
    \tilde{\Upsilon}_{i,T} = \Upsilon_{i,T} - \Psi_{i,T} \hat{P}_{i},
\end{align}
which characterizes the mismatch between $\Upsilon_{i,T}$ and its reconstruction based on~$\hat{P}_i$. 
In the absence of load changes and line faults, $\hat{P}_i$ obtained from~\eqref{eq: Pli_est} is the minimum-variance unbiased estimate of $P_i$~\cite[Section 4.5]{luenberger1997optimization}, and ~$\Tilde{\Upsilon}_{i,T}$ fluctuates around $0$ due to noise. 
When a load change or power line fault occurs, the mismatch between the actual and estimated loads causes~$\Tilde{\Upsilon}_{i,T}$ to deviate from $0$ apparently, thereby indicating the occurrence of load variations or faults.

We further define the $\kappa$-th row of $\tilde{\Upsilon}_{i,T}$ as
$\tilde{\Upsilon}^{[\kappa]}_{i,T} = \Upsilon^{[\kappa]}_{i,T} - \Psi^{[\kappa]}_{i,T} \hat{P}_{i},~\kappa \in \{1,\dots,n_{\Upsilon}\}$.
When there are no load changes and line faults, according to~\cite[Section 4.5]{luenberger1997optimization}, it holds that 
\begin{align*}
   \mathbf{E} \left[\tilde{\Upsilon}^{[\kappa]}_{i,T}\right] = 0, ~\mathbf{Var}\left[\tilde{\Upsilon}^{[\kappa]}_{i,T}\right] = e_\kappa ({\bf I}_{n_{\Upsilon}} -\Psi_{i,T}  \Phi_{i,T}) \Sigma_{\Omega_{i,T}}e^{\top}_\kappa,
\end{align*}
where $e_\kappa$ is a row vector with all entries zero except for the $\kappa$-th entry being $1$.
Based on Chebyshev's inequality, the probability that~$\tilde{\Upsilon}^{[\kappa]}_{i,T}(k)$ lies inside the threshold interval $[-\varepsilon^{[\kappa]}_{i}(k),\varepsilon^{[\kappa]}_{i}(k)]$ satisfies:
\begin{align*}
    &\mathbf{Pr}\left[ \left. \left|  \tilde{\Upsilon}^{[\kappa]}_{i,T}(k) \right|  \leq \varepsilon^{[\kappa]}_{i}(k) \right| \Delta P_{i}=0, \BDfiT=\mathbf{0}\right]
    \geq 1-\frac{1}{\alpha^2},
\end{align*}
where~$\Delta P_{i}$ denotes the step load change, $\BDfiT$ is the stacked data vector of $f_{I,i}$, $\alpha > 1$ is a tunable scalar, and the time-varying bound is
$\varepsilon^{[\kappa]}_{i}(k) = \alpha \sqrt{\mathbf{Var}[\tilde{\Upsilon}^{[\kappa]}_{i,T}(k)]}$.

Based on the above analysis, deviations caused by both load changes and line faults can be detected if any entries in~$\tilde{\Upsilon}_{i,T}$ exceed the pre-defined threshold interval.
Moreover, we show in the following proposition that~$\tilde{\Upsilon}_{i,T}$ exhibits distinct transient behaviors under these two scenarios, which serves as the foundation for distinguishing between step load changes and power line faults.

\begin{Prop}[Discrimination between step load changes and power line faults]\label{prop: diff_P_fl}
    Consider the power line dynamics~\eqref{line}, the closed-loop dynamics of DG unit $i$~\eqref{eq: Faulty DG}, and Assumptions~\ref{as: uncertainty_noise} and~\ref{as: PWC d}. 
    Suppose that load changes and line faults do not occur simultaneously within the sliding window~$T$. 
    The residual~$\tilde{\Upsilon}_{i,T}$ in~\eqref{eq: r_tilde}, generated through the diagnosis block~$\mathcal{D}_{l,i}$, exhibits distinct patterns under the two scenarios of step load changes and power line faults, as follows:
    \begin{itemize}
        \item[(1)] \textbf{Step load changes in $P_{i}$.} Entries in $\tilde{\Upsilon}_{i,T}$ can temporarily exceed the threshold interval $[-\varepsilon^{[\kappa]}_{i}(k),\varepsilon^{[\kappa]}_{i}(k)]$ due to a step load change. 
        Then, with probability greater than $1-1/ \alpha^2$, the entries return to and remain within the threshold bounds within $T$ steps after the load changes.
        
        \item[(2)] \textbf{Power line faults on DG unit $i$.} The expected value of $\tilde{\Upsilon}^{[\kappa]}_{i,T}$ in the presence of $f_{I,i}$ becomes
        \begin{align*}
     \mathbf{E}\left[\tilde{\Upsilon}^{[\kappa]}_{i,T}\right] = e_\kappa({\bf I}_{n_{\Upsilon}}-\Psi_{i,T} \Phi_{i,T})  \mathcal{O}^{\bot}_{i,T}  \mathcal{Z}_{i_1,T} \BDfiT,
        \end{align*}
        which remains nonzero as long as the fault persists. 
        Then, power line faults can be distinguished from step changes in~$P_{i}$ if there consistently exists at least one entry in $\tilde{\Upsilon}_{i,T}(k)$ that satisfies $\tilde{\Upsilon}^{[\kappa]}_{i,T}(k) \notin [-\varepsilon^{[\kappa]}_{i}(k), \varepsilon^{[\kappa]}_{i}(k)]$ for at least $T$ consecutive steps.
    \end{itemize}
    Therefore, by monitoring whether threshold violations are transient (disappearing within $T$ steps) or persistent (lasting at least $T$ consecutive steps), step load changes and line faults can be discriminated over a time window.
\end{Prop}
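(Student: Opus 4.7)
The plan is to substitute the augmented dynamics~\eqref{eq: ss PF est} into the parity-space construction used to build $\hat{P}_{i}$ and $\tilde{\Upsilon}_{i,T}$, and then track how an isolated step change in $P_{i}$ versus a persistent line fault $f_{I,i}$ propagates through the sliding-window residual. The crucial observation is that the expression~\eqref{eq: PS1} was derived under the assumption that $P_{i}$ stays constant across the window, so its validity is automatically restored $T$ steps after any step change, while a persistent fault enters the parity relation through the same Toeplitz matrix $\mathcal{Z}_{i_1,T}$ that couples $P_{i}/V_{i}$ and thus leaves a non-vanishing bias in every subsequent window.

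For part~(1), I would invoke Assumption~\ref{as: PWC d} to partition the time axis at a step-change instant $k_{0}$. During the transient window $[k_{0},\,k_{0}+T-1]$, the stacked data mix two constant values of $P_{i}$, so~\eqref{eq: parity-space relation} does not hold verbatim and $\tilde{\Upsilon}^{[\kappa]}_{i,T}$ may exceed $\varepsilon^{[\kappa]}_{i}$. For $k\geq k_{0}+T$, however, the window contains only the new constant value and~\eqref{eq: parity-space relation} applies with that constant, so $\hat{P}_{i}$ recovers the minimum-variance unbiased property already stated after~\eqref{eq: r_tilde}. Consequently $\mathbf{E}[\tilde{\Upsilon}^{[\kappa]}_{i,T}]=0$ and $\mathbf{Var}[\tilde{\Upsilon}^{[\kappa]}_{i,T}]$ takes the form given earlier, and Chebyshev's inequality with tunable scalar $\alpha$ delivers the $1-1/\alpha^{2}$ probability that the entries return inside $[-\varepsilon^{[\kappa]}_{i},\varepsilon^{[\kappa]}_{i}]$.

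For part~(2), I would re-derive the parity-space relation in the presence of $f_{I,i}$. Since $f_{I,i}$ and $P_{i}/V_{i}$ enter the state equation~\eqref{eq: ss PF est} through the same input matrix $B_{\mathcal{G},i}$, stacking over the window yields the augmented identity $\BDriT = \mathcal{O}_{i,T} x_{\tilde{r}_{i}}(k-T+1) + \mathcal{Z}_{i_1,T}\BDViT P_{i} + \mathcal{Z}_{i_1,T}\BDfiT + \mathcal{Z}_{i_2,T}\BDWiT$. Projecting by $\mathcal{O}^{\bot}_{i,T}$ and substituting $\hat{P}_{i}=\Phi_{i,T}\Upsilon_{i,T}$ into~\eqref{eq: r_tilde} gives $\tilde{\Upsilon}_{i,T} = ({\bf I}_{n_{\Upsilon}}-\Psi_{i,T}\Phi_{i,T})\bigl(\mathcal{O}^{\bot}_{i,T}\mathcal{Z}_{i_1,T}\BDfiT + \Omega_{i,T}\bigr)$. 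Taking expectation cancels $\Omega_{i,T}$ by Assumption~\ref{as: uncertainty_noise} and yields the claimed row-wise expression for $\mathbf{E}[\tilde{\Upsilon}^{[\kappa]}_{i,T}]$. Discrimination from load changes then follows by combining this expression with part~(1): once a fault is present, the bias appears in every window containing any faulty sample, so Chebyshev's inequality again ensures that with probability at least $1-1/\alpha^{2}$ at least one coordinate stays outside $[-\varepsilon^{[\kappa]}_{i},\varepsilon^{[\kappa]}_{i}]$ for at least $T$ consecutive steps, precisely the interval needed to flush out a transient step change.

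The main obstacle I foresee is the bookkeeping when the sliding window straddles the load-change instant, because the derivation of~\eqref{eq: parity-space relation} implicitly factors a constant $P_{i}$ out of $\mathcal{Z}_{i_1,T}\BDViT$. One must verify carefully that, beyond the usual noise term $\Omega_{i,T}$, the only additional contribution during the transient is a piecewise-constant mismatch that by construction vanishes after $T$ samples, and that the non-overlap hypothesis ``load changes and line faults do not occur simultaneously within $T$'' prevents the two biases from cancelling. A secondary subtlety is translating the expected-value formula in part~(2) into the ``at least one coordinate'' statement; here one relies on the fact that $({\bf I}_{n_{\Upsilon}}-\Psi_{i,T}\Phi_{i,T})\mathcal{O}^{\bot}_{i,T}\mathcal{Z}_{i_1,T}$ cannot be the zero map (otherwise $f_{I,i}$ would be unobservable from $\tilde{\Upsilon}_{i,T}$), so at least one row produces a persistent mean-level shift once the fault is active.
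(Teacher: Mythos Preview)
Your proposal is correct and follows essentially the same route as the paper's proof: for part~(1) you argue that the parity relation~\eqref{eq: parity-space relation} is violated only while the sliding window straddles the jump at $k_{0}$ and is restored (with the new constant load) once $k\geq k_{0}+T$, after which unbiasedness of $\hat{P}_{i}$ plus Chebyshev yields the threshold statement; for part~(2) you stack~\eqref{eq: ss PF est}, project by $\mathcal{O}^{\bot}_{i,T}$, substitute $\hat{P}_{i}=\Phi_{i,T}\Upsilon_{i,T}$, and take expectations to obtain $\tilde{\Upsilon}_{i,T}=({\bf I}_{n_{\Upsilon}}-\Psi_{i,T}\Phi_{i,T})(\mathcal{O}^{\bot}_{i,T}\mathcal{Z}_{i_1,T}\BDfiT+\Omega_{i,T})$, which is exactly the paper's computation. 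One small caution: your sentence ``Chebyshev's inequality again ensures that with probability at least $1-1/\alpha^{2}$ at least one coordinate stays outside'' slightly over-claims; Chebyshev bounds the probability of being \emph{far} from the mean, not of being far from zero, and neither the proposition nor the paper's proof asserts such a probabilistic guarantee for the fault case---the statement in part~(2) is a conditional (``can be distinguished \emph{if}''), so the expected-value formula is all that is actually required.
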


\begin{proof}
    The proof is relegated to Appendix~\ref{app: A}.
\end{proof}

To calculate the duration time during which not all~$\tilde{\Upsilon}^{[\kappa]}_{i,T}$ remain within the threshold intervals, we introduce $\mathcal{T}_{i}(k)$ to record the most recent time instant at which all~$\tilde{\Upsilon}^{[\kappa]}_{i,T}$ are inside the thresholds, which is
\begin{align}\label{eq: Td}
    \mathcal{T}_{i}(k) := &\max \left\{k^{\prime} \in \mathbb{N}: \tilde{\Upsilon}^{[\kappa]}_{i,T}(k^{\prime}) \in \left[-\varepsilon^{[\kappa]}_{i}(k^{\prime}),\varepsilon^{[\kappa]}_{i}(k^{\prime}) \right], \forall \kappa \in \{1,\dots,n_{\Upsilon}\},  k 
    \geq k^{\prime} \right\}.
\end{align}
In other words, $k-\mathcal{T}_{i}(k)$ measures the time elapsed since there exists $\tilde{\Upsilon}^{[\kappa]}_{i,T}$ crossing the thresholds.
In addition, we define $\sigma_i(k) \in \{0,1,2\}$ as the status indicator, whose value is determined by the following rules:

\begin{enumerate}
    \item \textbf{No load changes or power line faults ($\sigma(k)=0$)}. If $k-\mathcal{T}_{i}(k) < T/2$, then all entries of~$\tilde{\Upsilon}_{i,T}(k)$ are within their respective thresholds, i.e., $\tilde{\Upsilon}^{[\kappa]}_{i,T}(k) \in [-\varepsilon^{[\kappa]}_{i}(k),\varepsilon^{[\kappa]}_{i}(k) ]$, $\forall \kappa \in \{1,\dots,n_{\Upsilon} \}$, or there exist entries of $\tilde{\Upsilon}_{i,T}(k)$ exceeding thresholds while the duration is less than $T/2$. 
    Note that some entries of~$\tilde{\Upsilon}_{i,T}(k)$ may transiently exceed the thresholds due to stochastic noise even without load changes and faults. 
    To mitigate false alarms, we set $\sigma_i(k)=0$ if threshold violations persist for less than a predefined duration, e.g., $T/2$ here;
    \item \textbf{Step load changes or line faults ($\sigma(k)=1$)}. If $T/2 \leq k-\mathcal{T}_{i}(k) < T$, then there exists at least one element $\tilde{\Upsilon}^{[\kappa]}_{i,T}(k^{\prime}) \notin [-\varepsilon^{[\kappa]}_{i}(k^{\prime}),\varepsilon^{[\kappa]}_{i}(k^{\prime})]$ for each $k^{\prime} \in \{\mathcal{T}_{i}(k)+1,\dots,k\}$ with duration less than $T$ but greater than or equal to $T/2$.
    This corresponds to an intermediate state, in which an event has been detected but not distinguished yet. 
    The discrimination result is obtained by further monitoring whether the threshold violation persists up to $T$ steps or not;
    \item \textbf{Persistent line faults ($\sigma(k)=2$)}. If $k-\mathcal{T}_{i}(k) \geq T$, then there exists $\kappa \in \{1,\dots,n_{\Upsilon}\}$ such that $\tilde{\Upsilon}^{[\kappa]}_{i,T}(k) \notin [-\varepsilon^{[\kappa]}_{i}(k), \varepsilon^{[\kappa]}_{i}(k)]$ for at least $T$ consecutive steps, indicating the occurrence of a line fault.
\end{enumerate}

For clarity, the diagnosis rules are summarized as follows:
\begin{align}\label{eq: disgnosis_rules}
    \sigma_i(k) = \left\{ \begin{array}{ll}
      0, &\text{if}~k-\mathcal{T}_{i}(k) < T/2,  \\
      1, &\text{if}~T/2 \leq k-\mathcal{T}_{i}(k) < T, \\
      2, &\text{if}~k-\mathcal{T}_{i}(k) \geq T.
    \end{array} \right.
\end{align}

Note that the differentiation strategy is developed under the piecewise-constant power load assumption. However, load generally exhibits non-ideal step dynamics in practice, such as start-up transients of electric-vehicle chargers. Nonetheless, the proposed diagnosis approach remains applicable, although the theoretical guarantees established in Proposition~\ref{prop: diff_P_fl}, such as the transient effect on the residual vanishing within $T$ steps, no longer strictly hold. 
In particular, slowly varying load transients may not trigger the detection threshold as $\Delta P$ is small, while fast transients are more likely to be detected since they resemble step-like changes. In both cases, the proposed line-fault estimation performance is maintained, which has been demonstrated by the simulation provided in the Appendix.

\begin{Rem}[Selection of $T$]
    There are some principles in selecting $T$. 
    First, $T$ must satisfy $T>d_{\mathcal{N}}+1$ to ensure the existence of the orthogonal complement of $\mathcal{O}_{i,T}$ in~\eqref{eq: parity-space relation}. 
    Second, $T$ should not be chosen too short, since slowly varying faulty line currents may then appear approximately constant over the window and become difficult to detect because the effect is not sufficiently accumulated over that time interval. On the other hand, a larger $T$ generally improves the detectability of slowly varying faults and robustness to noise, but also increases diagnosis delay and computational burden.
    Therefore, $T$ should be chosen as a compromise among detectability, delay, and complexity.
\end{Rem}

\subsubsection{Estimation of aggregated faulty line current}
Upon successful detection of any faulty line current $f_{I,i}$ in DG unit $i$, an estimator is activated.
The design method of the estimator is provided in this part.
We begin by reformulating the \mbox{parity-space} equation~\eqref{eq: parity-space relation} based on~\eqref{eq: ss PF est} to account for power line faults and obtain:
\begin{align}\label{eq: parity-space relation with fault}
     \Upsilon_{i,T} =\mathcal{O}^{\bot}_{i,T}\mathcal{Z}_{i_1,T} (\BDViT P_{i} + \Delta \boldsymbol{P}_{i} +  \BDfiT)+ \Omega_{i,T},
\end{align}
where 
$\Delta \boldsymbol{P}_{i} =  
\begin{bmatrix}
        \mathbf{0}^{\top} & \boldsymbol{\mathcal{V}}^{\top}_{i,k-k_0}
\end{bmatrix}^{\top}\Delta P_{i}$ 
represents the step load change occurring at $k_0 \in [k-T+1,k-1]$ and $\BDfiT$ is the stacked data vector of $f_{I,i}$.
It is worth noting that $\mathcal{O}^{\bot}_{i,T}\mathcal{Z}_{i_1,T}\BDViT$ is a linear combination of columns of~$\mathcal{O}^{\bot}_{i,T}\mathcal{Z}_{i_1,T}$ and the following rank condition holds:
\begin{align*}
    \text{Rank}\left(\begin{bmatrix}
        \mathcal{O}^{\bot}_{i,T}\mathcal{Z}_{i_1,T} &\mathcal{O}^{\bot}_{i,T}\mathcal{Z}_{i_1,T}\BDViT
    \end{bmatrix}\right) = \text{Rank}(\mathcal{O}^{\bot}_{i,T}\mathcal{Z}_{i_1,T}).
\end{align*}
Consequently, $P_{i}$ and $\BDfiT$ cannot be uniquely determined through~\eqref{eq: parity-space relation with fault}.
To address this issue, instead of directly estimating~$\BDfiT$, we opt to estimate its mean value over the past $T-1$ steps, i.e., 
\begin{align*}
    \BDfiT = \mathbf{1} \bar{f}_{I_i,T-1} + \Delta \boldsymbol{f}_{I_i,T-1},
\end{align*}
where $\bar{f}_{I_i,T-1} = \frac{1}{T-1}\sum^{k-1}_{k-T+1} f_{I,i}(k^{\prime})$ is the mean value, $\Delta \boldsymbol{f}_{I_i,T-1}$ represents the deviation, and $\mathbf{1}=[1 ~\dots ~1]^{\top}$.
Then, the parity-space relation~\eqref{eq: parity-space relation with fault} is reformulated as:
\begin{align}\label{eq: Reform_PS}
\begin{split}
    \Upsilon_{i,T}  = \Gamma_{i,T}\Theta_i
     +\mathcal{O}^{\bot}_{i,T} \mathcal{Z}_{i_1,T} (\Delta {\boldsymbol{f}}_{I_i,T-1} +\Delta \boldsymbol{P}_{i} )
     + \Omega_{i,T}.
\end{split}
\end{align}
where 
$\Gamma_{i,T} = \mathcal{O}^{\bot}_{i,T} \mathcal{Z}_{i_1,T} [\BDViT ~\mathbf{1}]$ and
$\Theta_i =  [P_{i} ~\bar{f}_{I_i,T-1}]^{\top}$.

In~\eqref{eq: Reform_PS}, Rank$(\Gamma_{i,T}) = 2$ is of full column rank as long as $\BDViT$ is not a constant vector. 
Nevertheless, the estimate of~$\Theta_i$ is still susceptible to load deviations~$\Delta \boldsymbol{P}_{i}$, faulty current fluctuations~$\Delta {\boldsymbol{f}}_{I_i,T-1}$, and noise term~$\Omega_{i,T}$.
This sensitivity stems from the ill-conditioning of $\Gamma_{i,T}$ when~$\BDViT$ exhibits near-constant behavior.
Therefore, to estimate $\bar{f}_{I_i,T-1}$ robustly, we formulate the following regularized LS problem:
\begin{align}\label{eq: QP}
    \hat{\Theta}_i =  \arg \min_{\Theta_i} ~\|\Upsilon_{i,T} - \Gamma_{i,T} \Theta_i\|^2_{\Sigma^{-1}_{\Omega_{i,T}}} + \eta \|\nu_1(\Theta_i - \hat{\Theta}_{i\_})\|_2^2,
\end{align}
where $\hat{\Theta}_i = [\hat{P}_i ~\hat{f}_{I,i}]^{\top}$ is the estimate of~$\Theta_i$,
$\eta \in \mathbb{R}_+$ is the regularization weight,~$\nu_1=[1 ~0]$ selectively penalizes variations in the load estimate, and $\hat{\Theta}_{i\_}$ represents the estimation result from the previous time step.
The regularization term enforces temporal consistency on~$P_{i}$, reflecting its piece-wise constant nature in practical systems. 

Although~\eqref{eq: QP} is a quadratic optimization problem and can be solved tractably, it is not suitable for online monitoring as it requires solving the optimization problem at each step, which can be computationally expensive. 
Fortunately,~\eqref{eq: QP} admits an analytical solution, as presented in the following proposition.

\begin{Prop}[Analytical solution]\label{prop: analytical sol}
The analytical solution to the quadratic optimization problem~\eqref{eq: QP} is given by
\begin{align}\label{eq: analytical sol}
    \hat{\Theta}_i= \mathcal{K}^{-1}_{i,T}
    \left(\Gamma_{i,T}^{\top}\Sigma^{-1}_{\Omega_{i,T}} \Upsilon_{i,T}+\eta \nu_1^{\top}\nu_1 \hat{\Theta}_{i\_}\right).
\end{align}
where $\mathcal{K}_{i,T} = \Gamma_{i,T}^{\top}\Sigma^{-1}_{\Omega_{i,T}}\Gamma_{i,T}+\eta \nu_1^{\top}\nu_1$. 
The estimate of $\bar{f}_{I_i,T-1}$ is obtained by~$\hat{f}_{I,i} = \nu_2 \hat{\Theta}_i$ with~$\nu_2=[0 ~1]$.
\end{Prop}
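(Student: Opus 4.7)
The plan is to treat~\eqref{eq: QP} as an unconstrained convex quadratic program in~$\Theta_i$ and simply write down its first-order optimality condition. Expanding the objective as
\begin{align*}
J(\Theta_i) &= (\Upsilon_{i,T} - \Gamma_{i,T}\Theta_i)^{\top}\Sigma^{-1}_{\Omega_{i,T}}(\Upsilon_{i,T} - \Gamma_{i,T}\Theta_i) \\
&\quad + \eta\,(\Theta_i - \hat{\Theta}_{i\_})^{\top}\nu_1^{\top}\nu_1(\Theta_i - \hat{\Theta}_{i\_}),
\end{align*}
both terms are quadratic forms with positive semidefinite Hessians ($\Sigma^{-1}_{\Omega_{i,T}} \succ 0$ from Assumption~\ref{as: uncertainty_noise} and $\nu_1^{\top}\nu_1 \succeq 0$), so $J$ is convex and any stationary point is a global minimizer.

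Next, I would differentiate $J$ with respect to~$\Theta_i$ using standard matrix calculus rules for quadratic forms, obtaining
\begin{align*}
\nabla J(\Theta_i) = -2\,\Gamma_{i,T}^{\top}\Sigma^{-1}_{\Omega_{i,T}}(\Upsilon_{i,T} - \Gamma_{i,T}\Theta_i) + 2\eta\,\nu_1^{\top}\nu_1(\Theta_i - \hat{\Theta}_{i\_}).
\end{align*}
Setting $\nabla J(\hat{\Theta}_i) = 0$ and collecting the $\hat{\Theta}_i$ terms on one side gives
\begin{align*}
\bigl(\Gamma_{i,T}^{\top}\Sigma^{-1}_{\Omega_{i,T}}\Gamma_{i,T} + \eta\,\nu_1^{\top}\nu_1\bigr)\hat{\Theta}_i = \Gamma_{i,T}^{\top}\Sigma^{-1}_{\Omega_{i,T}}\Upsilon_{i,T} + \eta\,\nu_1^{\top}\nu_1\hat{\Theta}_{i\_},
\end{align*}
which is precisely the normal equation $\mathcal{K}_{i,T}\hat{\Theta}_i = \Gamma_{i,T}^{\top}\Sigma^{-1}_{\Omega_{i,T}}\Upsilon_{i,T} + \eta\,\nu_1^{\top}\nu_1\hat{\Theta}_{i\_}$.

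The main (and only nontrivial) obstacle is to justify inverting~$\mathcal{K}_{i,T}$. I would argue that for any nonzero $v \in \mathbb{R}^2$,
\begin{align*}
v^{\top}\mathcal{K}_{i,T}v = \|\Gamma_{i,T}v\|^2_{\Sigma^{-1}_{\Omega_{i,T}}} + \eta\,(\nu_1 v)^2 \geq 0,
\end{align*}
and that equality to zero would force both $\nu_1 v = 0$ (so $v = [0,\,v_2]^{\top}$ with $v_2 \neq 0$) and $\Gamma_{i,T}v = 0$. Recalling from the discussion after~\eqref{eq: Reform_PS} that $\Gamma_{i,T} = \mathcal{O}^{\bot}_{i,T}\mathcal{Z}_{i_1,T}[\BDViT ~\mathbf{1}]$ has full column rank whenever $\BDViT$ is not constant, the second column $\mathcal{O}^{\bot}_{i,T}\mathcal{Z}_{i_1,T}\mathbf{1}$ is then nonzero, contradicting $\Gamma_{i,T}v = 0$. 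Hence $\mathcal{K}_{i,T} \succ 0$ and in particular invertible. Multiplying the normal equation on the left by $\mathcal{K}^{-1}_{i,T}$ yields~\eqref{eq: analytical sol}, and the final identity $\hat{f}_{I,i} = \nu_2\hat{\Theta}_i$ is a direct consequence of the definition $\Theta_i = [P_{i}~\bar{f}_{I_i,T-1}]^{\top}$ with $\nu_2 = [0~1]$.
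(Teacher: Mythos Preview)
Your proposal is correct and follows essentially the same approach as the paper's proof: compute the gradient of the convex quadratic objective and set it to zero to obtain the normal equation. In fact, your argument is more complete than the paper's, which simply writes down $\partial J/\partial\Theta_i$ and sets it to zero without explicitly justifying convexity or the invertibility of $\mathcal{K}_{i,T}$; your invertibility discussion (using $\nu_1 v=0$ together with the full column rank of $\Gamma_{i,T}$ stated after~\eqref{eq: Reform_PS}) fills that gap.
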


\begin{proof}
    By defining the objective function in \eqref{eq: QP} as $J(\Theta_i)$ and taking the derivative of $J(\Theta_i)$ over $\Theta_i$, we have
    \begin{align*}
        \frac{\partial J(\Theta_i)}{\partial \Theta_i} =& 2\left(\Gamma^{\top}_{i,T}\Sigma^{-1}_{\Omega_{i,T}}\Gamma_{i,T}+\eta \nu_1^{\top}\nu_1 \right)\Theta_i-        2\left(\Gamma^{\top}_{i,T}\Sigma^{-1}_{\Omega_{i,T}} \Upsilon_{i,T} +\eta \nu_1^{\top}\nu_1 \hat{\Theta}_{i\_} \right).
    \end{align*}
    The solution is obtained by setting $\partial J(\Theta_i) / \partial \Theta_i = 0$. This completes the proof.
\end{proof}

\begin{algorithm}[t] 
	\caption{Fault diagnosis process for DG unit $i$} \label{alg:algorithm_1} 
	\begin{algorithmic}
	\State \textbf{I. Estimation of the actuator fault $f_{a,i}$} 
	\begin{itemize}
	    \item[(a)] Input $Y_i$ to the estimation filter~\eqref{eq: filter}  
	    \item[(b)] Output the estimated value $\hat{f}_{a,i}$
	\end{itemize}
	\State \textbf{II. Estimation of the faulty line current $f_{I,i}$}
	\begin{itemize}
	    \item[(a)] \textit{Initialization}: Choose the time window length $T$ and initialize the iteration index $k=T$. Set the status indicator $\sigma_{i}(k-1)=0$ and the crossing time recorder $\mathcal{T}_{i}(k-1)=k-1$
	    \item[(b)] Generate $\tilde{r}_i$ using pre-filter 1  in~\eqref{eq: filter fl}, line current estimators in~\eqref{eq: Ik_est}, pre-filter 2 in~\eqref{eq: prefilter 2}, and~\eqref{eq: ss PF est}
            \item[(c)] Generate $\Upsilon_{i,T} = \mathcal{O}^{\bot}_{i,T} \BDriT$ in~\eqref{eq: parity-space relation} with discretized~$\tilde{r}_i$ 
            \item[(d)] Compute the covariance matrix $\Omega_{i,T}$
            \item[(e)] If $\sigma_{i}(k-1)=0$ or $\sigma_{i}(k-1)=1$
                         \% Fault-load discrimination phase:
	          \begin{itemize}
	              \item[] $\hat{P}_{i} = \Phi_{i,T}\Upsilon_{i,T}$ in~\eqref{eq: Pli_est};
                    \item[] $\tilde{\Upsilon}_{i,T} = \Upsilon_{i,T} - \Psi_{i,T} \hat{P}_{i}$ in~\eqref{eq: r_tilde};
                    \item[] If $\exists \tilde{\Upsilon}^{[\kappa]}_{i,T} \notin \left[-\varepsilon^{[\kappa]}_{i}(k),\varepsilon^{[\kappa]}_{i}(k)\right]$
                    \begin{itemize}
                        \item[]  $\mathcal{T}_{i}(k) = \mathcal{T}_{i}(k-1)$; \% Record crossing time
                    \end{itemize}
                    \item[] Else 
                    \begin{itemize}
                        \item[]  $\mathcal{T}_{i}(k) = k$; 
                    \end{itemize}
                    \item [] End
                    \item[] Use the diagnosis rules~\eqref{eq: disgnosis_rules} to determine the  current value of the indicator $\sigma_i(k)$;
	    \end{itemize}
	    \item[(f)] Else \%Estimation phase: 
                \begin{itemize}
                    \item[] Generate the estimated value $\hat{\Theta}_i$ using~\eqref{eq: analytical sol};
                    \item[] Set $\sigma_{i}(k) = 2$;
                \end{itemize}
            \item[(g)] End
            \item[(h)] Set $k=k+1$ and repeat (b)-(h).
	\end{itemize}
	\end{algorithmic} 
\end{algorithm}

The diagnosis processes of both actuator and power line faults in DG unit $i$ are summarized in Algorithm~\ref{alg:algorithm_1}. 
Furthermore, to facilitate the subsequent result about the performance bound of the estimator~\eqref{eq: analytical sol}, some notations are introduced. 
Recall that 
$\Psi_{i,T} = \mathcal{O}^{\bot}_{i,T}\mathcal{Z}_{i_1,T}\BDViT$ in~\eqref{eq: parity-space relation}. 
We further define~$\bar{\mathcal{Z}}_{i,T} = \mathcal{O}^{\bot}_{i,T}\mathcal{Z}_{i_1,T} \mathbf{1}$, thus $\Gamma_{i,T}$ in~\eqref{eq: Reform_PS} becomes~$\Gamma_{i,T} = [\Psi_{i,T} ~\bar{\mathcal{Z}}_{i,T}]$.
Let $\underline{\lambda}_{\Sigma_{\Omega_{i,T}}}$ and $\bar{\lambda}_{\Sigma_{\Omega_{i,T}}}$ denote the smallest and largest eigenvalues of~$\Sigma_{\Omega_{i,T}}$, respectively.
The following theorem provides a bound for the estimation error of the average value of~$f_{I,i}$.

\begin{Thm}[Performance bound of the regularized estimator]\label{Thm}
Considering the power line dynamics~\eqref{line}, the closed-loop dynamics of DG unit $i$~\eqref{eq: Faulty DG}, and Assumptions~\ref{as: uncertainty_noise} and~\ref{as: PWC d}, the estimated value~$\hat{f}_{I,i}$ of $\bar{f}_{I_i,T-1}$ obtained from~\eqref{eq: analytical sol} through the diagnosis block~$\mathcal{D}_{l,i}$, satisfies the following error bound:
\begin{align}\label{eq: error bound}
        \left|\mathbf{E}\left[\bar{f}_{I_i,T-1} -  \hat{f}_{I,i}\right] \right| &\leq \frac{\bar{\lambda}_{\Sigma_{\Omega_{i,T}}}}{\underline{\lambda}_{\mathcal{M}_{i,T}}+\eta \bar{\lambda}_{\Sigma_{\Omega_{i,T}}}} 
        \frac{\sqrt{ \bar{\lambda}_{\mathcal{M}_{i,T}}}} {\underline{\lambda}_{\Sigma_{\Omega_{i,T}}}} 
       \bar{\sigma}(\mathcal{O}^{\bot}_{i,T} \mathcal{Z}_{i_1,T}) 
       \left( \|\Delta \boldsymbol{f}_{I_i,T-1} \|_2 + \| \Delta \boldsymbol{P}_{i}  \|_2 \right) \notag\\
       &+{\frac{\bar{\lambda}_{\Sigma_{\Omega_{i,T}}} \bar{\mathcal{Z}}^{\top}_{i,T}  \Psi_{i,T}}       {\underline{\lambda}_{\Sigma_{\Omega_{i,T}}}\bar{\mathcal{Z}}^{\top}_{i,T}  \bar{\mathcal{Z}}_{i,T}} }
        | P_{i} - \hat{P}_{i\_} |,
    \end{align}
    where~$\bar{\lambda}_{\mathcal{M}_{i,T}} = \Psi^{\top}_{i,T} \Psi_{i,T} + \bar{\mathcal{Z}}^{\top}_{i,T} \bar{\mathcal{Z}}_{i,T}$ and 
    \begin{align*}
        \underline{\lambda}_{\mathcal{M}_{i,T}} =\frac{ \Psi^{\top}_{i,T} \Psi_{i,T} \bar{\mathcal{Z}}^{\top}_{i,T} \bar{\mathcal{Z}}_{i,T} - \Psi^{\top}_{i,T}\bar{\mathcal{Z}}_{i,T} \bar{\mathcal{Z}}^{\top}_{i,T} \Psi_{i,T}}{\Psi^{\top}_{i,T} \Psi_{i,T} + \bar{\mathcal{Z}}^{\top}_{i,T} \bar{\mathcal{Z}}_{i,T}}.
    \end{align*} 
\end{Thm}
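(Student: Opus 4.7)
The plan is to propagate the estimation error through the closed-form estimator \eqref{eq: analytical sol} by substituting the reformulated parity-space relation \eqref{eq: Reform_PS}, isolate the two structural error sources (the load/fault variations $\Delta\boldsymbol{f}_{I_i,T-1},\Delta\boldsymbol{P}_i$ and the regularization bias $P_i-\hat{P}_{i\_}$), and bound each contribution separately using the spectral structure of the $2\times 2$ matrices $\mathcal{K}_{i,T}$ and $\mathcal{M}_{i,T}:=\Gamma_{i,T}^{\top}\Gamma_{i,T}$.

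First, I substitute \eqref{eq: Reform_PS} into \eqref{eq: analytical sol}, take expectation using $\mathbf{E}[\Omega_{i,T}]=0$ from Assumption~\ref{as: uncertainty_noise}, and apply the identity $\Gamma_{i,T}^{\top}\Sigma_{\Omega_{i,T}}^{-1}\Gamma_{i,T}=\mathcal{K}_{i,T}-\eta\nu_1^{\top}\nu_1$ to obtain the clean decomposition
\[
\mathbf{E}\bigl[\Theta_i-\hat{\Theta}_i\bigr] = \eta\,\mathcal{K}_{i,T}^{-1}\nu_1^{\top}(P_i-\hat{P}_{i\_}) - \mathcal{K}_{i,T}^{-1}\Gamma_{i,T}^{\top}\Sigma_{\Omega_{i,T}}^{-1}\mathcal{O}^{\bot}_{i,T}\mathcal{Z}_{i_1,T}\bigl(\Delta\boldsymbol{f}_{I_i,T-1}+\Delta\boldsymbol{P}_i\bigr),
\]
where I use $\nu_1(\Theta_i-\hat{\Theta}_{i\_})=P_i-\hat{P}_{i\_}$. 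Left-multiplying by $\nu_2$ extracts $\bar{f}_{I_i,T-1}-\hat{f}_{I,i}$, and the triangle inequality splits the result into two scalar terms in one-to-one correspondence with the two lines of \eqref{eq: error bound}.

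For the variation term, I factor the operator by sub-multiplicativity of the induced $2$-norm into $\|\nu_2\mathcal{K}_{i,T}^{-1}\Gamma_{i,T}^{\top}\|_2\cdot\|\Sigma_{\Omega_{i,T}}^{-1}\|_2\cdot\bar{\sigma}(\mathcal{O}^{\bot}_{i,T}\mathcal{Z}_{i_1,T})$. The middle factor equals $1/\underline{\lambda}_{\Sigma_{\Omega_{i,T}}}$. For the first factor I use
\[
\|\nu_2\mathcal{K}_{i,T}^{-1}\Gamma_{i,T}^{\top}\|_2^{2} = \nu_2\mathcal{K}_{i,T}^{-1}\mathcal{M}_{i,T}\mathcal{K}_{i,T}^{-1}\nu_2^{\top}\leq \bar{\lambda}_{\mathcal{M}_{i,T}}\,\underline{\lambda}_{\mathcal{K}_{i,T}}^{-2},
\]
and then invoke the lower bound $\bar{\lambda}_{\Sigma_{\Omega_{i,T}}}\underline{\lambda}_{\mathcal{K}_{i,T}} \geq \underline{\lambda}_{\mathcal{M}_{i,T}}+\eta\bar{\lambda}_{\Sigma_{\Omega_{i,T}}}$. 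For the regularization term I exploit the $2\times 2$ structure of $\mathcal{K}_{i,T}$ directly: applying Cramer's rule yields
\[
\eta\,\nu_2\mathcal{K}_{i,T}^{-1}\nu_1^{\top} = \frac{-\eta\,\Psi_{i,T}^{\top}\Sigma_{\Omega_{i,T}}^{-1}\bar{\mathcal{Z}}_{i,T}}{\det(\Gamma_{i,T}^{\top}\Sigma_{\Omega_{i,T}}^{-1}\Gamma_{i,T}) + \eta\,\bar{\mathcal{Z}}_{i,T}^{\top}\Sigma_{\Omega_{i,T}}^{-1}\bar{\mathcal{Z}}_{i,T}},
\]
and discarding the non-negative determinant in the denominator causes the $\eta$ factors to cancel, after which the eigenvalue sandwich $I/\bar{\lambda}_{\Sigma_{\Omega_{i,T}}}\preceq\Sigma_{\Omega_{i,T}}^{-1}\preceq I/\underline{\lambda}_{\Sigma_{\Omega_{i,T}}}$ applied to the remaining ratio delivers the second line of \eqref{eq: error bound}.

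The main obstacle will be securing the lower bound $\bar{\lambda}_{\Sigma_{\Omega_{i,T}}}\underline{\lambda}_{\mathcal{K}_{i,T}} \geq \underline{\lambda}_{\mathcal{M}_{i,T}} + \eta\bar{\lambda}_{\Sigma_{\Omega_{i,T}}}$ in the specific form demanded by \eqref{eq: error bound}, since a direct application of Weyl's inequality $\underline{\lambda}(A+B)\geq\underline{\lambda}(A)+\underline{\lambda}(B)$ to $\bar{\lambda}_{\Sigma_{\Omega_{i,T}}}\mathcal{K}_{i,T}\succeq\mathcal{M}_{i,T}+\eta\bar{\lambda}_{\Sigma_{\Omega_{i,T}}}\nu_1^{\top}\nu_1$ loses the regularization term (the rank-one matrix $\nu_1^{\top}\nu_1$ has smallest eigenvalue zero). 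The tighter bound must therefore be extracted from the determinant/trace representation of the smallest eigenvalue of the $2\times 2$ matrix $\mathcal{M}_{i,T}+\eta\bar{\lambda}_{\Sigma_{\Omega_{i,T}}}\nu_1^{\top}\nu_1$, in which the regularization enters both the determinant (through $\eta\bar{\lambda}_{\Sigma_{\Omega_{i,T}}}\bar{\mathcal{Z}}_{i,T}^{\top}\bar{\mathcal{Z}}_{i,T}$) and the trace (through $\eta\bar{\lambda}_{\Sigma_{\Omega_{i,T}}}$) in a balanced way consistent with the bespoke definitions of $\underline{\lambda}_{\mathcal{M}_{i,T}}$ and $\bar{\lambda}_{\mathcal{M}_{i,T}}$ in the theorem statement. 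All remaining steps are routine matrix-norm and Cramer's-rule manipulations.
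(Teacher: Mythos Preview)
Your proposal follows essentially the same route as the paper's proof: the same error decomposition obtained by inserting \eqref{eq: Reform_PS} into \eqref{eq: analytical sol}, the same two-term split via the triangle inequality, the same sub-multiplicative bounding of the variation term (the paper factors $\|\mathcal{K}_{i,T}^{-1}\|_2\,\|\Gamma_{i,T}^{\top}\|_2$ separately rather than through $\nu_2\mathcal{K}_{i,T}^{-1}\mathcal{M}_{i,T}\mathcal{K}_{i,T}^{-1}\nu_2^{\top}$, but the outcome is identical), and the same explicit $2\times 2$ inverse for the regularization term. The trace/determinant estimates you allude to for $\mathcal{M}_{i,T}$ are packaged in the paper as a short preliminary lemma.

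On your ``main obstacle'': you have put your finger on precisely the step where the paper is least explicit. The paper's argument writes the chain
\[
\mathcal{K}_{i,T}\;\succeq\;\bar{\lambda}_{\Sigma_{\Omega_{i,T}}}^{-1}\mathcal{M}_{i,T}+\eta\,\nu_1^{\top}\nu_1\;\succeq\;\bar{\lambda}_{\Sigma_{\Omega_{i,T}}}^{-1}\underline{\lambda}_{\mathcal{M}_{i,T}}\,\mathbf{I}_2+\eta\,\nu_1^{\top}\nu_1
\]
and then asserts \eqref{eq: KiT_ineq}. But the rightmost matrix is $\mathrm{diag}\bigl(\underline{\lambda}_{\mathcal{M}_{i,T}}/\bar{\lambda}_{\Sigma_{\Omega_{i,T}}}+\eta,\;\underline{\lambda}_{\mathcal{M}_{i,T}}/\bar{\lambda}_{\Sigma_{\Omega_{i,T}}}\bigr)$, whose smallest eigenvalue does \emph{not} contain the $\eta$ term---exactly the Weyl-type loss you anticipated. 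So the paper does not carry out the finer determinant/trace analysis you propose; it simply states the bound. If you pursue your route, work directly with the $2\times2$ matrix $\bar{\lambda}_{\Sigma_{\Omega_{i,T}}}^{-1}\mathcal{M}_{i,T}+\eta\,\nu_1^{\top}\nu_1$ before passing to the scalar lower bound $\underline{\lambda}_{\mathcal{M}_{i,T}}\mathbf{I}_2$, and be prepared for the possibility that the constant in \eqref{eq: error bound} is what the paper obtains rather than what a fully rigorous step-by-step bound yields.
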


\begin{proof}
    The proof is relegated to Appendix~\ref{app: B}.
\end{proof}

The bound in~\eqref{eq: error bound} holds for each time instant $k$ after power line fault detection, and for simplicity, $k$ is omitted here. 
Let us elaborate further on how different factors influence the bound:
\begin{enumerate}
    \item \textbf{Ill-conditioning issue.} When~$V_i(k)$ approaches a constant value, $\underline{\lambda}_{\mathcal{M}_{i,T}}$ in the denominator of~\eqref{eq: error bound} tends to zero, and thus the bound becomes infinite without the regularized term in~\eqref{eq: QP}, i.e.,~$\eta = 0$.
    This makes the estimation problem ill-conditioned. 
    
    \item \textbf{Variation terms $\Delta \boldsymbol{f}_{I_i,T-1}$, $\Delta \boldsymbol{P}_{i}$, and the noise level.} The tightness of the bound is mainly affected by $\Delta \boldsymbol{f}_{I_i,T-1}$, $\Delta \boldsymbol{P}_{i}$, and $\bar{\lambda}_{\Sigma_{\Omega_{i,T}}}$. 
    Particularly, when the faulty current is constant (or converges to a constant) and the load remains unchanged, the bound is tight. In contrast, it becomes conservative when the faulty current continues to evolve or when step load changes occur.  The noise level affects the bound through ~$\bar{\lambda}_{\Sigma_{\Omega_{i,T}}}$. 
    Therefore, a larger~$\eta$ is generally needed to maintain a relatively tight bound when the noise is smaller, and to mitigate the effects of non-zero $\Delta \boldsymbol{f}_{I_i,T-1}$ and $\Delta \boldsymbol{P}_{i}$.
    
    \item \textbf{Priori estimation result~$\hat{P}_{i \_}$.} 
    The regularization term in~\eqref{eq: QP} enforces $P_{i}$ to stay close to its prior estimate. 
    Initially, $|P_{i} - \hat{P}_{i\_}|$ is small as the estimation result of~$P_{i}$ from differentiation phase~\eqref{eq: Pli_est} is unbiased. 
    However, step changes in~$P_{i}$ during the estimation phase will increase the estimation error, which is explicitly reflected in the last term of~\eqref{eq: error bound}.
    Moreover, as~$\eta$ approaches infinity, the error bound converges to the value determined by~$|P_{i} - \hat{P}_{i\_}|$.
\end{enumerate}

\begin{Rem}[Selection of $\eta$]
The selection of~$\eta$ mainly depends on the ill-conditioning of~$\Gamma_{i,T}$, the noise level characterized by $\Sigma_{\Omega_{i,T}}$, and the bias--robustness trade-off. In general, a larger~$\eta$ is needed when $\Gamma_{i,T}$ is more ill-conditioned or when the data-fitting term is weighted more heavily. Increasing~$\eta$ improves robustness to variations in $P_i$ and $f_{I_i}$, but may introduce bias if the prior load estimate is inaccurate. We show in the simulation that increasing $\eta$ can reduce the estimation error.
\end{Rem}


\begin{Rem}[Applicability range of faults and possible extension]
    Except for the actuator and power line faults, the proposed diagnosis framework also applies to other faults that can be modeled as additive unknown inputs with known structural directions that do not lie in the subspace spanned by the considered disturbances, such as sensor faults and parameter faults. 
    However, the current design does not explicitly isolate these additional fault types. Fault isolation could be achieved by introducing additional diagnosis filters with structured fault sensitivity.
\end{Rem}

\begin{Rem}[Complexity analysis]
    Although the faulty line current estimation algorithm involves multiple computational steps, its complexity is polynomial in the filter order~$d_{\mathcal{N}}$, the input dimension~$n_y$, and the projected residual dimension~$n_{\Upsilon}$, and it does not require iterative online optimization. Consequently, the proposed method incurs a relatively low online computational burden. A detailed analysis of the computational complexity is provided in the Appendix.
    To further demonstrate the real-time feasibility of the proposed algorithm, we also evaluate its average execution time in simulation.
\end{Rem}

\section{Simulation Results}\label{sec: 5}
In this section, the effectiveness of the proposed diagnosis scheme is validated on the DC microgrid depicted in Fig.~\ref{fig: DCMG}~\cite{sadabadi2023resilient}.  
All simulations are performed using MATLAB R2022b on a workstation equipped with an Intel i7-13800H processor (2.50 GHz) and 32 GB RAM. The optimization problems involved in the proposed method are solved using the YALMIP toolbox~\cite{lofberg2004yalmip}.
The parameters of the microgrid are provided in Table \ref{tab:init}, with the reference voltages set as $V^*_1=48$~V, $V^*_2=48.1$~V, and $V^*_3 = 47.5$~V.
Suppose that the microgrid is initially working under normal conditions. 
Then, faults might occur when the system is at or close to steady state.
The initial conditions are thus set as 
$x_1(0) = [47.80, 6.93, 11.19]^{\top}, x_2(0) = [48.10, 4.76, 15.61]^{\top}, \text{and} ~x_3(0) = [47.50, -4.29, 15.02]^{\top}.$
Additionally, since a discretization approach is employed to estimate the faulty current $f_{I,i}$, all simulation results are presented using sampling data to ensure consistency, with a sampling time of $t_s = 1\times10^{-5}$ s.
Based on the above setup, the diagnosis performance of $\mathcal{D}_{a,i}$ and $\mathcal{D}_{l,i}$ is evaluated separately in the subsequent subsections.

\begin{table}[t]
\begin{small}
\begin{center}
\caption{\small Parameters of the DC microgrid system.} \label{tab:init}   
        \begin{tabular}{ cc |cc}
        \hline 
        Name &Values &Name &Values\\
        \hline
        $R_{t,1}$  &$0.2 ~\si{\ohm}$     &$L_{t,1}$ &$1.8$ mH\\
        $R_{t,2}$  &$0.3 ~\si{\ohm}$     &$L_{t,2}$ &$2$ mH \\
        $R_{t,3}$  &$0.1 ~\si{\ohm}$     &$L_{t,3}$ &$2.2$ mH\\   
        $C_{t,1}$  &$2.21$ mF            &$R_1$  &0.05 $\si{\ohm}$\\
        $C_{t,2}$  &$1.9$ mF             &$R_2$  &0.07 $\si{\ohm}$\\
        $C_{t,3}$  &$1.7$ mF             &$L_1$  &2.1 $\mu$F\\
        $K_1$   &$[-15 ~-2 ~70]^{\top}$   &$L_2$  &1.8 $\mu$F\\
        $K_2$   &$[-15 ~-2 ~50]^{\top}$   &$K_3$   &$[-15 ~-2 ~50]^{\top}$  \\
        \hline
        \end{tabular}
  \end{center}
\end{small}
\end{table}

\subsection{Actuator Fault Diagnosis Results}
To estimate actuator faults, an estimator $\mathcal{D}_{a,i}$ for each DG unit is designed by utilizing the approach proposed in Section~\ref{sec: 3}. 
To demonstrate the advantages of the estimation filter constructed within the DAE framework, we compare it with a sliding mode observer (SMO)~\cite{edwards2000sliding} and an adaptive observer (AO)~\cite{wang2002actuator}.
Taking~$\mathcal{D}_{a,1}$ as an example, the design process is outlined as follows:

\textit{Step 1}. Determine the degree and denominator of the filter in~\eqref{eq: filter}. For this example, the degree is $d_N = 2$, and the denominator is chosen as $a(p) = (0.5+p)(0.1+p)(1+p)$.

\textit{Step 2}. Obtain the parameters of the filter's numerator by solving the optimization problem~\eqref{eq: opt fa} and construct the filter in~\eqref{eq: filter}.

Suppose that an incipient fault~$f_{a,1}$ occurs in the voltage controller of DG $1$ and a step fault~$f_{a,3}$ occurs in the voltage controller of DG $3$. 
The incipient fault~$f_{a,1}(t) = 0$ for $t \leq 100$~ms, and for~$t \geq 100$~ms, $f_{a,1}(t)$ evolves as:
\begin{align*}
    \dot{f}_{a,1}(t) = -\beta_a f_{a,1}(t) + \beta_a \bar{f}_{a,1},
\end{align*}
where $\beta_a = 8 \times 10^{-9}$ determines the changing rate and stealthiness of~$f_{a,1}$ and $\bar{f}_{a,1} = 5$ is the final value of $f_{a,1}$.
The step fault~$f_{a,3}$ remains $0$ until~$t=70$~ms, after which it becomes~$2$. 

To demonstrate that the designed estimator can decouple the effects of load changes and power line faults, we consider a step change in~$P_{1}$, i.e., $P_1$ steps from $100$ to $120$ at $40$~ms, as well as the occurrence of a power line fault between DG~$1$ and DG~$2$, i.e., $f_{L,1}$ steps from 0 to 0.1 at 80 ms.
The power load~$P_{2}$ changes from $125$~W to $110$~W at $120$~ms, and~$P_{3}(t) = 130$~W is constant.
Moreover, noise $\delta_i$ and $\zeta_i$ for $i=\{1,2,3\}$, $\epsilon_k$ for $k=\{1,2\}$ are zero mean white noise with standard deviations $0.01$, $0.001$, and $0.01$, respectively.

The actuator fault estimation results are presented in Fig.~\ref{fig: VI4Fa} and Fig.~\ref{fig: FaResult}.
Specifically, Fig.~\ref{fig: VI4Fa} depicts the voltage and current in response to load changes and faults. 
As shown, the measured signals $V_i$ and $I_{t,i}$ do not provide clear indications of fault occurrence.
Fig.~\ref{fig: FaResult} presents the diagnosis results from the local actuator fault estimators.
For brevity, we focus on the results of~$\mathcal{D}_{a,1}$ depicted in Fig.~\ref{fig: FaResult}~(a), where the actuator fault $f_{a,1}$ and its estimates generated by~\eqref{eq: filter}, AO, and SMO are compared. 
The results show that the proposed estimate~$\hat{f}_{a,1}$ accurately follows the true fault signal~$f_{a,1}$ and remains unaffected by load changes, power line faults, and neighboring DG unit dynamics. 
By contrast, both AO and SMO methods respond more slowly after the fault occurs, while the SMO-based approach also exhibits unavoidable chattering in the estimation results.

\begin{figure}[t]
    \centering
    \includegraphics[width=0.4\linewidth]{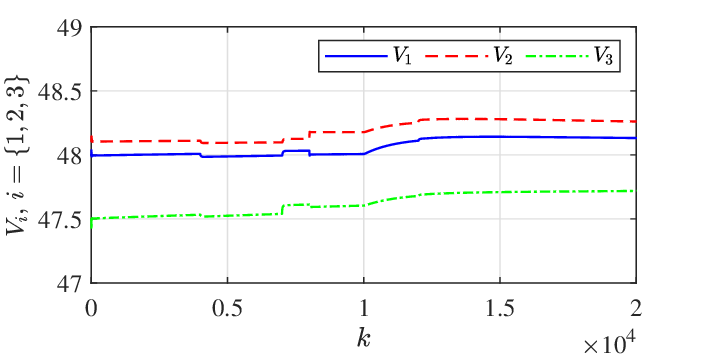}
    \includegraphics[width=0.4\linewidth]{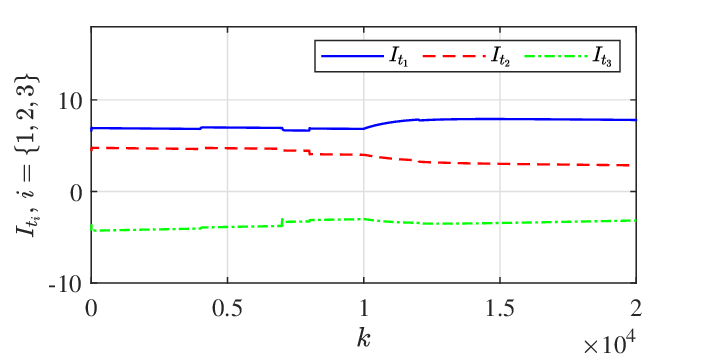}
    \caption{\small Dynamic response of the DC microgrid when considering the actuator fault.}
    \label{fig: VI4Fa}
\end{figure}


\begin{figure}[t]
    \centering
    \includegraphics[width=0.95\linewidth]{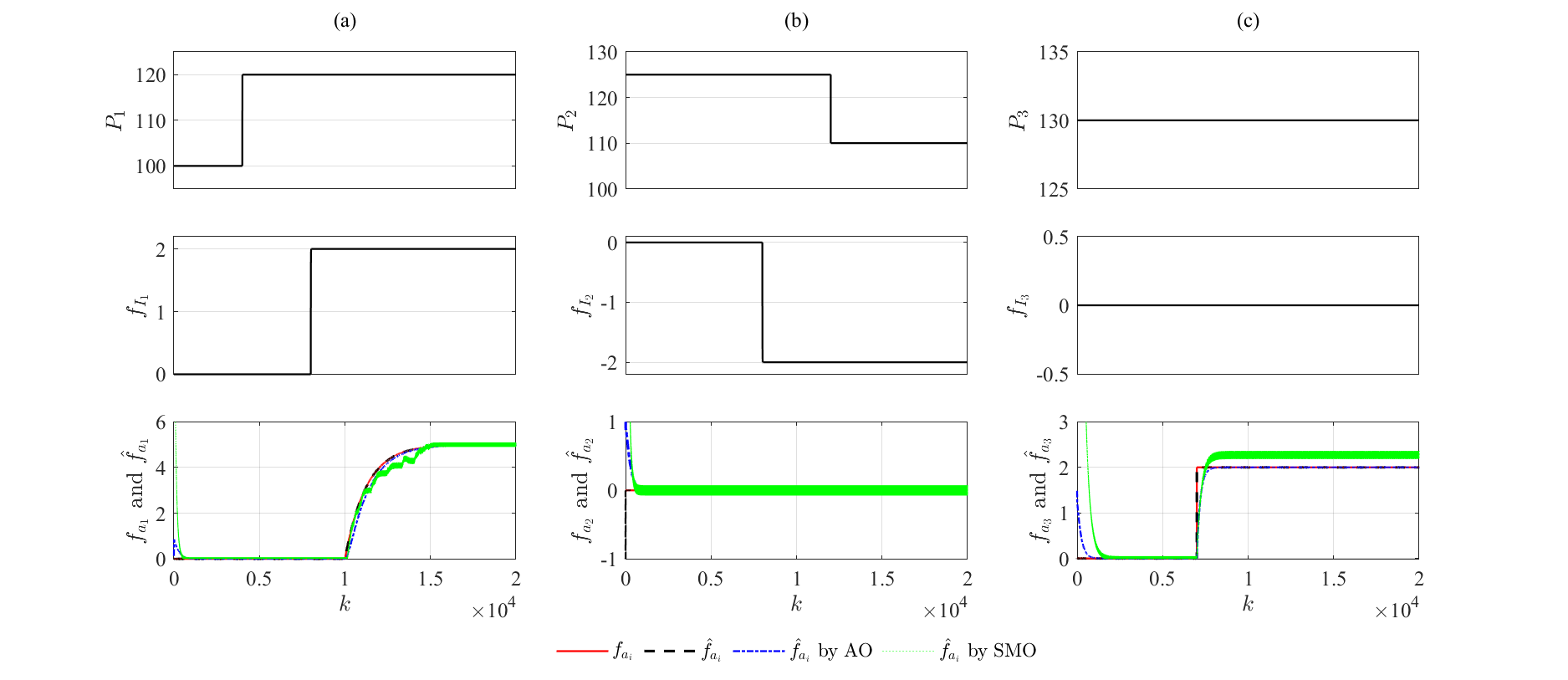}
    \caption{\small Diagnosis of the actuator fault: (a) results of~$\mathcal{D}_{a,1}$, (b) results of~$\mathcal{D}_{a,2}$, and (c) results of~$\mathcal{D}_{a,3}$.}
    \label{fig: FaResult}
\end{figure}

\subsection{Power Line Fault Diagnosis Results}
For the estimation of faulty line current $f_{I,i}$ induced by power line faults, we introduce the block $\mathcal{D}_{l,i}$ for each DG unit, whose structure is illustrated in Fig.~\ref{fig: PF est}. 
The proposed estimation scheme is compared with the multiple fault estimation method developed in~\cite{van2022multiple}.
Moreover, to better capture the behavior of practical microgrid systems, we construct a high-fidelity microgrid model in Simulink/Simscape Electrical using electrical and power electronics components with dynamic characteristics. The proposed diagnosis method is then validated on this model to demonstrate its potential practical applicability.
Taking $\mathcal{D}_{l,1}$ as an example, its design process is summarized as follows:

\textit{Step 1.} Design Pre-filter 1 depicted by~\eqref{eq: filter fl} in Section~\ref{sec: 4}.1.
We choose the degree $d_{\mathcal{N}} = 2$ and the denominator $a(p) = (0.5+p)(0.1+p)(1+p)$. 
The coefficients of $\mathcal{N}_1(p)$ are determined using the approach outlined in Proposition~\ref{prop: fa est}. 

\textit{Step 2.} Design line current estimators described by~\eqref{eq: Ik_est} in Section~\ref{sec: 4}.2 for all power lines connected to DG $1$. 

\textit{Step 3.} Design Pre-filter $2$ described by~\eqref{eq: prefilter 2} in Section~\ref{sec: 4}.3 and generate the residual~$\tilde{r}_{1}$ used for faulty line current estimation. 

\textit{Step 4.} Design the faulty line current estimator for $f_{I,1}$ based on~\eqref{eq: Reform_PS}-\eqref{eq: analytical sol}. 
Discretize the system with a sampling time of $t_s = 1\times10^{-5}$ s and choose a sliding window length of $T=20$. 
The estimator \eqref{eq: analytical sol} is activated when a power line fault is detected, which produces the estimate of~$\bar{f}_{I_1,T-1}$.  

To verify the performance of the designed faulty current estimator~$\mathcal{D}_{l,i}$, we consider two types of line faults: the incipient fault and the pole-to-ground short-circuit fault. 

\noindent\textit{\bf Case I: Incipient fault on power line $1$.}
In the first scenario, we consider an incipient fault occurring on the power line between DG~$1$ and DG~$2$.
The incipient power line fault $f_{L,1}(t)=0$ for $t \leq 80$ ms, and for~$t>80$ ms, its dynamics become:
\begin{align*}
     \dot{f}_{L,1}(t) = -\beta_l f_{L,1}(t) + \beta_l \bar{f}_{L,1}, 
\end{align*}
where $\beta_l = 4 \times 10^{-9}$ and $\bar{f}_{L,1} = 1$.
To further evaluate the decoupling capability of $\mathcal{D}_{l,1}$, an actuator fault~$f_{a,1}$ is considered in DG~$1$, which is modeled as a step signal, i.e., $f_{a,1}(t)=0$ for $t \leq 60$~ms and $f_{a,1}(t)=2$ for $t > 60$~ms.
Step changes in~$P_{1}$ and~$P_{3}$ are considered, i.e.,~$P_{1}(t)$ changes from $100$~W to~$120$~W at~$t = 40$~ms and $P_{3}(t)$ changes from $140$~W to $130$~W at~$t = 120$~ms. 
The load of DG $2$ remains constant at $P_{2}(t)=110$~W. 

\begin{figure}[t]
    \centering
    \includegraphics[width=0.4\linewidth]{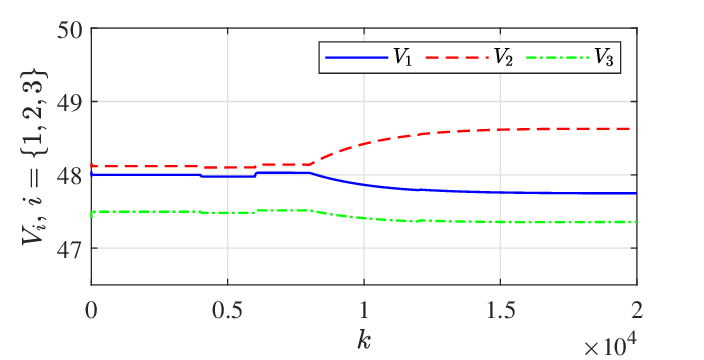}
    \includegraphics[width=0.4\linewidth]{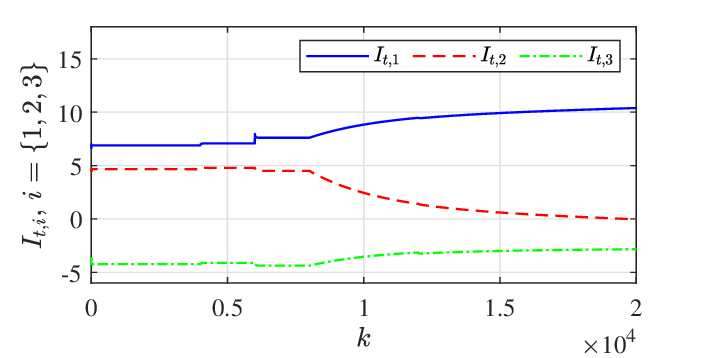}
    \caption{\small Dynamic response of the DC microgrid when considering the incipient power line fault.}
    \label{fig: VI4SFl}
\end{figure}

\begin{figure}[t]
    \centering
    \includegraphics[width=0.95\linewidth]{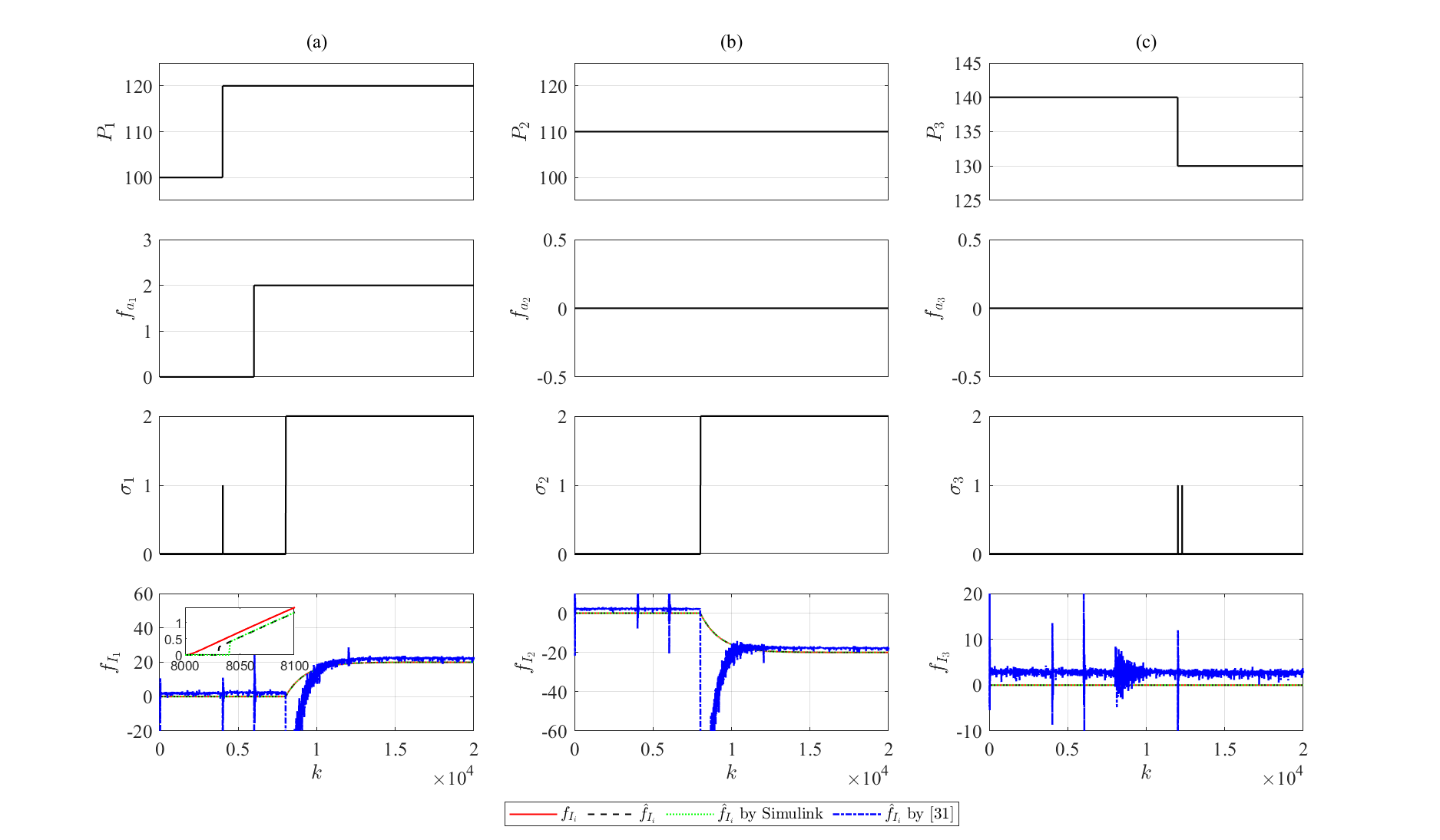}
    \caption{\small Diagnosis of the incipient power line fault: (a) results of~$\mathcal{D}_{l,1}$, (b) results of~$\mathcal{D}_{l,2}$, and (c) results of~$\mathcal{D}_{l,3}$.}
    \label{fig: FlResult}
\end{figure}

Figs.~\ref{fig: VI4SFl}-\ref{fig: PFB4SFl} present the diagnosis results of the incipient power line fault $f_{L,1}$. 
Particularly, Fig.~\ref{fig: VI4SFl} illustrates the voltage and current variations of each DG unit under load changes and faults.
Due to the stealthiness of the fault, the occurrence of~$f_{L,1}$ cannot be discerned directly from these measurements.
Fig.~\ref{fig: FlResult} depicts the diagnosis results of the faulty line current estimator of each DG unit.
For instance, Fig.~\ref{fig: FlResult}(a) shows the results of $\mathcal{D}_{l,1}$.
At $t = 40$~ms, the step change in $P_{1}$ is detected according to the diagnosis rules in~\eqref{eq: disgnosis_rules}, as the status indicator~$\sigma_1$ briefly switches to~$1$ with a duration shorter than~$T$.
At $t = 80$ ms, the fault~$f_{L,1}$ happens and~$\sigma_1$ becomes $2$, meaning the successful detection of the power line fault~$f_{L,1}$.

The bottom sub-figure of Fig.~\ref{fig: FlResult}(a) compares the estimates of the faulty current~$f_{I,1}$ obtained using the proposed DAE-based approach, the Simulink model, and the estimation approach in~\cite{van2022multiple}.
As shown, the proposed estimation approach tracks the faulty current well, whereas the method in~\cite{van2022multiple} exhibits significant estimation errors and oscillations due to near-singularity.
The Simulink-based results are consistent with those obtained from the MATLAB implementation, further supporting the practical applicability of the proposed diagnosis framework.
It is also worth noting that the actuator fault~$f_{a,1}$ and dynamics of other DG units have no effects on the diagnosis results of $\mathcal{D}_{l,1}$.
Additionally, the power line fault~$f_{L,1}$ is also detected and estimated by~$\mathcal{D}_{l,2}$, as shown in Fig.~\ref{fig: FlResult}~(b). 
Fig.~\ref{fig: PFB4SFl} presents the value $|\mathbf{E}[\bar{f}_{I_1,T-1}-\hat{f}_{I,1}]|$ and its upper bound given in~\eqref{eq: error bound}. 
As the faulty current gradually converges to a constant, the upper bound of the error also decreases, which aligns well with the theoretical analysis.
Fig.~\ref{fig: DifferentEta} further illustrates the effect of increasing~$\eta$ in reducing the estimation error.

\begin{figure}[t]
    \centering
    \includegraphics[width=0.4\linewidth]{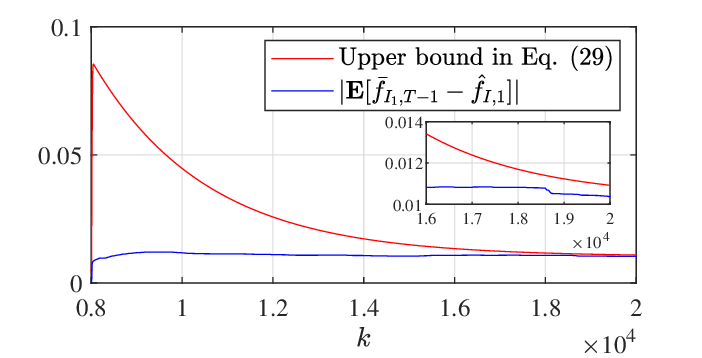}
    \includegraphics[width=0.4\linewidth]{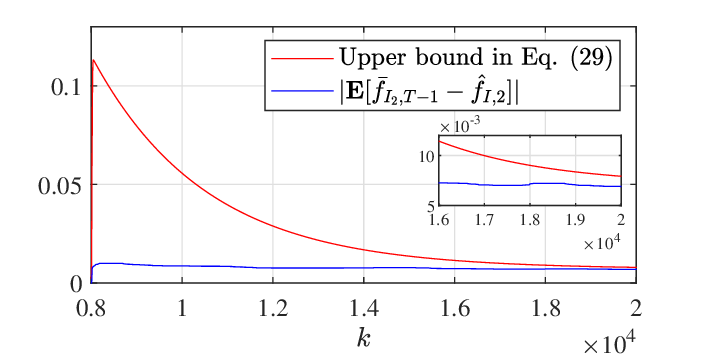}
    \caption{\small Performance bounds derived by Theorem~\ref{Thm} under the incipient power line fault.}
    \label{fig: PFB4SFl}
\end{figure}

\begin{figure}[t]
        \centering
        \includegraphics[width=0.4\linewidth]{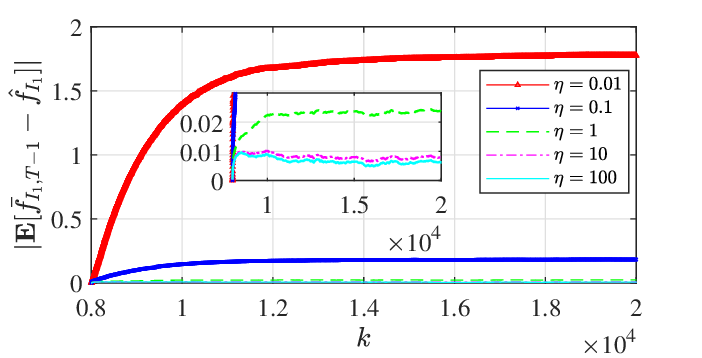}
        \caption{\small Expectation of the fault line current estimation error under different $\eta$.}
        \label{fig: DifferentEta}
\end{figure}

\noindent \textit{\bf Case II: Faults on both power lines $1$ and $2$.}
In the second scenario, we consider a more complex case, where the pole-to-ground short-circuit fault~$f_{L,1}$ happens on the power line between DG~$1$ and DG~$2$, along with an incipient fault~$f_{L,2}$ on the power line between DG~$1$ and DG~$3$. 
The equivalent circuit of the short-circuit fault is depicted in Fig.~\ref{fig: short-circuit fault}, with the dynamics described by:
\begin{align}
    \left\{
    \begin{array}{l}
       \dot{I}_{k}(t)=-\frac{R_{k,1}}{L_{k,1}} {I}_{k}(t)+\frac{1}{L_{k,1}}(V_i(t)-V_f(t))    \\
       \dot{\tilde{I}}_k(t) = -\frac{R_{k,2}}{L_{k,2}} \tilde{I}_{k}(t)+\frac{1}{L_{k,2}}(V_f(t)-V_j(t)) \\
       V_f(t) = (I_k(t) - \tilde{I}_k(t))R_f
    \end{array}
    \right. ,
\end{align}
where $k=1$ and $R_f$ is a low-resistance grounding resistor.  
Other inductance and resistance values depend on the short-circuit fault location. 
The dynamics of $I_k$ can be rewritten in the form of~\eqref{line}, where the power line fault~$f_{L,k}$ becomes
\begin{align*}
    f_{L,k}(t) = \left( \frac{R_k}{L_k} - \frac{R_{k,1}}{L_{k,1}} \right)I_k(t) +  \frac{1}{L_k}V_j(t) - \frac{1}{L_{k,1}}V_f(t) .
\end{align*}
The parameters in the circuit are
$R_{k,1}=R_{k,2}=0.025 ~\si{\ohm}$, $L_{k,1}=L_{k,2}=1~\mu\text{H}$, and $R_f = 0.01~\si{\ohm}$.
The short-circuit fault happens at~$t=100$~ms.
The dynamics of the incipient fault~$f_{L,2}$ follow:~$\dot{f}_{L,2}(t) = -\beta_l f_{L,2}(t) + \beta_l \bar{f}_{L,2}$, with~$\beta_l = 4 \times 10^{-9}$, $\bar{f}_{L,2}=1$, and the occurrence time~$t=50$~ms.
An actuator fault~$f_{a,1}$ is considered in DG unit $1$, which is described by a step signal and changes from $0$ to $0.6$ at~$t=100$~ms.
Additionally, to assess the impact of load variations on faulty current estimation, we introduce a step change in~$P_{2}$, which decreases from $115$~W to $105$~W at $t = 150$ ms.

\begin{figure}[t]
    \centering
    \includegraphics[width=0.4\linewidth]{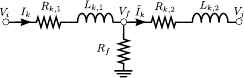}
    \caption{\small Illustration of the short-circuit fault.}
    \label{fig: short-circuit fault}
\end{figure}


\begin{figure}[t]
    \centering
    \includegraphics[width=0.4\linewidth]{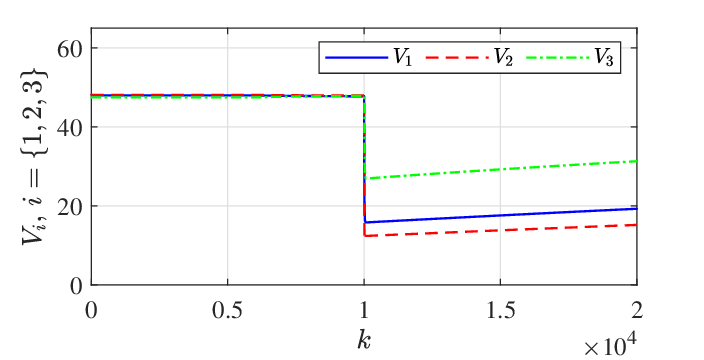}
    \includegraphics[width=0.4\linewidth]{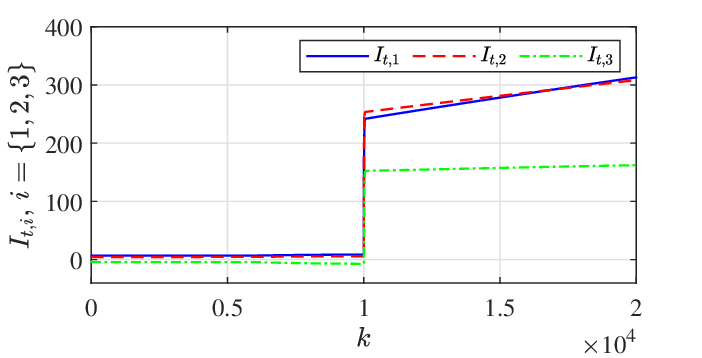}
    \caption{\small Dynamic response of the DC microgrid when considering the incipient and short-circuit line faults.}
    \label{fig: VI4_SCF}
\end{figure}

\begin{figure}[t]
    \centering
    \includegraphics[width=0.95\linewidth]{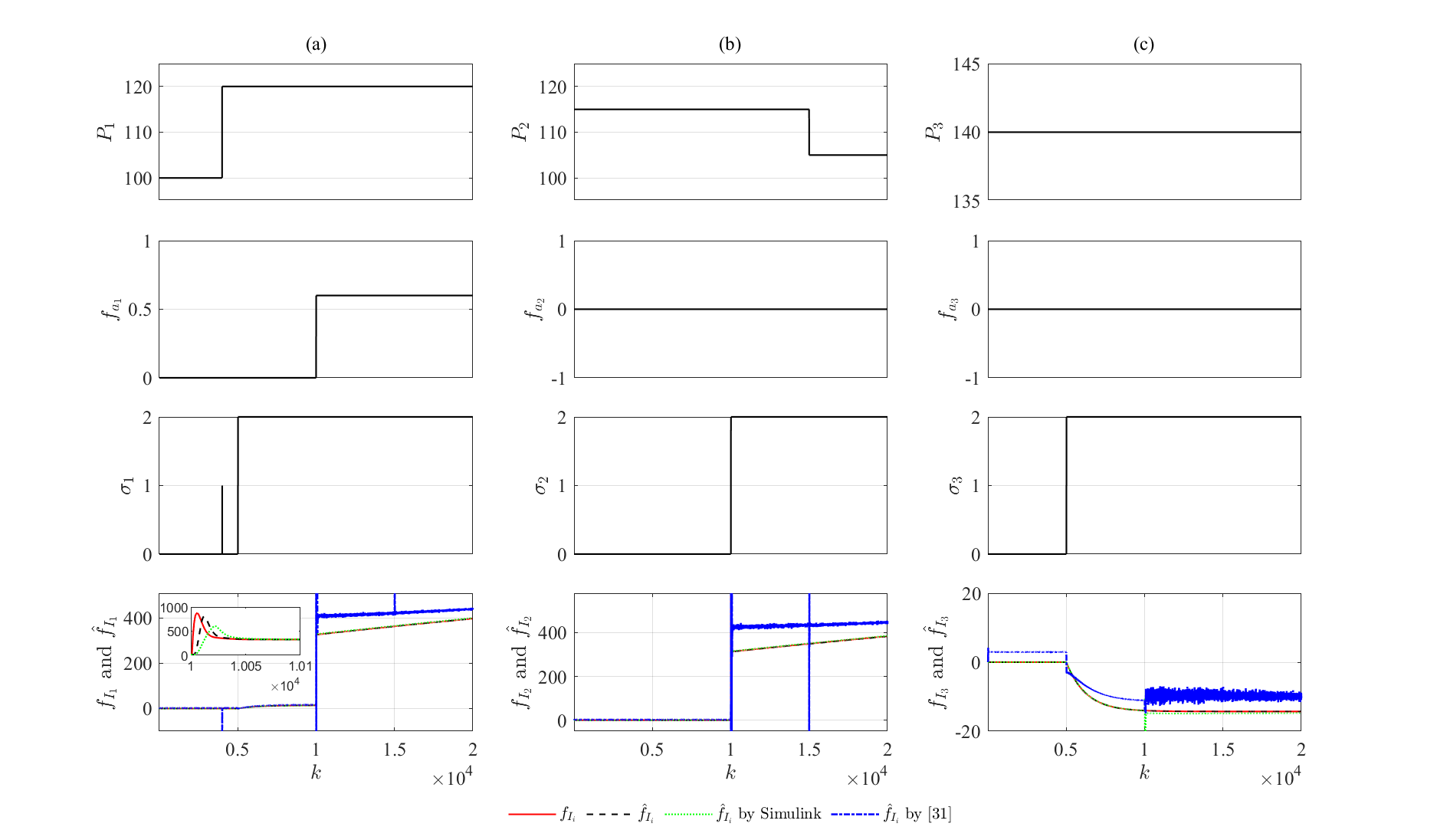}
    \caption{\small Diagnosis of the incipient and short-circuit line faults: (a) results of~$\mathcal{D}_{l,1}$, (b) results of~$\mathcal{D}_{l,2}$, and (c) results of~$\mathcal{D}_{l,3}$.}
    \label{fig: FlResult_SCF}
\end{figure}

\begin{figure*}[htbp]
    \centering
    \includegraphics[width=0.95\textwidth]{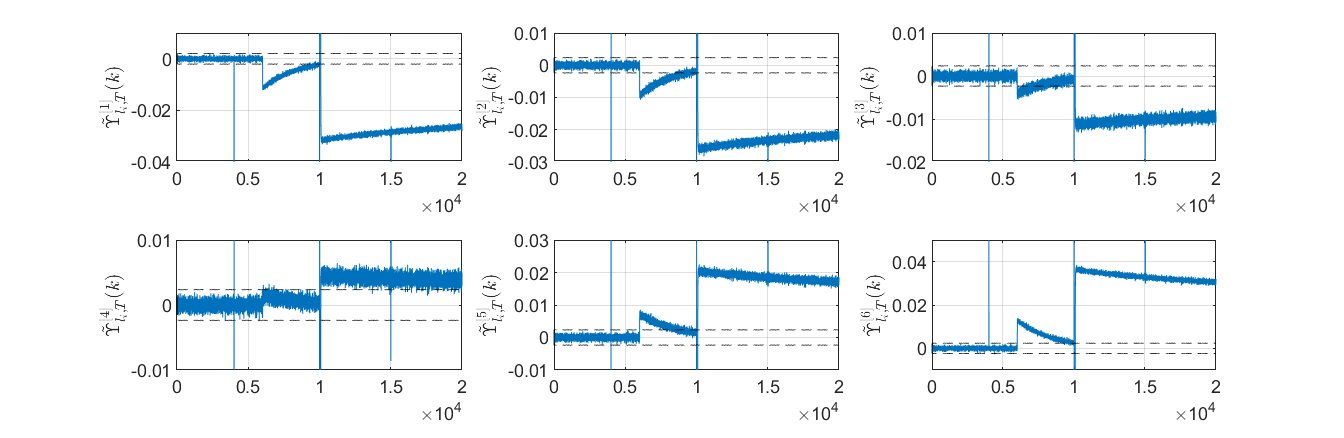}
    \caption{\small Residual $\tilde{\Upsilon}_{1,T}$ for isolation of the incipient and short-circuit line faults.}
    \label{fig: r_tilde4_SCF_DG1}
\end{figure*}

The simulation results of the second scenario are presented in Figs.~\ref{fig: VI4_SCF}-\ref{fig: PFB4_SCF}. 
In particular, Fig.~\ref{fig: VI4_SCF} shows the voltage and current variations of the microgrid under load changes and system faults.
When the incipient power line fault~$f_{L,2}$ happens, its effect on voltages and currents is unobvious from the measurements. 
In contrast, the short-circuit fault~$f_{L,1}$ at~$t=100$~ms introduces a low-resistance~$R_f$, causing sharp voltage and current increases in the microgrid, as shown in Fig.~\ref{fig: VI4_SCF}. 
As a result, the short-circuit fault is easily detectable from measurement signals. 
Nevertheless, the proposed diagnosis approach can provide more detailed information about these faults.

The diagnosis results for the power line faults~$f_{L,1}$ and~$f_{L,2}$ are shown in Fig.~\ref{fig: FlResult_SCF}. Specifically, at $t=40$ ms, the status indicator~$\sigma_1$ briefly switches to~$1$, indicating the detection of the step change in $P_{1}$.
At $t=60$ ms, the incipient fault~$f_{L,2}$ happens on the power line between DG units~$1$ and $3$, then the indicator signals~$\sigma_1$ and~$\sigma_3$ switch to $2$. This means that~$f_{L,2}$ is detected by both $\mathcal{D}_{l,1}$ and $\mathcal{D}_{l,3}$.
At $t=100$ ms, the short-circuit fault~$f_{L,1}$ between DG units $1$ and $2$ occurs, and the indicator signal~$\sigma_2$ becomes $2$, indicating successful detection of~$f_{L,1}$.  
Note that the detection process is further supported by the residual behavior in Fig.~\ref{fig: r_tilde4_SCF_DG1}. Due to the space limitation, we only show the residual $\tilde{\Upsilon}_{1,T}$ here.

The bottom row of~Fig.~\ref{fig: FlResult_SCF} shows estimates of faulty line currents in each DG unit caused by~$f_{L,1}$ and~$f_{L,2}$. 
It can be seen that, faulty currents~$f_{I,1}$ and $f_{I,3}$ caused by the incipient fault~$f_{L,2}$ can be accurately estimated by~$\mathcal{D}_{l,1}$ and $\mathcal{D}_{l,3}$, respectively.
After the short-circuit fault~$f_{L,1}$ happens at~$t=100$ ms, the faulty current~$f_{I,2}$ is estimated by $\mathcal{D}_{l,2}$, while~$\mathcal{D}_{l,1}$ provides a combined estimate of currents from both~$f_{L,1}$ and~$f_{L,2}$.
The comparison with the estimation method in~\cite{van2022multiple} also shows the superiority of the proposed approach.
Finally, Fig.~\ref{fig: PFB4_SCF} validates the developed upper bound on the estimation error in this scenario.

\begin{figure*}[htbp]
    \centering
    \includegraphics[width=0.32\linewidth]{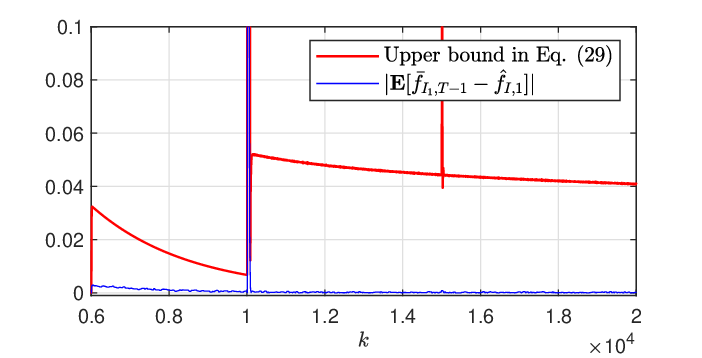}   
    \hspace{-10pt}
    \includegraphics[width=0.32\linewidth]{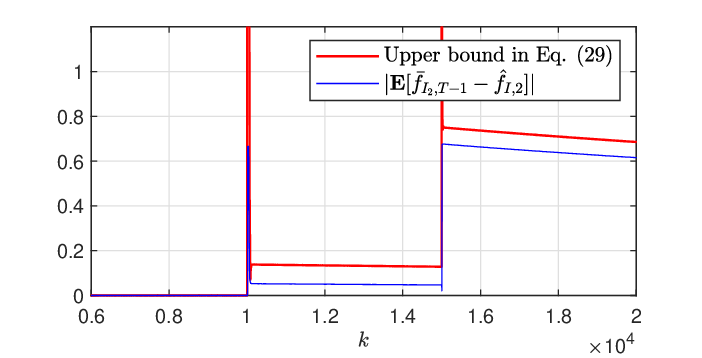}  
    \hspace{-10pt}
    \includegraphics[width=0.32\linewidth]{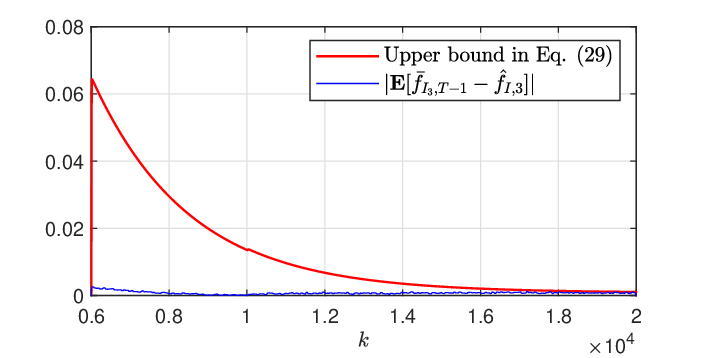}
    \caption{\small Performance bounds derived by Theorem~\ref{Thm} under the incipient and short-circuit power line faults.}
    \label{fig: PFB4_SCF}
\end{figure*}

\subsection{Real-time Feasibility Analysis}
To evaluate the real-time feasibility of the proposed algorithm, we measured the execution time of one sampling step. 
The algorithm was executed for $2\times10^4$ iterations on the above computational platform.
The average execution time per step is $7.8941\times10^{-6}$~s, which is smaller than the sampling time $t_s = 1\times10^{-5}$~s. 
This result indicates that the proposed algorithm can be executed within one sampling period and is therefore suitable for real-time implementation.
It is worth noting that the reported runtime is obtained from a straightforward MATLAB implementation without dedicated code-level optimization. 
Further reduction in execution time is expected when the algorithm is implemented using optimized or compiled code.

\section{Conclusions}\label{sec: 6}
This paper presents a distributed diagnosis scheme for the detection and estimation of actuator and power line faults in DC microgrids, under the effects of unknown power loads and stochastic noise.
A key contribution is the analysis of the coupling effect between power load variations and faulty line currents, an aspect previously underexplored in DC microgrids.
To address this challenge, we introduce a novel differentiate-before-estimate strategy that enhances fault detection accuracy. 
Future work includes: (i) extending the proposed approach to more general load conditions, such as nonlinear ZIP (constant-impedance, constant-current, and constant-power) loads, where active signal injection may be incorporated to improve the identifiability of a broader class of loads; (ii) accounting for mismatches between local sampling rates and inter-unit communication rates; and (iii) considering additional fault types, such as sensor and parameter faults.

\appendix
{\appendices
\section{}\label{app: A} 
\begin{proof}[Proof of Proposition~\ref{prop: diff_P_fl}]
     In the first part of the proof, we show that the effect of a step load change on $\tilde{\Upsilon}_{i,T}$ will vanish after~$T$ steps. 
    Suppose that $P_{i}$ becomes $\bar{P}_{i} = P_{i} + \Delta P_{i}$ at some time instant $k_0 \in [k-T+1,k-1]$.
    Then, $\BDViT$ can be divided into two parts accordingly, i.e.,
    \begin{align*}
    \BDViT(k-1) = \begin{bmatrix}
        \boldsymbol{\mathcal{V}}_{i,k_0-(k-T+1)}(k_0-1) \\ \boldsymbol{\mathcal{V}}_{i,k-k_0}(k-1)
        \end{bmatrix}.
    \end{align*}
The parity space equation~\eqref{eq: parity-space relation} in the presence of the step load change can be written as
\begin{align*}
\Upsilon_{i,T} = 
       \Psi_{i,T} P_{i} 
    + \begin{bmatrix}
        \mathbf{0} \\ \boldsymbol{\mathcal{V}}_{i,k-k_0}
    \end{bmatrix}\Delta P_{i} +  \Omega_{i,T}.
\end{align*}
Substituting $\Upsilon_{i,T}$ into~\eqref{eq: Pli_est} leads to
\begin{align*}
    \hat{P}_{i} = P_{i} + \Phi_{i,T}\begin{bmatrix}
        \mathbf{0} \\ \boldsymbol{\mathcal{V}}_{i,k-k_0}
    \end{bmatrix} \Delta P_{i} +  \Phi_{i,T} \Omega_{i,T},
\end{align*}
where $\Phi_{i,T}\Psi_{i,T}= {\bf I}$ is used here.
Then, based on~\eqref{eq: r_tilde}, the expectation of the estimation error $\mathbf{E}\left[\tilde{\Upsilon}_{i,T}\right]$ becomes
\begin{align*}
    \mathbf{E}\left[\tilde{\Upsilon}_{i,T}\right] = ({\bf I}_{n_{\Upsilon}} -\Psi_{i,T}  \Phi_{i,T}) \begin{bmatrix}
        \mathbf{0} \\ \boldsymbol{\mathcal{V}}_{i,k-k_0}
    \end{bmatrix} \Delta P_{i},
\end{align*}
which is no longer zero because of $\Delta P_{i}$.
Thus, entries of $\tilde{\Upsilon}_{i,T}$ can exceed the threshold interval when $k\in [k_0+1,k_0+T-2]$. 

For $k \geq k_0+T-1$, namely,~$T$ steps after the power load change, the estimation result in~\eqref{eq: Pli_est} becomes unbiased again because $\Upsilon_{i,T} = \Psi_{i,T} \bar{P}_{i} + \Omega_{i,T}$.
Thus, entries of $\tilde{\Upsilon}_{i,T}$ will lie within the threshold interval with the probability greater than $1-1/\alpha^2$ according to Chebyshev's inequality. This completes the first part of the proof.

In the second part of the proof, to show the effects of line faults on $\tilde{\Upsilon}_{i,T}$, let us rewrite the expression of $\Upsilon_{i,T}$ in~\eqref{eq: parity-space relation} with line faults, which becomes
\begin{align*}
    \Upsilon_{i,T} =\Psi_{i,T}P_{i} + \mathcal{O}^{\bot}_{i,T}  \mathcal{Z}_{i_1,T} \BDfiT + \Omega_{i,T}.
\end{align*}
The residual $\tilde{\Upsilon}_{i,T}$ and its expected value are then given by 
\begin{align*}
    \tilde{\Upsilon}_{i,T}
    &=\Upsilon_{i,T} - \Psi_{i,T}\Phi_{i,T}\Upsilon_{i,T} 
    = ({\bf I}_{n_{\Upsilon}}-\Psi_{i,T} \Phi_{i,T}) ( \mathcal{O}^{\bot}_{i,T}  \mathcal{Z}_{i_1,T}\BDfiT+\Omega_{i,T}), \\
    \mathbf{E}[\tilde{\Upsilon}_{i,T}] &= ({\bf I}_{n_{\Upsilon}}-\Psi_{i,T} \Phi_{i,T})  \mathcal{O}^{\bot}_{i,T}  \mathcal{Z}_{i_1,T} \BDfiT \neq \mathbf{0}.
\end{align*}
Different from the effects of step load changes~$\Delta P_{i}$ that vanish after $T$ steps, the line faults can be distinguished from~$\Delta P_{i}$ if there exists $\kappa \in \{1,\dots,n_{\Upsilon}\}$ such that~$\tilde{\Upsilon}^{[\kappa]}_{i,T}(k) \notin \left[-\varepsilon^{[\kappa]}_{i}(k), \varepsilon^{[\kappa]}_{i}(k)\right]$ for at least $T$ consecutive steps. This completes the proof.
\end{proof}

\section{}\label{app: B}
To prove Theorem~\ref{Thm}, we introduce the following lemma.
\begin{Lem}[Eigenvalue bounds of a matrix]\label{lem: M_eig}
The matrix
\begin{align*}
    \mathcal{M}_{i,T}=
    \begin{bmatrix}
        \mathcal{M}^{[11]}_{i,T} &\mathcal{M}^{[12]}_{i,T}\\
        \mathcal{M}^{[21]}_{i,T} &\mathcal{M}^{[22]}_{i,T}
    \end{bmatrix}
    =\begin{bmatrix}
    \Psi^{\top}_{i,T} \Psi_{i,T}  &\Psi^{\top}_{i,T}\bar{\mathcal{Z}}_{i,T} \\
    \bar{\mathcal{Z}}^{\top}_{i,T} \Psi_{i,T} &\bar{\mathcal{Z}}^{\top}_{i,T} \bar{\mathcal{Z}}_{i,T}
\end{bmatrix}
\end{align*}
is bounded by
$\underline{\lambda}_{\mathcal{M}_{i,T}} \mathbf{I}_2 \preceq \mathcal{M}_{i,T} \preceq \bar{\lambda}_{\mathcal{M}_{i,T}} \mathbf{I}_2$.
\end{Lem}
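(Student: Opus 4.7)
The plan is to treat $\mathcal{M}_{i,T}$ as a $2\times 2$ Gram matrix and exploit the fact that for such a small matrix the two eigenvalues are algebraically tied to the trace and the determinant, which are exactly what appear in the closed-form expressions for $\bar{\lambda}_{\mathcal{M}_{i,T}}$ and $\underline{\lambda}_{\mathcal{M}_{i,T}}$.

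First I would observe that, since $\Psi_{i,T}$ and $\bar{\mathcal{Z}}_{i,T}$ are column vectors, one can write $\mathcal{M}_{i,T} = [\Psi_{i,T} ~ \bar{\mathcal{Z}}_{i,T}]^{\top}[\Psi_{i,T} ~ \bar{\mathcal{Z}}_{i,T}]$, which makes $\mathcal{M}_{i,T}$ symmetric positive semi-definite. Its eigenvalues $0 \leq \lambda_{\min}(\mathcal{M}_{i,T}) \leq \lambda_{\max}(\mathcal{M}_{i,T})$ are therefore non-negative and, by PSD, $\det(\mathcal{M}_{i,T}) \geq 0$. For a $2\times 2$ symmetric matrix the identities $\lambda_{\min}(\mathcal{M}_{i,T}) + \lambda_{\max}(\mathcal{M}_{i,T}) = \mathrm{tr}(\mathcal{M}_{i,T})$ and $\lambda_{\min}(\mathcal{M}_{i,T})\,\lambda_{\max}(\mathcal{M}_{i,T}) = \det(\mathcal{M}_{i,T})$ hold; a direct computation using the block entries of $\mathcal{M}_{i,T}$ then gives $\mathrm{tr}(\mathcal{M}_{i,T}) = \Psi^{\top}_{i,T}\Psi_{i,T} + \bar{\mathcal{Z}}^{\top}_{i,T}\bar{\mathcal{Z}}_{i,T} = \bar{\lambda}_{\mathcal{M}_{i,T}}$ and $\det(\mathcal{M}_{i,T}) = \Psi^{\top}_{i,T}\Psi_{i,T}\cdot\bar{\mathcal{Z}}^{\top}_{i,T}\bar{\mathcal{Z}}_{i,T} - \Psi^{\top}_{i,T}\bar{\mathcal{Z}}_{i,T}\cdot\bar{\mathcal{Z}}^{\top}_{i,T}\Psi_{i,T} = \underline{\lambda}_{\mathcal{M}_{i,T}}\cdot\bar{\lambda}_{\mathcal{M}_{i,T}}$, matching the definitions in Theorem~\ref{Thm}.

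The upper bound is then immediate: $\lambda_{\max}(\mathcal{M}_{i,T}) \leq \lambda_{\min}(\mathcal{M}_{i,T}) + \lambda_{\max}(\mathcal{M}_{i,T}) = \mathrm{tr}(\mathcal{M}_{i,T}) = \bar{\lambda}_{\mathcal{M}_{i,T}}$, which gives $\mathcal{M}_{i,T} \preceq \bar{\lambda}_{\mathcal{M}_{i,T}}\mathbf{I}_2$. For the lower bound, combining $\lambda_{\max}(\mathcal{M}_{i,T}) \leq \mathrm{tr}(\mathcal{M}_{i,T})$ with $\det(\mathcal{M}_{i,T}) \geq 0$ yields $\lambda_{\min}(\mathcal{M}_{i,T}) = \det(\mathcal{M}_{i,T})/\lambda_{\max}(\mathcal{M}_{i,T}) \geq \det(\mathcal{M}_{i,T})/\mathrm{tr}(\mathcal{M}_{i,T}) = \underline{\lambda}_{\mathcal{M}_{i,T}}$, i.e., $\mathcal{M}_{i,T} \succeq \underline{\lambda}_{\mathcal{M}_{i,T}}\mathbf{I}_2$.

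No step here is genuinely hard; the only subtlety to address is the degenerate case $\mathrm{tr}(\mathcal{M}_{i,T}) = 0$, which forces $\Psi_{i,T} = \bar{\mathcal{Z}}_{i,T} = 0$ and thus $\mathcal{M}_{i,T} = 0$, so both bounds hold trivially and the ratio $\det/\mathrm{tr}$ appearing in $\underline{\lambda}_{\mathcal{M}_{i,T}}$ can be handled by a separate line of argument. Apart from this minor book-keeping, the lemma reduces to a short $2\times 2$ linear-algebra exercise built directly on the Gram structure of $\mathcal{M}_{i,T}$.
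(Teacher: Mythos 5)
Your proof is correct and takes essentially the same route as the paper's: both arguments bound $\lambda_{\max}(\mathcal{M}_{i,T})$ by the trace and $\lambda_{\min}(\mathcal{M}_{i,T})$ by $\det(\mathcal{M}_{i,T})/\mathrm{tr}(\mathcal{M}_{i,T})$, with positive semi-definiteness (equivalently, the Cauchy--Schwarz inequality) supplying $\det(\mathcal{M}_{i,T}) \ge 0$. Your phrasing via the trace/determinant identities merely sidesteps the explicit quadratic-formula manipulation in the paper, and your remark on the degenerate case $\mathrm{tr}(\mathcal{M}_{i,T})=0$ is a small point the paper leaves implicit.
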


\begin{proof}
    Since $\mathcal{M}_{i,T}$ is a $2 \times 2$ matrix, the larger eigenvalue of~$\mathcal{M}_{i,T}$ satisfies
\begin{align*} 
    \frac{\mathcal{M}^{[11]}_{i,T} + \mathcal{M}^{[22]}_{i,T} + \sqrt{\Xi_{i,T}}}{2}
    \leq \mathcal{M}^{[11]}_{i,T} + \mathcal{M}^{[22]}_{i,T}
    = \Psi^{\top}_{i,T} \Psi_{i,T} + \bar{\mathcal{Z}}^{\top}_{i,T} \bar{\mathcal{Z}}_{i,T} =\bar{\lambda}_{\mathcal{M}_{i,T}},
\end{align*}
where 
$\Xi_{i,T} = \left(\mathcal{M}^{[11]}_{i,T} + \mathcal{M}^{[22]}_{i,T}\right)^2-4\left(\mathcal{M}^{[11]}_{i,T} \mathcal{M}^{[22]}_{i,T}-{\mathcal{M}^{[12]}_{i,T}}^2\right)$.
The above inequality holds because of the Cauchy-Schwarz inequality, i.e., 
\begin{align*}
    \mathcal{M}^{[11]}_{i,T} \mathcal{M}^{[22]}_{i,T}-{\mathcal{M}^{[12]}_{i,T}}^2 
    = \Psi^{\top}_{i,T} \Psi_{i,T} \bar{\mathcal{Z}}^{\top}_{i,T} \bar{\mathcal{Z}}_{i,T} - \Psi^{\top}_{i,T}\bar{\mathcal{Z}}_{i,T} \bar{\mathcal{Z}}^{\top}_{i,T} \Psi_{i,T} \geq 0.
\end{align*}
For the smaller eigenvalue of~$\mathcal{M}_{i,T}$, it holds that
\begin{align*}
    \frac{\mathcal{M}^{[11]}_{i,T} + \mathcal{M}^{[22]}_{i,T}-\sqrt{\Xi_{i,T}}}{2}  
    =\frac{2\left(\mathcal{M}^{[11]}_{i,T} \mathcal{M}^{[22]}_{i,T}-{\mathcal{M}^{[12]}_{i,T}}^2\right)}{\mathcal{M}^{[11]}_{i,T} + \mathcal{M}^{[22]}_{i,T}+\sqrt{\Xi_{i,T}}} 
    \geq \frac{\mathcal{M}^{[11]}_{i,T} \mathcal{M}^{[22]}_{i,T}-{\mathcal{M}^{[12]}_{i,T}}^2}{\mathcal{M}^{[11]}_{i,T} + \mathcal{M}^{[22]}_{i,T}}
    =\underline{\lambda}_{\mathcal{M}_{i,T}}.
\end{align*}
This is because 
$\Xi_{i,T} \leq \left(\mathcal{M}^{[11]}_{i,T} + \mathcal{M}^{[22]}_{i,T}\right)^2$.
This completes the proof.
\end{proof}

\begin{proof}[Proof of Theorem~\ref{Thm}]
    Based on the analytical solution given in~\eqref{eq: analytical sol}, the estimation error of $\Theta_i$ can be written as
    \begin{align*}
        \Theta_i -  \hat{\Theta}_i 
        = \Theta_i -  \mathcal{K}^{-1}_{i,T}  \left(\Gamma_{i,T}^{\top}\Sigma^{-1}_{\Omega_{i,T}} \Upsilon_{i,T} +\eta \nu_1^{\top}\nu_1 \hat{\Theta}_{i\_} 
        +\eta \nu_1^{\top}\nu_1 \Theta_{i}- \eta \nu_1^{\top}\nu_1 \Theta_{i}\right).
    \end{align*}
    Substituting $\Upsilon_{i,T}$ in~\eqref{eq: Reform_PS} into the above equation leads to
    \begin{align*}
        \Theta_i -  \hat{\Theta}_i 
        =&\Theta_i -\mathcal{K}^{-1}_{i,T}\left( \Gamma_{i,T}^{\top}\Sigma^{-1}_{\Omega_{i,T}} \Gamma_{i,T} + \eta \nu_1^{\top}\nu_1 \right)\Theta_i \\
        &-\mathcal{K}^{-1}_{i,T} \Gamma_{i,T}^{\top}\Sigma^{-1}_{\Omega_{i,T}} \left(\mathcal{O}^{\bot}_{i,T} \mathcal{Z}_{i_1,T} (\Delta \boldsymbol{f}_{I_i,T-1} 
        +\Delta \boldsymbol{P}_{i})  +\Omega_{i,T}\right) 
        +\mathcal{K}^{-1}_{i,T} \eta \nu_1^{\top}\nu_1 (\Theta_{i} - \hat{\Theta}_{i\_}) \\
        = &-\mathcal{K}^{-1}_{i,T} \Gamma_{i,T}^{\top}\Sigma^{-1}_{\Omega_{i,T}} \left(\mathcal{O}^{\bot}_{i,T} \mathcal{Z}_{i_1,T}(\Delta \boldsymbol{f}_{I_i,T-1}+\Delta \boldsymbol{P}_{i})  +\Omega_{i,T}\right) 
        +\mathcal{K}^{-1}_{i,T} \eta \nu_1^{\top}\nu_1 (\Theta_{i} - \hat{\Theta}_{i\_}),
    \end{align*}
    Since~$\bar{f}_{I_i,T-1}-\hat{f}_{I,i} = [0~1](\Theta_i - \hat{\Theta}_i)$ and recall that $\mathcal{K}_{i,T} = \Gamma_{i,T}^{\top}\Sigma^{-1}_{\Omega_{i,T}}\Gamma_{i,T}+\eta \nu_1^{\top}\nu_1$, the absolute value of $\mathbf{E}[\bar{f}_{I_i,T-1}-\hat{f}_{I,i} ]$ is bounded by 
    \begin{align}\label{eq: Error_ineq}
        \left|\mathbf{E}\left[ \bar{f}_{I_i,T-1}-\hat{f}_{I,i} \right] \right| 
        \leq &\left\| \mathcal{K}^{-1}_{i,T}  \right\|_2 \left\|\Gamma_{i,T}^{\top} \right\|_2 \left\|\Sigma^{-1}_{\Omega_{i,T}}\right\|_2 \left\|\mathcal{O}^{\bot}_{i,T} \mathcal{Z}_{i_1,T}\right\|_2\times \\ &\left(\left\|\Delta \boldsymbol{f}_{I_i,T-1} \right\|_2 + \left\|\Delta \boldsymbol{P}_{i} \right\|_2 \right) \notag 
        + 
        |[0 ~1] \mathcal{K}^{-1}_{i,T}  
        \eta \nu_1^{\top}\nu_1 (\Theta_i - \hat{\Theta}_{i\_}) |. 
    \end{align}
    According to Lemma \ref{lem: M_eig}, matrix $\mathcal{K}_{i,T}$ is lower bounded by
    \begin{align*}
     \mathcal{K}_{i,T}  &=   \Gamma_{i,T}^{\top}\Sigma^{-1}_{\Omega_{i,T}}\Gamma_{i,T}+\eta \nu_1^{\top}\nu_1 \\ 
    &\succeq 
     \frac{1}{\bar{\lambda}_{\Sigma_{\Omega_{i,T}}}}  
    \begin{bmatrix}
        \Psi^{\top}_{i,T} \Psi_{i,T}  &\Psi^{\top}_{i,T}  \bar{\mathcal{Z}}_{i,T} \\ \bar{\mathcal{Z}}^{\top}_{i,T} \Psi_{i,T} &\bar{\mathcal{Z}}^{\top}_{i,T}\bar{\mathcal{Z}}_{i,T}
    \end{bmatrix} 
     +\eta\nu_1^{\top}\nu_1  \\
    &\succeq   \frac{\underline{\lambda}_{\mathcal{M}_{i,T}}}{\bar{\lambda}_{\Sigma_{\Omega_{i,T}}}}  I +\eta\nu_1^{\top}\nu_1.
    \end{align*}
    Moreover, since the $2$-norm $\| \mathcal{K}^{-1}_{i,T} \|_2  = \bar{\lambda}_{\mathcal{K}^{-1}_{i,T}} = 1/\underline{\lambda}_{\mathcal{K}_{i,T}}$, $\| \mathcal{K}^{-1}_{i,T} \|_2$ is upper bounded by 
    \begin{align}\label{eq: KiT_ineq}
        \left\| \mathcal{K}^{-1}_{i,T} \right\|_2 \leq \frac{\bar{\lambda}_{\Sigma_{\Omega_{i,T}}}}{\underline{\lambda}_{\mathcal{M}_{i,T}}+\eta \bar{\lambda}_{\Sigma_{\Omega_{i,T}}}}.
    \end{align}
    For $\left\|\Gamma_{i,T}^{\top}\right\|_2$, according to Lemma~\ref{lem: M_eig}, we have
    \begin{align}\label{eq: GammaiT_ineq}
        \left\|\Gamma_{i,T}^{\top}\right\|_2 = \sqrt{\left\| \Gamma_{i,T}^{\top} \Gamma_{i,T} \right\|_2} = \sqrt{ \bar{\lambda}_{\mathcal{M}_{i,T}}}.
    \end{align}
    Together with~\eqref{eq: KiT_ineq} and~\eqref{eq: GammaiT_ineq}, the first component on the right-hand side of inequality~\eqref{eq: Error_ineq} is upper bounded by
    \begin{align}\label{eq: ineq_part1}
       \left\| \mathcal{K}^{-1}_{i,T}  \right\|_2 \left\|\Gamma_{i,T}^{\top} \right\|_2 \left\|\Sigma^{-1}_{\Omega_{i,T}}\right\|_2 \left\|\mathcal{O}^{\bot}_{i,T} \mathcal{Z}_{i_1,T}\right\|_2  
        \leq \frac{\bar{\lambda}_{\Sigma_{\Omega_{i,T}}}}{\underline{\lambda}_{\mathcal{M}_{i,T}}+\eta \bar{\lambda}_{\Sigma_{\Omega_{i,T}}}} 
        \frac{\sqrt{ \bar{\lambda}_{\mathcal{M}_{i,T}}}} {\underline{\lambda}_{\Sigma_{\Omega_{i,T}}}} \bar{\sigma}(\mathcal{O}^{\bot}_{i,T} \mathcal{Z}_{i_1,T}).
    \end{align}

  For the second term on the right-hand side of inequality~\eqref{eq: Error_ineq}, it holds that
    \begin{align*}
        [0 ~1] \mathcal{K}^{-1}_{i,T}  
        \eta \nu_1^{\top}\nu_1 (\Theta_i - \hat{\Theta}_{i\_})
        =\frac{ -\bar{\mathcal{Z}}^{\top}_{i,T} \Sigma^{-1}_{\Omega_{i,T}} \Psi_{i,T} \eta (P_{i} - \hat{P}_{i\_})}{\text{det}(\mathcal{K}_{i,T} )}, 
    \end{align*}
    where the inverse term~$\mathcal{K}^{-1}_{i,T}$ is computed in the above equation. 
    The determinant of $\mathcal{K}_{i,T}$ satisfies the following inequality:
    \begin{align*}
        \text{det}(\mathcal{K}_{i,T} ) 
        = &\left(\bar{\mathcal{Z}}^{\top}_{i,T} \Sigma^{-1}_{\Omega_{i,T}} \bar{\mathcal{Z}}_{i,T}\right)
        \left(\Psi^{\top}_{i,T} \Sigma^{-1}_{\Omega_{i,T}} \Psi_{i,T} + \eta\right) 
        - \left(\Psi^{\top}_{i,T} \Sigma^{-1}_{\Omega_{i,T}} \bar{\mathcal{Z}}_{i,T}\right)
        \left(\bar{\mathcal{Z}}^{\top}_{i,T} \Sigma^{-1}_{\Omega_{i,T}} \Psi_{i,T}\right) \\
        &\geq \bar{\mathcal{Z}}^{\top}_{i,T} \Sigma^{-1}_{\Omega_{i,T}} \bar{\mathcal{Z}}_{i,T}\eta.
    \end{align*}
    As a result, we have
    \begin{align}\label{eq: ineq_part2}
        \left|[0 ~1] \mathcal{K}^{-1}_{i,T}  
        \eta \nu_1^{\top}\nu_1 (\Theta_i - \hat{\Theta}_{i\_}) \right| 
        \leq \frac{\bar{\mathcal{Z}}^{\top}_{i,T} \Sigma^{-1}_{\Omega_{i,T}} \Psi_{i,T} |P_{i} - \hat{P}_{i\_}|}{ \bar{\mathcal{Z}}^{\top}_{i,T} \Sigma^{-1}_{\Omega_{i,T}} \bar{\mathcal{Z}}^{\top}_{i,T}} 
        \leq {\frac{\bar{\lambda}_{\Sigma_{\Omega_{i,T}}} \bar{\mathcal{Z}}^{\top}_{i,T}  \Psi_{i,T} |P_{i} - \hat{P}_{i\_}|}{\underline{\lambda}_{\Sigma_{\Omega_{i,T}}}\bar{\mathcal{Z}}^{\top}_{i,T}  \bar{\mathcal{Z}}_{i,T}} }.
    \end{align}
    Together with inequalities~\eqref{eq: ineq_part1} and~\eqref{eq: ineq_part2}, the estimation error bound \eqref{eq: error bound} is derived. This completes the proof.
\end{proof}

\section{}\label{app: C}
In this part, we analyze the computational complexity of the proposed diagnosis algorithm for fault line current estimation to demonstrate its real-time feasibility. Specifically, we provide a step-by-step analysis of the main computational procedures executed at each sampling instant for power line fault diagnosis. 
Let us first recall the definitions of some parameters.
\begin{itemize}
    \item $d_{\mathcal{N}}+1$: the order of the residual generators in~\eqref{eq: filter},~\eqref{eq: filter fl}, and~\eqref{eq: prefilter 2}.
    \item $n_y$: the dimension of the measurement signal.
    \item $n$: the number of buses involved in the line-current estimation step in~\eqref{eq: Ik_est}.
    \item $m$: the number of buses involved in the line-current estimation step in~\eqref{eq: Ik_est}.
    \item $n_{\Upsilon}$: the dimension of the projected stacked residual $\Upsilon_{i,T}$ in~\eqref{eq: parity-space relation}.
    \item $T$: the length of the sliding window.
\end{itemize}
    We also distinguish offline precomputable quantities from online computations. 
    In particular, matrices such as $\mathcal{O}^{\bot}_{i,T} \mathcal{Z}_{i_1,T}$, $\mathcal{O}^{\bot}_{i,T}\mathcal{Z}_{i_2,T}$, and ${\Sigma^{-1}_{\Omega_{i,T}}}$ are treated as offline quantities and therefore do not contribute to the online complexity.

    \begin{itemize}
        \item[(1)] \textbf{Residual~$r_i$ in~\eqref{eq: filter fl}.} 
        The residual generator is implemented as a filter of order $d_{\mathcal{N}}+1$ with input dimension $n_y+1$. The online complexity is dominated by the filter state update and output computation. Therefore, the computational complexity of generating $r_i$ is $O(d^2_{\mathcal{N}} + d_{\mathcal{N}} n_y)$.

        \item[(2)] \textbf{Computation of the estimated line current $\hat{I}_{k,h}$ in~\eqref{eq: Ik_est}.} 
        This step involves a scalar state update and a weighted summation of the bus voltages. Since the dominant operation is the summation over the $n$ bus-voltage terms, its complexity scales linearly with the network dimension. Hence, the complexity of this step is $O(n)$.

        \item[(3)] \textbf{Generation of $\hat{r}_{I,i}$ in~\eqref{eq: prefilter 2}.}
        This step first forms a weighted summation of $m$ scalar estimated line currents and then propagates a scalar-input scalar-output filter of order $d_{\mathcal{N}}+1$.
        Accordingly, the online complexity consists of a linear term in $m$ and a filter-related term of order $d_{\mathcal{N}}$. The resulting complexity is $O(d^2_{\mathcal{N}} + m)$.

        \item[(4)] \textbf{Computation of the residual $\tilde{r}_{i}$ in~\eqref{eq: til_ri}.}
        Since both $r_i$ and $\hat{r}_{I,i}$ are scalar, this operation only requires one subtraction. Thus, its complexity is constant $O(1)$.

        \item[(5)] \textbf{Construction of the projected stacked residual $\Upsilon_{i,T}$ and the matrix~$\Psi_{i,T}$ in~\eqref{eq: parity-space relation}.}
        The projected residual used in the estimation stage is not the raw stacked residual, but its projection after eliminating the effect of the initial condition. Its dimension is~$n_{\Upsilon}$.
        The online construction of the regressor~$\Psi_{i,T}$ involves multiplying a precomputed matrix by a windowed input vector of length proportional to $T$.
        Therefore, the complexity of this step is $O(n_{\Upsilon} T)$.  

        \item[(6)] \textbf{Computation of the matrix~$\Phi_{i,T}$ and the estimation of $P_i$ in~\eqref{eq: Pli_est}.} 
        Once $\Psi_{i,T}$ is available, the weighted LS estimate of $P_i$ involves matrix-vector products with ${\Sigma^{-1}_{\Omega_{i,T}}}$, inner products, and a final scalar update.
        Since $\Psi_{i,T} \in \mathbb{R}^{n_{\Upsilon}}$, the dominant online computation scales quadratically with $n_{\Upsilon}$. The multiplication used to obtain $\hat{P}_i$ from $\Phi_{i,T} \Upsilon_{i,T}$ only adds a liner term~$O(n_{\Upsilon})$, which does not change the overall order. Therefore, the complexity of this step is $O(n^2_{\Upsilon}+n_{\Upsilon})$.

        \item[(7)] \textbf{Construction of the compensated projected residual $\tilde{\Upsilon}_{i,T}$ in~\eqref{eq: r_tilde}.}
        This step is a simple vector subtraction of dimension~$n_{\Upsilon}$. The complexity is $O(n_{\Upsilon})$.

        \item[(8)] \textbf{Threshold evaluation for fault and load change detection.} 
        The decision logic checks each component of $\tilde{\Upsilon}_{i,T}$ against its corresponding threshold. Since this is an element-wise comparison over $n_{\Upsilon}$ components, the complexity is $O(n_{\Upsilon})$.

        \item[(9)] \textbf{Fault-current estimation after fault detection in~\eqref{eq: analytical sol}.}
        After a fault is detected, the fault-current estimate is obtained from the analytical solution of a quadratic optimization problem with a two-dimensional parameter vector. The online construction of the corresponding regressor matrix $\Gamma_{i,T}$ has complexity $O(n_{\Upsilon}T)$, while the computation of the associated weighted LS terms scales as $O(n^2_{\Upsilon})$.
        
        Since the optimization variable is only two-dimensional, the inversion of the resulting $2 \times 2$ matrix and the extraction of the final fault-current estimate have constant complexity. Therefore, the overall online complexity of the fault-current estimation step is $O(n^2_{\Upsilon} + n_{\Upsilon}T)$.
    \end{itemize}
    Based on the above analysis, the overall online computational complexity of the main diagnosis procedure per sampling step can be summarized as
    \begin{align*}
        O(d^2_{\mathcal{N}} + d_{\mathcal{N}} n_y + n^2_{\Upsilon} + n_{\Upsilon}T+ m+n).
    \end{align*}
    Moreover, the fault-current estimation step is only activated after a fault is detected, and therefore, it is event-triggered rather than continuously executed at every sampling instant. 
    This further reduces the average online computational burden in practice.
    Note that the obtained complexity is polynomial in the filter order, the input dimension, and the projected residual dimension, and does not involve iterative online optimization. Therefore, the proposed method has a relatively low online computational burden and is suitable for real-time applications.

\section{}\label{app: D}
In this section, we present the results of the proposed power line fault diagnosis method under non-step power load variations. 

\textbf{Case~1: non-ideal step loads (e.g., EV-charger start-up).}
    To illustrate this behavior, we consider an EV-charger-like start-up transient modeled as a first-order exponential rise:
    \begin{align*}
        P_i(t)=P_i(t_0)+\Delta P_i\big(1-e^{-(t-t_0)/\tau}\big), \quad t\ge t_0,
    \end{align*}
    where $t_0$ is the onset time and $\tau$ is the time constant. 
    Simulation results with $\tau=0.01$ (slow rise) and $\tau=0.001$ (fast rise) are reported in Fig.~\ref{fig: varying load}. 
    As shown in the figure, slow transients may not trigger the load-change indicator, whereas fast transients are more likely to be detected since they resemble step-like changes. 
    In both cases, the proposed line-fault estimation performance is maintained.
     \begin{figure*}[h]
        \centering
        \includegraphics[width=0.3\linewidth]{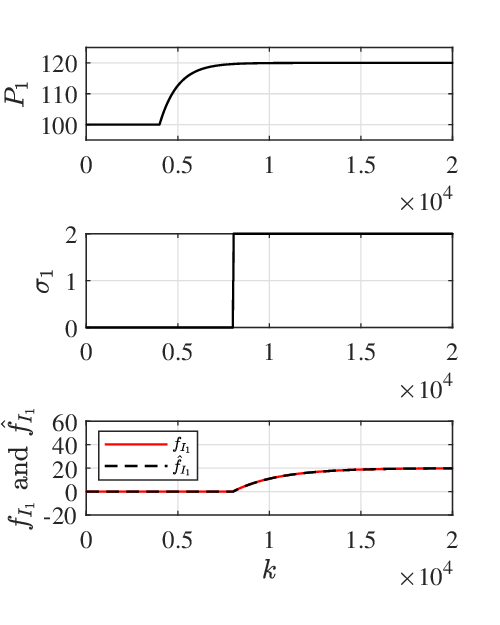}   
        \hspace{20pt}
        \includegraphics[width=0.3\linewidth]{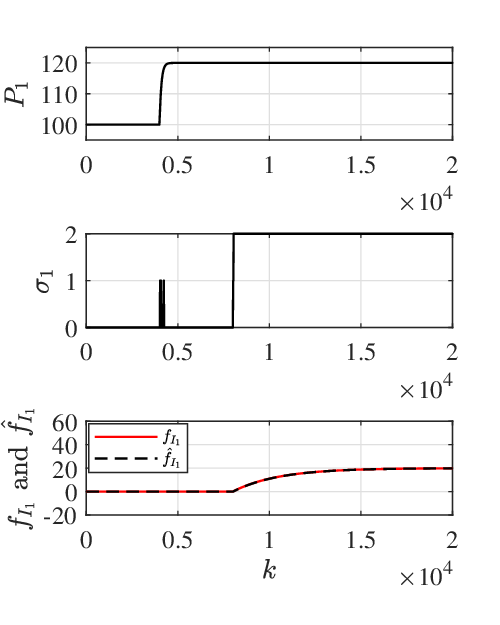}  
        \caption{\small Power line fault diagnosis results under non-ideal step loads.}
        \label{fig: varying load}
    \end{figure*} 

    \textbf{Case~2: fast-varying load variations.}
    If the load variation is sufficiently fast and large such that the residual exceeds the threshold and remains above it for longer than $T$ steps, the differentiator may misclassify the load variation as a power line fault, which can affect the fault estimation accuracy.
    To demonstrate this conservative scenario, we consider a sinusoidal load fluctuation:
    \begin{align*}
        P_i(t)=P_i(t_0)+\Delta P_i\sin\big(\omega_p(t-t_0)\big),\quad t\ge t_0.
    \end{align*}
    The simulation results in Fig.~\ref{fig: large varying load} show that the indicator function may switch to the ``fault'' mode even in the absence of actual line faults, leading to a nonzero estimated faulty current.
    Nevertheless, due to the regularization term adopted in our estimation scheme, the resulting spurious estimate remains bounded and does not diverge, and the estimator can still track the actual faulty current once a real line fault occurs.
    For comparison, we also applied the method in~\cite{van2022multiple} under the same fast time-varying load condition.
    The results indicate that the baseline method fails to provide reliable line-fault current estimation in this scenario, whereas the proposed scheme maintains a certain level of estimation capability.
    \begin{figure*}[h]
        \centering
        \includegraphics[width=0.6\linewidth]{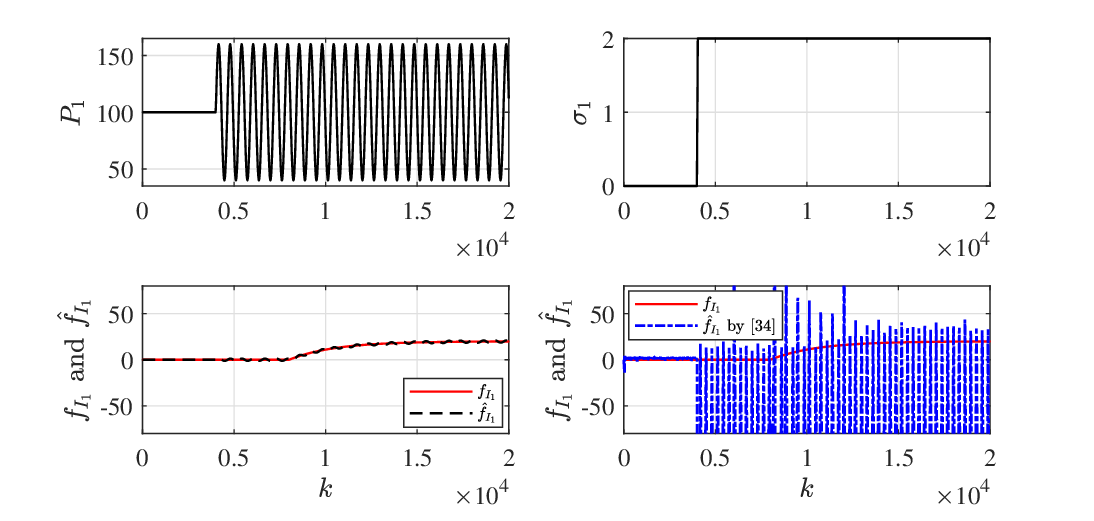}   
        \caption{\small Power line fault diagnosis results under fast-varying loads.}
        \label{fig: large varying load}
    \end{figure*}  

    Overall, the impact of non-step-like load dynamics depends on their magnitude and time scale relative to the window $T$. 
    Particularly, mild and smooth variations may remain below the threshold and have little effect on load or line-fault current estimation, whereas fast and large variations may keep the residual above the threshold for more than $T$ steps, leading to misclassification as line faults and degraded diagnosis performance. A possible remedy is to adopt a bounded-variation load model, e.g., $\|\Delta P_i\|\le \bar{\Delta}_{P}$ within each window, and incorporate this bound into the threshold design.

\section{}\label{app: E}
In this section, we further investigate the influence of additional fault types, namely sensor faults and parameter (multiplicative) faults, on the proposed diagnosis framework. 

\textbf{Case~1: Diagnosis results of sensor faults.}
First, we consider sensor faults in the voltage measurement~$V_i$ and the filter current measurement $I_{t,i}$, denoted by $f_{V_i}$ and $f_{I_{t,i}}$, respectively. Since the control law is state-feedback based, sensor faults affect only the measurement equation.
Therefore, the corresponding state-space model in the presence of sensor faults can be written as
\begin{equation*}
\left\{
    \begin{array}{l}
    \dot{x}_i(t)=A_{i} x_i(t)+B_{i} V^*_i+ D_i d_i(t) + E_i f_{a,i}(t)  +\delta_i(t),\\
    y_i(t)=C_i x_i(t) + C_{s,i} f_{s,i}(t) +  \zeta_{i}(t),
    \end{array}
\right.
\end{equation*}
where $f_{s,i}(t) = [f_{V_i}(t) ~f_{I_{t,i}}(t)]^{\top}$ and $C_{s,i} = \begin{bmatrix}
            {\bf I}_2 \\ {\bf 0}_{1 \times 2}
\end{bmatrix}$.    
After incorporating sensor faults into the DAE framework, the equation~\eqref{eq: DAE fa} used for actuator fault estimation becomes
\begin{align*}
    H_i(p) X_i +  \mathcal{B}_i Y_i + \mathcal{E}_i f_{a,i} + \mathcal{E}_{s,i} f_{s,i} + \omega_i = 0,
\end{align*}
where $\mathcal{E}_{s,i} = \begin{bmatrix} {\bf 0} \\ C_{s,i} \end{bmatrix}$. Accordingly, the actuator fault estimator output~\eqref{eq: filter} becomes
\begin{align*}
    \hat{f}_{a,i} = \frac{N_i(p)\mathcal{E}_i}{a(p)}f_{a,i} + \frac{N_i(p)\mathcal{E}_{s,i}}{a(p)}f_{s,i} + \frac{N_i(p)}{a(p)}\omega_i.
\end{align*}
This expression shows that sensor faults enter the estimator through an additional channel and may affect the estimation output. Similarly, in the presence of sensor faults, the residual~\eqref{eq: filter fl} used for line fault detection becomes
\begin{align*}
    r_{i} = \frac{\mathcal{N}_i(p)\mathcal{G}_i}{a(p)}d_i +\frac{\mathcal{N}_i(p)\mathcal{E}_{s,i}}{a(p)}f_{s,i} + \frac{\mathcal{N}_i(p)}{a(p)} \omega_i.
\end{align*}
Therefore, sensor faults can, in principle, be detected by both the actuator fault estimator and line fault estimator.

    \begin{figure*}[t]
        \centering
        \captionsetup{skip=4pt}
        \includegraphics[width=0.3\linewidth]{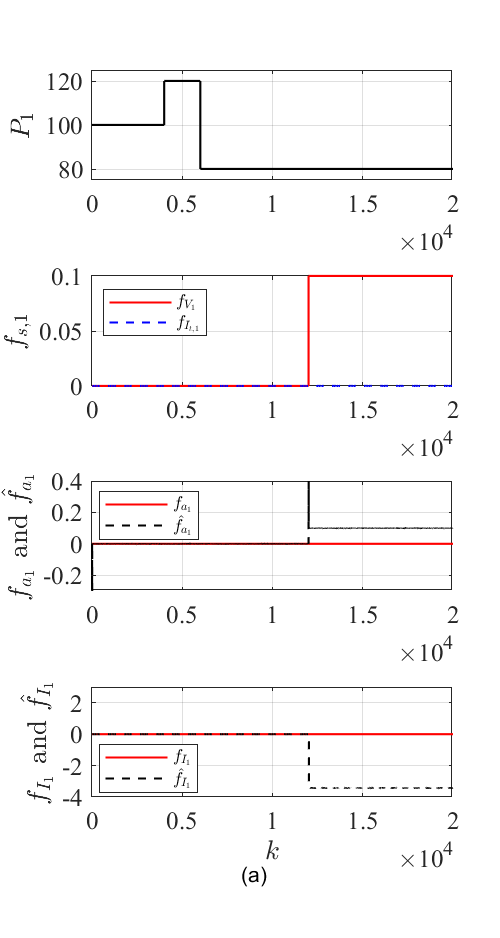} 
        \hspace{30pt}
        \includegraphics[width=0.3\linewidth]{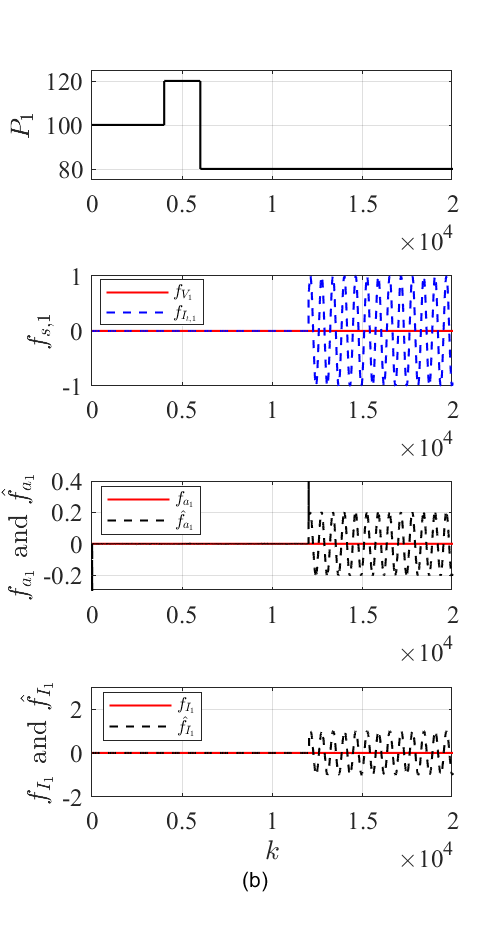} 
        \caption{\small Effects of sensor faults on the diagnosis framework.}
        \label{fig: SensorFault}
    \end{figure*}     

    The corresponding simulation results are presented in Fig.~\ref{fig: SensorFault}, where we consider two sensor fault scenarios: (i) voltage sensor fault~$[f_{V_i} ~f_{I_{t,1}}]^{\top}= [0.1 ~0]^{\top}$ and (ii) current sensor fault $[f_{V_i} ~f_{I_{t,1}}]^{\top}= [0 ~ \sin(1000t)]^{\top}$.
    One can see from Fig.~\ref{fig: SensorFault} that in both cases, the faults are detected. 
    Furthermore, simulation results indicate that voltage sensor faults generally have a stronger impact than current sensor faults. This is because inaccurate voltage measurements directly affect the estimation of line currents, making the line fault detector more sensitive to such faults.

    We next consider parameter faults, arising from variations in the RLC filter parameters, including $R_{t,i}$, $L_{t,i}$, and $C_{t,i}$.
    Such faults alter the system matrices and are therefore commonly referred to as multiplicative faults. 
    In the presence of parameter faults, the system model can be expressed as
    \begin{equation*}
    \left\{
        \begin{array}{l}
        \dot{x}_i(t)=A_{i} x_i(t)+B_{i} V^*_i+ D_i d_i(t) + E_i f_{a,i}(t) + f_{p,i}(t)  +\delta_i(t),\\
        y_i(t)=C_i x_i(t) +  \zeta_{i}(t),
    	\end{array}
    \right.
    \end{equation*}
    where the multiplicative fault is equivalently represented as an additive fault term~$f_{p,i}$ with the following form
    \begin{align*}
        f_{p,i}(t) = 
        \begin{bmatrix}
            A_{f,i} &D_{f,i} &E_{f,i}
        \end{bmatrix}
        \begin{bmatrix}
            x_i(t) \\ d_i(t)\\ f_{a,i}(t)
        \end{bmatrix}.
    \end{align*}
    The matrices $A_{f,i}$, $D_{f,i}$, and $E_{f,i}$ represent unknown fault matrices caused by the changes in system parameters.

    Similar to the above analysis, in the presence of parameter faults, the residuals of the actuator fault estimator and the power line fault estimator become
    \begin{align*}
        \hat{f}_{a,i} &= \frac{N_i(p)\mathcal{E}_i}{a(p)}f_{a,i} + \frac{N_i(p)}{a(p)} \begin{bmatrix}
            I \\0
        \end{bmatrix} f_{p,i} + \frac{N_i(p)}{a(p)}\omega_i, \\
        r_{i} & = \frac{\mathcal{N}_i(p)\mathcal{G}_i}{a(p)}d_i +\frac{\mathcal{N}_i(p)}{a(p)}\begin{bmatrix}
            I \\0
        \end{bmatrix} f_{p,i} + \frac{\mathcal{N}_i(p)}{a(p)} \omega_i.
    \end{align*}
    This indicates that parameter faults also enter the estimators and residual generators through additional channels and can, in principle, be detected within the current diagnosis framework.
    
    The corresponding simulation results are shown in Fig.~\ref{fig: ParamFault}, where we consider two parameter fault scenarios: (i) $L_{t,i} \rightarrow 0.2 L_{t,i}$, and (ii) $C_{t,i} \rightarrow 0.6 C_{t,i}$.
    The results show that the inductor fault is detected by the actuator fault detector~$\mathcal{D}_{a,i}$ but not by the line fault detector~$\mathcal{D}_{l,i}$, whereas the capacitor fault mainly affects~$\mathcal{D}_{l,i}$.
    This difference is due to the distinct roles of~$L_{t,i}$  and ~$C_{t,i}$ in the system dynamics: $L_{t,i}$ is more directly involved in the voltage dynamics related to actuator fault estimation, while $C_{t,i}$ has a stronger influence on the current dynamics used for line fault estimation. 
    \begin{figure*}[t]
        \centering
        \captionsetup{skip=4pt}
        \includegraphics[width=0.3\linewidth]{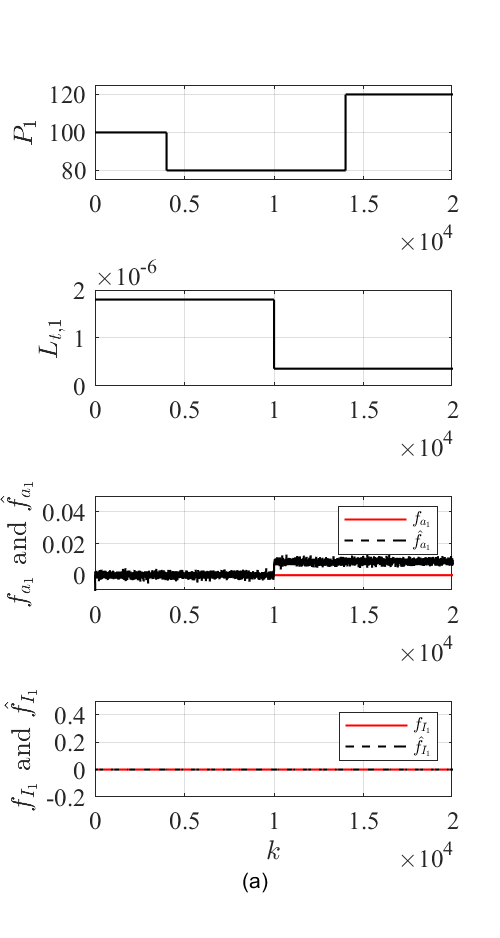} 
        \hspace{30pt}
        \includegraphics[width=0.3\linewidth]{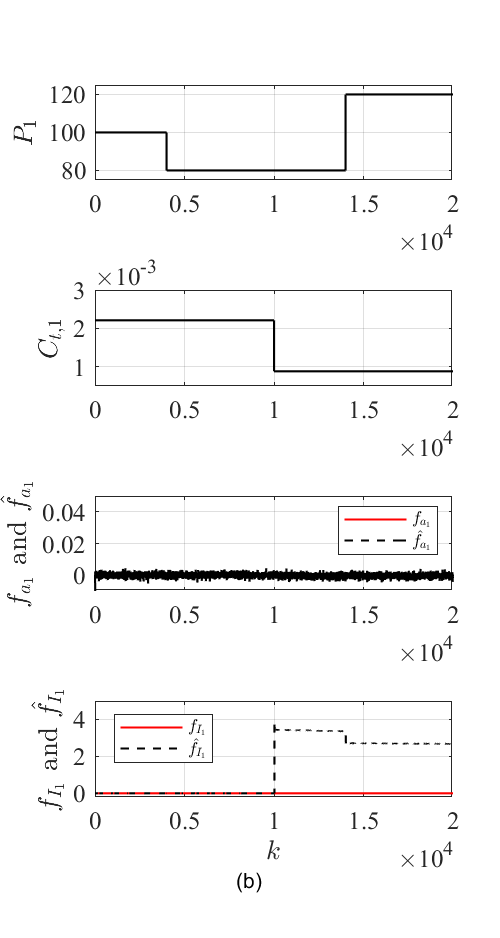} 
        \caption{\small Effects of parameter faults on the diagnosis framework.}
        \label{fig: ParamFault}
    \end{figure*} 
}


\bibliographystyle{elsarticle-num}
\bibliography{bibliography_abrv.bib}

@article{wang2002actuator,
  title={Actuator fault diagnosis: an adaptive observer-based technique},
  author={Wang, Hong and Daley, Steve},
  journal={IEEE Transactions on Automatic Control},
  volume={41},
  number={7},
  pages={1073--1078},
  year={2002},
  publisher={IEEE}
}

@article{edwards2000sliding,
  title={Sliding mode observers for fault detection and isolation},
  author={Edwards, Christopher and Spurgeon, Sarah K and Patton, Ron J},
  journal={Automatica},
  volume={36},
  number={4},
  pages={541--553},
  year={2000},
  publisher={Elsevier}
}

@article{tucci2016decentralized,
  title={A decentralized scalable approach to voltage control of {DC} islanded microgrids},
  author={Tucci, Michele and Riverso, Stefano and Vasquez, Juan C and Guerrero, Josep M and Ferrari-Trecate, Giancarlo},
  journal={IEEE Transactions on Control Systems Technology},
  volume={24},
  number={6},
  pages={1965--1979},
  year={2016},
  publisher={IEEE}
}

@book{ding2008model,
  title={Model-based Fault Diagnosis Techniques: Design Schemes, Algorithms, and Tools},
  author={Ding, Steven X},
  year={2008},
  publisher={Springer Science \& Business Media}
}

@article{scherer2002multiobjective,
  title={Multiobjective output-feedback control via {LMI} optimization},
  author={Scherer, Carsten and Gahinet, Pascal and Chilali, Mahmoud},
  journal={IEEE Transactions on automatic control},
  volume={42},
  number={7},
  pages={896--911},
  year={2002},
  publisher={IEEE}
}

@article{sadabadi2019scalable,
  title={Scalable robust voltage control of {DC} microgrids with uncertain constant power loads},
  author={Sadabadi, Mahdieh S and Shafiee, Qobad},
  journal={IEEE Transactions on Power Systems},
  volume={35},
  number={1},
  pages={508--515},
  year={2019},
  publisher={IEEE}
}

@article{boem2018plug,
  title={Plug-and-play fault detection and isolation for large-scale nonlinear systems with stochastic uncertainties},
  author={Boem, Francesca and Riverso, Stefano and Ferrari-Trecate, Giancarlo and Parisini, Thomas},
  journal={IEEE Transactions on Automatic Control},
  volume={64},
  number={1},
  pages={4--19},
  year={2018},
  publisher={IEEE}
}

@book{robert1999monte,
  title={Monte Carlo statistical methods},
  author={Robert, Christian P and Casella, George and Casella, George},
  volume={2},
  year={1999},
  publisher={Springer}
}

@article{tucci2017line,
  title={Line-independent plug-and-play controllers for voltage stabilization in {DC} microgrids},
  author={Tucci, Michele and Riverso, Stefano and Ferrari-Trecate, Giancarlo},
  journal={IEEE Transactions on Control Systems Technology},
  volume={26},
  number={3},
  pages={1115--1123},
  year={2017},
  publisher={IEEE}
}

@article{cecilia2021detection,
  title={Detection and mitigation of false data in cooperative {DC} microgrids with unknown constant power loads},
  author={Cecilia, Andreu and Sahoo, Subham and Dragi{\v{c}}evi{\'c}, Tomislav and Costa-Castell{\'o}, Ramon and Blaabjerg, Frede},
  journal={IEEE Transactions on Power Electronics},
  volume={36},
  number={8},
  pages={9565--9577},
  year={2021},
  publisher={IEEE}
}

@article{asadi2020fault,
  title={Fault reconstruction of islanded nonlinear {DC} microgrids: {A}n {LPV}-based sliding mode observer approach},
  author={Asadi, Samira and Vafamand, Navid and Moallem, Mehrdad and Dragi{\v{c}}evi{\'c}, Tomislav},
  journal={IEEE Journal of Emerging and Selected Topics in Power Electronics},
  volume={9},
  number={4},
  pages={4606--4614},
  year={2020},
  publisher={IEEE}
}

@article{esfahani2015tractable,
  title={A tractable fault detection and isolation approach for nonlinear systems with probabilistic performance},
  author={Mohajerin Esfahani, Peyman and Lygeros, John},
  journal={IEEE Transactions on Automatic Control},
  volume={61},
  number={3},
  pages={633--647},
  year={2015},
  publisher={IEEE}
}

@article{pan2021dynamic,
  title={Dynamic anomaly detection with high-fidelity simulators: {A} convex optimization approach},
  author={Pan, Kaikai and Palensky, Peter and Mohajerin Esfahani, Peyman},
  journal={IEEE Transactions on Smart Grid},
  volume={13},
  number={2},
  pages={1500--1515},
  year={2021},
  publisher={IEEE}
}

@article{nyberg2006residual,
  title={Residual generation for fault diagnosis of systems described by linear differential-algebraic equations},
  author={Nyberg, Mattias and Frisk, Erik},
  journal={IEEE Transactions on Automatic Control},
  volume={51},
  number={12},
  pages={1995--2000},
  year={2006},
  publisher={IEEE}
}

@article{vafamand2021fusing,
  title={Fusing unscented {K}alman filter to detect and isolate sensor faults in {DC} microgrids with {CPLs}},
  author={Vafamand, Arezoo and Moshiri, Behzad and Vafamand, Navid},
  journal={IEEE Transactions on Instrumentation and Measurement},
  volume={71},
  pages={1--8},
  year={2021},
  publisher={IEEE}
}

@article{wang2022lpv,
  title={{LPV} model-based fault detection and isolation in {DC} microgrids through signature recognition},
  author={Wang, Ting and Liang, Liliuyuan and Hao, Zhiguo and Monti, Antonello and Ponci, Ferdinanda},
  journal={IEEE Transactions on Smart Grid},
  volume={14},
  number={4},
  pages={2558--2571},
  year={2022},
  publisher={IEEE}
}

@article{yao2021unknown,
  title={Unknown input observer-based series {DC} arc fault detection in {DC} microgrids},
  author={Yao, Xiu and Le, Vu and Lee, Inhwan},
  journal={IEEE Transactions on Power Electronics},
  volume={37},
  number={4},
  pages={4708--4718},
  year={2021},
  publisher={IEEE}
}

@article{wang2020model,
  title={Model-based fault detection and isolation in {DC} microgrids using optimal observers},
  author={Wang, Ting and Liang, Liliuyuan and Gurumurthy, Sriram Karthik and Ponci, Ferdinanda and Monti, Antonello and Yang, Zhiqing and De Doncker, Rik W},
  journal={IEEE Journal of Emerging and Selected Topics in Power Electronics},
  volume={9},
  number={5},
  pages={5613--5630},
  year={2020},
  publisher={IEEE}
}

@article{wan2023decentralized,
  title={A decentralized resilient control scheme for {DC} microgrids against faults on sensor and actuator},
  author={Wan, Keting and Zhao, Jinghan and Chen, Yongpan and Yu, Miao},
  journal={IEEE Transactions on Circuits and Systems I: Regular Papers},
  volume={71},
  number={2},
  pages={816--827},
  year={2023},
  publisher={IEEE}
}

@article{gao2015survey,
  title={A survey of fault diagnosis and fault-tolerant techniques—Part {I}: Fault diagnosis with model-based and signal-based approaches},
  author={Gao, Zhiwei and Cecati, Carlo and Ding, Steven X},
  journal={IEEE Transactions on Industrial Electronics},
  volume={62},
  number={6},
  pages={3757--3767},
  year={2015},
  publisher={IEEE}
}

@article{dong2024real,
  title={Real-time ground fault detection for inverter-based microgrid systems},
  author={Dong, Jingwei and Liao, Yucheng and Xie, Haiwei and Cremer, Jochen and Esfahani Mohajerin, Peyman},
  journal={IEEE Transactions on Control Systems Technology},
  year={2024},
  publisher={IEEE}
}

@article{wang2023data,
  title={Data-driven fault detection and isolation in {DC} microgrids without prior fault data: {A} transfer learning approach},
  author={Wang, Ting and Zhang, Chunyan and Hao, Zhiguo and Monti, Antonello and Ponci, Ferdinanda},
  journal={Applied Energy},
  volume={336},
  pages={120708},
  year={2023},
  publisher={Elsevier}
}

@article{pan2022learning,
  title={Learning approach based {DC} arc fault location classification in {DC} microgrids},
  author={Pan, Prateem and Mandal, Rajib Kumar},
  journal={Electric Power Systems Research},
  volume={208},
  pages={107874},
  year={2022},
  publisher={Elsevier}
}

@article{yao2013characteristic,
  title={Characteristic study and time-domain discrete-wavelet-transform based hybrid detection of series {DC} arc faults},
  author={Yao, Xiu and Herrera, Luis and Ji, Shengchang and Zou, Ke and Wang, Jin},
  journal={IEEE Transactions on Power Electronics},
  volume={29},
  number={6},
  pages={3103--3115},
  year={2013},
  publisher={IEEE}
}

@inproceedings{syafi2018real,
  title={Real time series {DC} arc fault detection based on fast {F}ourier transform},
  author={Syafi’i, Muhammad Hafid Riza Alvy and Prasetyono, Eka and Khafidli, Muhammad Khanif and Anggriawan, Dimas Okky and Tjahjono, Anang},
  booktitle={2018 International Electronics Symposium on Engineering Technology and Applications (IES-ETA)},
  pages={25--30},
  year={2018},
  organization={IEEE}
}

@article{saleh2017ultra,
  title={Ultra-high-speed traveling-wave-based protection scheme for medium-voltage {DC} microgrids},
  author={Saleh, Khaled A and Hooshyar, Ali and El-Saadany, Ehab F},
  journal={IEEE Transactions on Smart Grid},
  volume={10},
  number={2},
  pages={1440--1451},
  year={2017},
  publisher={IEEE}
}

@article{dhar2017fault,
  title={Fault detection and location of photovoltaic based {DC} microgrid using differential protection strategy},
  author={Dhar, Snehamoy and Patnaik, Rajesh Kumar and Dash, PK},
  journal={IEEE Transactions on Smart Grid},
  volume={9},
  number={5},
  pages={4303--4312},
  year={2017},
  publisher={IEEE}
}

@article{meghwani2016non,
  title={A non-unit protection scheme for {DC} microgrid based on local measurements},
  author={Meghwani, A and Srivastava, SC and Chakrabarti, S},
  journal={IEEE Transactions on Power Delivery},
  volume={32},
  number={1},
  pages={172--181},
  year={2016},
  publisher={IEEE}
}

@article{emhemed2016validation,
  title={Validation of fast and selective protection scheme for an {LVDC} distribution network},
  author={Emhemed, Abdullah AS and Fong, Kenny and Fletcher, Steven and Burt, Graeme M},
  journal={IEEE Transactions on Power Delivery},
  volume={32},
  number={3},
  pages={1432--1440},
  year={2016},
  publisher={IEEE}
}

@article{salehi2019poverty,
  title={A poverty severity index-based protection strategy for ring-bus low-voltage {DC} microgrids},
  author={Salehi, Mehdi and Taher, Seyed Abbas and Sadeghkhani, Iman and Shahidehpour, Mohammad},
  journal={IEEE Transactions on Smart Grid},
  volume={10},
  number={6},
  pages={6860--6869},
  year={2019},
  publisher={IEEE}
}

@book{luenberger1997optimization,
  title={Optimization by vector space methods},
  author={Luenberger, David G},
  year={1997},
  publisher={John Wiley \& Sons}
}

@inproceedings{lofberg2004yalmip,
  title={{YALMIP}: A toolbox for modeling and optimization in {MATLAB}},
  author={Lofberg, Johan},
  booktitle={2004 IEEE International Conference on Robotics and Automation},
  pages={284--289},
  year={2004},
  organization={IEEE}
}

@article{sadabadi2023resilient,
  title={A resilient-by-design distributed control framework for cyber-physical {DC} microgrids},
  author={Sadabadi, Mahdieh S},
  journal={IEEE Transactions on Control Systems Technology},
  year={2024},
  publisher={IEEE}
}

@article{kwasinski2010dynamic,
  title={Dynamic behavior and stabilization of {DC} microgrids with instantaneous constant-power loads},
  author={Kwasinski, Alexis and Onwuchekwa, Chimaobi N},
  journal={IEEE Transactions on Power Electronics},
  volume={26},
  number={3},
  pages={822--834},
  year={2010},
  publisher={IEEE}
}

@article{mohamad2019investigation,
  title={Investigation and assessment of stabilization solutions for {DC} microgrid with dynamic loads},
  author={Mohamad, Ahmed MEI and Mohamed, Yasser Abdel-Rady I},
  journal={IEEE Transactions on Smart Grid},
  volume={10},
  number={5},
  pages={5735--5747},
  year={2019},
  publisher={IEEE}
}

@article{van2022multiple,
  title={Multiple faults estimation in dynamical systems: {T}ractable design and performance bounds},
  author={Van der Ploeg, Chris and Alirezaei, Mohsen and Van De Wouw, Nathan and Mohajerin Esfahani, Peyman},
  journal={IEEE Transactions on Automatic Control},
  volume={67},
  number={9},
  pages={4916--4923},
  year={2022},
  publisher={IEEE}
}

@article{dong2023multimode,
  title={Multimode diagnosis for switched affine systems with noisy measurement},
  author={Dong, Jingwei and Kolarijani, Arman Sharifi and Mohajerin Esfahani, Peyman},
  journal={Automatica},
  volume={151},
  pages={110898},
  year={2023},
  publisher={Elsevier}
}

@article{nahata2020passivity,
  title={A passivity-based approach to voltage stabilization in {DC} microgrids with {ZIP} loads},
  author={Nahata, Pulkit and Soloperto, Raffaele and Tucci, Michele and Martinelli, Andrea and Ferrari-Trecate, Giancarlo},
  journal={Automatica},
  volume={113},
  pages={108770},
  year={2020}}

@article{HZW17,
	Author = {L. Herrera and W. Zhang and J. Wang},
	Journal = {IEEE Transactions on Smart Grid},
	Title = {Stability Analysis and Controller Design of {DC} Microgrids With Constant Power Loads},
        Volume = {8},
	number={2},
	Pages = {881--888},	
	Year = {Mar. 2017}}
\end{document}